\documentclass[10pt]{article}
\usepackage{amsmath,amssymb,amsthm,amsfonts,amsbsy,bm,bbm,latexsym,color}
\usepackage[letterpaper,margin=1.2in]{geometry}
\usepackage[english]{babel}
\usepackage{hyperref}
\usepackage{mathrsfs}
\usepackage{mathabx}
\usepackage{mathtools}
\usepackage{yfonts}

\usepackage{stmaryrd}

\newtheorem{theorem}{Theorem}[section]
\newtheorem{lemma}[theorem]{Lemma}

\newtheorem{remark}[theorem]{Remark}

\newtheorem{corollary}[theorem]  {Corollary}
\newtheorem{proposition}[theorem]{Proposition}

\newcommand{\id}{\mathbbm{1}}

\newcommand{\norm}[1]{\left\| #1 \right\|}
\newcommand{\scp}[2]{\big\langle #1 , #2 \big\rangle}

\newcommand{\bra}[1]{\langle #1 |}
\newcommand{\ket}[1]{| #1 \rangle}

\newcommand{\abs}[1]{| #1 |}
\newcommand{\tr}{\textnormal{Tr} \,}
\newcommand{\e}{\varepsilon}

\usepackage[normalem]{ulem}


\usepackage[colorinlistoftodos]{todonotes}




\begin{document}

\title{Propagation of moments for large data and semiclassical limit to the relativistic Vlasov equation}

\author{
Nikolai Leopold\footnote{nikolai.leopold@unibas.ch} \ and Chiara Saffirio\footnote{chiara.saffirio@unibas.ch}
\\
University of Basel\footnote{Department of Mathematics and Computer Science, Spiegelgasse 1, 4051 Basel, Switzerland}
}

\maketitle

\begin{abstract}

We investigate the semiclassical limit from the semi-relativistic Hartree-Fock equation describing the time evolution of a system of fermions in the mean-field regime with a relativistic dispersion law and interacting through a singular potential of the form $K(x)=\gamma\frac{1}{|x|^a}$, $a \in \left( \max \left\{ \frac{d}{2} -2 , - 1 \right\},  d-2 \right]$, $d\in\{2,3\}$ and $\gamma\in\mathbb{R}$, with the convention $K(x)=\gamma\log(|x|)$ if $a=0$. For mixed states, we show convergence in Schatten norms with explicit rate towards the Weyl transform of a solution to the relativistic Vlasov equation with singular potentials, thus generalizing \cite{DRS2017} where the case of smooth potentials has been treated. Moreover, we provide new results on the well-posedness theory of the relativistic Vlasov equations with singular interactions.  
\end{abstract}

\noindent
\textbf{MSC class:} 35Q41, 82C10, 35Q55, 35Q83, 70H40, 81Q20

\frenchspacing


\section{Introduction}

We consider the time evolution of a large system of fermions with relativistic dispersion law and its mean-field approximation given by the {semi-}relativistic Hartree-Fock equation 
\begin{align}
\label{eq:Hartree-Fock equation relativistic}
i \varepsilon \partial_t \omega_{N,t} &= \left[ \sqrt{1 - \varepsilon^2 \Delta} + K * \rho_t -X_t, \omega_{N,t} \right]
\end{align}
where $\omega_{N,t}$ is a sequence of time-dependent self-adjoint operators acting on $L^2(\mathbb{R}^d)$ with $\tr  \omega_{N,t}= N$ and $0 \leq \omega_{N,t} \leq 1$. The semiclassical parameter $\varepsilon$ plays the role of the Planck constant $\hbar$ and depends on the number of particles $N$ as $\varepsilon = N^{- \frac{1}{d}}$, $\sqrt{1-\e^2\Delta}$ is the pseudodifferential operator defined by the multiplication by the symbol $\sqrt{1+\e^2|\xi|^2}$, $K:\mathbb{R}^d\to \mathbb{R}$ is the two-body interaction potential, $\rho_t(x) = N^{-1} \omega_{N,t}(x;x)$, where $\omega_{N,t}(\,\cdot\,;\,\cdot\,)$ denotes the kernel of the operator $\omega_{N,t}$, and $X_t$ is referred to as the exchange term, whose integral kernel is given by $X_t(x;y)=N^{-1}K(x-y)\,\omega_{N,t}(x;y)$. \\
Furthermore, we introduce the semi-relativistic Hartree equation
\begin{align}
\label{eq:Hartree equation relativistic}
i \varepsilon \partial_t \omega_{N,t} &= \left[ \sqrt{1 - \varepsilon^2 \Delta} + K * \rho_t, \omega_{N,t} \right]
\end{align}
obtained from equation~\eqref{eq:Hartree-Fock equation relativistic} by setting the exchange term $X_t$ to zero.

The aim of the paper is to present a rigorous analysis of the semiclassical limit $\e\to 0$ in the case of singular interactions, including the most physically interesting cases of the Coulomb and gravitational potentials in dimension two and three. More precisely,  let $d \in \{2, 3 \}$, $\gamma \in \mathbb{R}$, $a \in \left( - 1 ,  d-2 \right]$. The potential $K: \mathbb{R}^d \rightarrow \mathbb{R}^d$ is given by
\begin{align}
\label{eq:definition potential}
K(x) = \gamma
\begin{cases}
\ln \left( \abs{x} \right) \quad &\text{if} \, a =0
\\
\frac{1}{\abs{x}^a}  &\text{else}. 
\end{cases}
\end{align}

In this setting we prove (see Theorem~\ref{theorem:derivation relativistic Vlasov from Hartree}) that as $\e\to 0$ the dynamics described by \eqref{eq:Hartree-Fock equation relativistic} is well approximated by the relativistic Vlasov equation, the following nonlinear transport equation for the probability density $f: \mathbb{R}_{+} \times \mathbb{R}^d \times \mathbb{R}^d \rightarrow \mathbb{R}$
\begin{align}
\label{eq:relativistic Vlasov}
\partial_t f + \frac{v}{\sqrt{1 + v^2}} \cdot \nabla_x f + E \cdot \nabla_{v} f = 0
\end{align}
where $E = - \nabla K * \rho_f$ and $\rho_f(t,x) = \int_{\mathbb{R}^d}  f(t,x, v)\,dv $. 
If {$\gamma = \pm1$} and $(d, a) = \left\{ (2,0) , (3,1) \right\}$ the function $K$ is (modulo a constant) the Green function for the Poisson equation in $\mathbb{R}^d$. In these cases \eqref{eq:relativistic Vlasov} is called the relativistic Vlasov-Poisson system.
\smallskip

{\bf State of the art and strategy.}  The semiclassical limit from the semi-relativistic Hartree equation~\eqref{eq:Hartree equation relativistic} towards the relativistic Vlasov equation has been tackled in \cite{DRS2017} in the case of smooth interactions $K\in W^{2,\infty}(\mathbb{R}^3)$ and space dimension $d=3$ and in \cite{AMS2008} for $K(x)=\pm\frac{1}{|x|}$ and $d=3$. In \cite{DRS2017} the authors provide strong convergence with an explicit bound on the convergence rate. In \cite{AMS2008} weak convergence as $\e\to 0$ has been considered for Coulomb and gravitational interactions in the case of mixed states, but the methods of \cite{AMS2008} do not allow  for an explicit rate of convergence. The question of getting explicit control on the semiclassical approximation is not only of speculative nature, as real physical systems are made of a finite, although big, number of particles. It is therefore important for applications to specify how the semiclassical approximation depends on the number of particles. In this paper we generalise  \cite{AMS2008} in several directions: we show strong convergence in dimension $d=2,\,3$, exhibit an explicit rate, consider a larger class of interaction potentials and include the exchange term estimating it explicitly, thus proving the semiclassical limit from the semi-relativistic Hartree-Fock equation towards the relativistic Vlasov equation with singular interactions in two and three spatial dimensions in strong topology and with explicit rate.\\ We strongly rely on the method used in \cite[Theorem 2.1]{DRS2017}, generalising it to spatial dimension $d\in\{2,3\}$ and, most importantly, to singular interactions. The key idea is to adapt to the relativistic context estimates from \cite{LS2020}, where the semiclassical limit of the non-relativistic Hartree equation towards the non-relativistic Vlasov equation with singular potentials has been considered. The main difficulties to deal with come from the well-posedness theory and propagation of regularity for the relativistic Vlasov  equation, which is less understood than its non-relativistic counterparts. 
The semiclassical limit from the non-relativistic Hartree equation towards the Vlasov equation with regular and singular interaction potentials has indeed been extensively studied in \cite{LP1993,MM1993,GIMS1998,BGM2000,GMP2003,PP2009,APPP2011,FLP2012,AKN2013a,AKN2013b,BPSS2016,GMP2016,GP2017,GPP2018,GP2019,Saffirio2019,L2019,S2020,LS2020,L2021} and in \cite{NS1981,S1981,CLL2021}, where convergence was established directly starting from the many-body quantum evolution.

The well-posedness theory for the Vlasov equation is well understood even in the case of singular interactions and general initial data from the works of Lions and Perthame~\cite{LP1991} and Pfaffelmoser~\cite{P1992}. In the relativistic setting the case of smooth interactions is fully understood (see e.g.~\cite{DRS2017}), whereas for singular potentials, and in particular for Coulomb and gravitational interactions, the literature is somewhat limited. In the three dimensional attractive case ($\gamma=-1$, $a=1$ and $d=3$) global existence has been established in \cite{GS1985,GS2001,HR2007,KTZ2008,W2020} for radially symmetric initial data. We revisit these works and provide a well-posedness theory through propagation of velocity moments (global or local depending on the space dimension $d\in\{2,3\}$ and the parameter $a$) and propagation of regularity for general initial data in Theorem~\ref{theorem:velocity moments short time estimate} and Propositions~\ref{prop:regularity solution Vlasov} and~\ref{proposition: uniqueness Vlasov} . 

As for the {semi-}relativistic Hartree and Hartree-Fock equations, the case of pure states (namely density matrices which are projection operators) has been studied in \cite{FL2007,Le2007,HS, HLLS2010} and the case of mixed states has been outlined in \cite{AMS2008,HLLS2010}. In Proposition~\ref{lemma: Hartree equation global existence of solution} below we generalize these results to singular potentials of the form \eqref{eq:definition potential} in spatial dimension $d=3$.
 \smallskip

{\bf Notation and function spaces.} In order to establish a connection between solutions of the semi-relativistic Hartree-Fock equation, i.e. reduced densities, and solutions of the relativistic Vlasov equations, i.e. functions defined on phase-space, we define the Wigner transform of a one-particle reduced density matrix $\omega_N$  as
\begin{align}
W_N(x,v) = \left( \frac{\varepsilon}{2 \pi} \right)^d
\int_{\mathbb{R}^d}  \omega_{N} \left( x + \frac{\varepsilon y}{2} ; x - \frac{\varepsilon y}{2} \right) e^{- i v \cdot y}\,dy.
\end{align}
Its inverse is referred to as the Weyl quantization and given by
\begin{align}
\omega_N(x;y) &= \varepsilon^{-d} \int_{\mathbb{R}^d} W_N \left( \frac{x+y}{2} , v \right) e^{i v \cdot \frac{x-y}{\varepsilon}}\,dv .
\end{align}
For $\sigma \in \mathbb{N}_0$, $k \geq 0$ and $1 \leq p \leq \infty$ let $W_{k}^{\sigma,p} \left( \mathbb{R}^{d} \right)$ be the Sobolev space equipped with the norm 
\begin{align}
\label{eq:definition LP-spaces}
\norm{f}_{W_{k}^{\sigma, p} \left( \mathbb{R}^{d} \right)} 
&=
\begin{cases}
\left( \sum_{\abs{\alpha} \leq \sigma} \norm{\left< \cdot \right>^{k} D^{\alpha} f}_{L^p\left( \mathbb{R}^{d} \right)}^p \right)^{\frac{1}{p}}
\quad 1 \leq &p < \infty ,
\\
\max_{\abs{\alpha} \leq \sigma}  \norm{\left< \cdot \right>^{k} D^{\alpha} f}_{L^{\infty}\left( \mathbb{R}^{d} \right)}
\quad &p = \infty,
\end{cases}
\end{align}
where $\left< x \right>^2 = 1 + \abs{x}^2$.
In the cases $\sigma=0$ or $p=2$ we use the shorthand notation
$L_k^{p} \left( \mathbb{R}^{d} \right) \coloneqq W_{k}^{(0,p)} \left( \mathbb{R}^{d} \right)$
and  $H_k^{\sigma} \left( \mathbb{R}^{d} \right) \coloneqq W_{k}^{\sigma, 2} \left( \mathbb{R}^{d} \right)$. Vectors in $\mathbb{R}^{2d}$ are written as
 $z = (x,v)$ so that $\left< z \right>^2 = 1 + \abs{x}^2 + \abs{v}^2$ and $D_z^{\alpha} = \left( \partial/ \partial x_1 \right)^{\beta_1} \left( \partial/ \partial v_1 \right)^{\gamma_1} \cdots \left( \partial/ \partial x_d \right)^{\beta_d} \left( \partial/ \partial v_d \right)^{\gamma_d}$ with
$\beta = (\beta_i)_{i \in \llbracket 1, d \rrbracket} \in \mathbb{N}_0^d$ and $\gamma = (\gamma_i)_{i \in \llbracket 1, d \rrbracket} \in \mathbb{N}_0^d$ 
such that $\abs{\alpha} = \sum_{i=1}^d \beta_i + \gamma_i$. 
For two Banach spaces $A$ and $B$ we denote by $A \cap B$ the Banach space of vectors $f \in A \cap B$ with norm $\norm{f}_{A \cap B} = \norm{f}_A + \norm{f}_B$. We use $C$ to denote a generic positive constant that might depend on the dimension $d$, the strength of the singularity $a$ and the parameters $k$ and $\sigma$ appearing in \eqref{eq:definition LP-spaces}.

Let $\textfrak{S}^{\infty} \left( L^2(\mathbb{R}^d) \right)$ be the set of all bounded operators on $L^2 (\mathbb{R}^d)$ and $\textfrak{S}^1 \left( L^2(\mathbb{R}^d) \right)$ be the set of trace class operators on $L^2(\mathbb{R}^d)$. More generally, for $p\in[1,\infty)$, we denote by $\textfrak{S}^p(\mathbb{R}^d)$ the $p$-Schatten space equipped with the norm $\norm{A}_{\textfrak{S}^p}=(\tr \abs{A}^p)^\frac{1}{p}$, where $A$ is an operator, $A^*$ its adjoint and $\abs{A}=\sqrt{A^*A}$.

We define 
\begin{align}
\textfrak{S}^{1,\frac{1}{2}} \left( L^2(\mathbb{R}^d) \right) = \left\{
\omega: \omega \in \textfrak{S}^{\infty} \left( L^2(\mathbb{R}^d) \right), \; \omega^* = \omega \; \text{and} \;
\left( 1 - \Delta \right)^{1/4} \omega \left( 1 - \Delta \right)^{1/4} \in \textfrak{S}^1 \left( L^2(\mathbb{R}^d) \right)
\right\}
\end{align}
with norm
\begin{align}
\norm{\omega}_{\textfrak{S}^{1,\frac{1}{2}}}
&= 
\norm{ \left( 1 - \Delta \right)^{1/4} \omega  \left( 1 - \Delta \right)^{1/4}}_{\textfrak{S}^1} .
\end{align}
The positive cone of this space is defined as
\begin{align}
\textfrak{S}_{+}^{1,\frac{1}{2}} \left( L^2(\mathbb{R}^d) \right) = \left\{
\omega: \omega \in \textfrak{S}^{\infty} \left( L^2(\mathbb{R}^d) \right) 
\, , \omega \geq 0 \, ,
\; \text{and} \;
\left( 1 - \Delta \right)^{1/4} \omega \left( 1 - \Delta \right)^{1/4} \in \textfrak{S}^1 \left( L^2(\mathbb{R}^d) \right)
\right\} .
\end{align}
\smallskip

{\bf Main results.}  
   Our main result concerns the semiclassical limit from the {semi-}relativistic Hartree equation~\eqref{eq:Hartree equation relativistic} towards the relativistic Vlasov equation~\eqref{eq:relativistic Vlasov} with potentials of the form given in \eqref{eq:definition potential}.

\begin{theorem}[Semiclassical limit towards the relativistic Vlasov equation]
\label{theorem:derivation relativistic Vlasov from Hartree}
Let $d \in \{2 , 3 \}$, $a \in \left( \max \left\{\frac{d}{2} - 2, -1 \right\} , d- 2 \right]$, $\gamma\in\mathbb{R}$  and $K$ be defined as in \eqref{eq:definition potential}. Let  $\omega_N$ be a sequence of reduced density matrices on $L^2(\mathbb{R}^d)$, $\tr \omega_N= N$, $0 \leq \omega_N \leq 1$. 
Let $\omega_{N,t}$ be the unique solution of \eqref{eq:Hartree equation relativistic} with initial condition $\omega_{N,0} = \omega_N$. 
In addition, assume that
 $x \, \omega_{N} \, x \in   \textfrak{S}^1 \left( L^2(\mathbb{R}^d) \right)$ if $a \in \left( \max \left\{\frac{d}{2} - 2, -1 \right\} , 0 \right]$ and
 $\omega_{N} \in  \textfrak{S}_{+}^{1,\frac{1}{2}}\left( L^2(\mathbb{R}^d) \right) $ if $a \in (0,1]$.
Moreover, let $\widetilde{W}_{N,t}$ be the unique solution of \eqref{eq:relativistic Vlasov} with initial datum $\widetilde{W}_{N,0} = \widetilde{W}_N \geq 0$ verifying
\begin{align}
\label{eq: derivation relativistic Vlasov from Hartree condition Vlasov 1}
\widetilde{W}_{N,t} &\in L_{\rm loc}^{\infty} \left( \mathbb{R}_{+} , W^{3, \infty} (\mathbb{R}^{2d}) \cap H_{\sigma}^{\sigma + 1} (\mathbb{R}^{2d}) \right)
\\
\label{eq: derivation relativistic Vlasov from Hartree condition Vlasov 2}
\widetilde{\rho} &\in L_{\rm loc}^{\infty} \left( \mathbb{R}_{+} , L^1(\mathbb{R}^d) \cap H^{\nu} (\mathbb{R}^d) \right) ,
\end{align}
where $\nu = 4 + a  - d $ and  $\sigma= 4 + n$ with $n \in 2 \mathbb{N}$  such that $n \geq \frac{d (a+1)}{d - (a+1)}$.
For the Weyl quantization $\widetilde{\omega}_{N,t}$ of $\widetilde{W}_{N,t}$ it holds
\begin{align}
\label{eq:derivation relativistic Vlasov from Hartree}
\norm{\omega_{N,t} - \widetilde{\omega}_{N,t}}_{\textfrak{S}^1}
&\leq \left[ \norm{\omega_{N} - \widetilde{\omega}_{N}}_{\textfrak{S}^1}   + N \varepsilon  \right]
\left[ 
 C(t) + \int_0^t ds \, C(s) \lambda(s) e^{\int_s^t \lambda(\tau)\,d \tau}
\right] .
\end{align}
Here,
\begin{align}\label{eq:constant1}
\lambda(t) = C \abs{\gamma} \norm{\nabla_v \widetilde{W}_{N,s}}_{W^{2, \infty}(\mathbb{R}^{2d}) \cap H_{\sigma}^{\sigma}(\mathbb{R}^{2d})}
\end{align}
and
\begin{align}\label{eq:constant2}
C(t) &= 1 +  C \int_0^t \left( 1 + \abs{\gamma} \right) \Big( 1 
+ \norm{\widetilde{\rho}(s)}_{L^1(\mathbb{R}^d) \cap H^{\nu}(\mathbb{R}^d)} \Big) \norm{\widetilde{W}_{N,s}}_{H_{\sigma}^{\sigma} (\mathbb{R}^{2d})}\,ds,
\end{align}
where $C$ is a numerical constant depending only on the dimension $d$ and the parameters $n$, $a$. 
\end{theorem}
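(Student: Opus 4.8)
The plan is to establish a Grönwall inequality for $u(t) \coloneqq \norm{\omega_{N,t}-\widetilde{\omega}_{N,t}}_{\textfrak{S}^1}$ by comparing the equations satisfied by the two operators, in the spirit of \cite[Theorem~2.1]{DRS2017} but with the singular-potential estimates of \cite{LS2020} transplanted to the relativistic kinetic operator. First I would show, by Weyl--Moyal pseudodifferential calculus, that with $\widetilde{\rho}_t = \int_{\RRR^d}\widetilde{W}_{N,t}(\,\cdot\,,v)\,dv$ (so that $\widetilde{\omega}_{N,t}(x;x) = N\,\widetilde{\rho}_t(x)$, since $\e^{-d}=N$) the Weyl quantization of a solution of \eqref{eq:relativistic Vlasov} satisfies
\begin{equation*}
i\e\,\partial_t\widetilde{\omega}_{N,t} = \left[\sqrt{1-\e^2\Delta}+K*\widetilde{\rho}_t ,\, \widetilde{\omega}_{N,t}\right] + \mathcal{A}_t ,
\end{equation*}
with an explicit error $\mathcal{A}_t$ having two sources: the mismatch between $\tfrac{1}{i\e}[\sqrt{1-\e^2\Delta},\widetilde{\omega}_{N,t}]$ and the Weyl quantization of $\tfrac{v}{\sqrt{1+v^2}}\cdot\nabla_x\widetilde{W}_{N,t}$ --- which, since the relativistic dispersion relation is not polynomial, leaves a genuine Moyal remainder of size $\e^2$ controlled by several $v$-derivatives of $\widetilde{W}_{N,t}$ (the technical heart of the relativistic estimates of \cite{DRS2017}, here to be run in $d\in\{2,3\}$) --- and the mismatch between $\tfrac{1}{i\e}[K*\widetilde{\rho}_t,\widetilde{\omega}_{N,t}]$ and the Weyl quantization of $-\nabla(K*\widetilde{\rho}_t)\cdot\nabla_v\widetilde{W}_{N,t}$, of size $\e^2$ and controlled by $\norm{\nabla^2(K*\widetilde{\rho}_t)}_{L^\infty}$ and $\norm{\widetilde{W}_{N,t}}_{W^{3,\infty}}$. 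For $K$ as in \eqref{eq:definition potential}, splitting the Fourier integral at low and high frequencies gives $\norm{\nabla^2(K*\widetilde{\rho}_t)}_{L^\infty}\le C\abs{\gamma}\,\norm{\widetilde{\rho}_t}_{L^1\cap H^\nu}$ precisely when $\nu=4+a-d$ and $a>\tfrac{d}{2}-2$ (low-frequency integrability), which pins down the admissible range of $a$ and the value of $\nu$ in \eqref{eq: derivation relativistic Vlasov from Hartree condition Vlasov 2}; combined with a Kato--Seiler--Simon-type bound $\norm{\mathrm{Op}_\e(g)}_{\textfrak{S}^1}\le C N\norm{g}_{H_\sigma^\sigma}$ on $\RRR^{2d}$ --- whose validity with the relevant weights forces $\sigma$ of the form $4+n$ --- this yields $\norm{\mathcal{A}_t}_{\textfrak{S}^1}\le C N\e^2(1+\abs{\gamma})(1+\norm{\widetilde{\rho}_t}_{L^1\cap H^\nu})\norm{\widetilde{W}_{N,t}}_{H_\sigma^\sigma}$, hence $\tfrac{1}{\e}\int_0^t\norm{\mathcal{A}_s}_{\textfrak{S}^1}\,ds\le N\e\,(C(t)-1)$ with $C(t)$ as in \eqref{eq:constant2}.

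Next I would apply Duhamel's formula with the unitary two-parameter group $\mathcal{U}(t,s)$ generated by $\sqrt{1-\e^2\Delta}+K*\rho_t$ (well defined by the well-posedness theory for \eqref{eq:Hartree equation relativistic}, cf.\ Proposition~\ref{lemma: Hartree equation global existence of solution}). Subtracting the two equations and writing $[K*\rho_t,\omega_{N,t}]-[K*\widetilde{\rho}_t,\widetilde{\omega}_{N,t}] = [K*\rho_t,\omega_{N,t}-\widetilde{\omega}_{N,t}] + [K*(\rho_t-\widetilde{\rho}_t),\widetilde{\omega}_{N,t}]$, conjugating by $\mathcal{U}$ and using the unitary invariance of $\norm{\cdot}_{\textfrak{S}^1}$ to discard the free term and the first interaction term, I arrive at
\begin{equation*}
u(t)\le u(0) + \frac{1}{\e}\int_0^t\norm{\left[K*(\rho_s-\widetilde{\rho}_s),\widetilde{\omega}_{N,s}\right]}_{\textfrak{S}^1}\,ds + \frac{1}{\e}\int_0^t\norm{\mathcal{A}_s}_{\textfrak{S}^1}\,ds .
\end{equation*}

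The core of the argument is then to bound $\norm{[K*(\rho_s-\widetilde{\rho}_s),\widetilde{\omega}_{N,s}]}_{\textfrak{S}^1}$ by $\e\,\lambda(s)\,u(s)$ with $\lambda$ as in \eqref{eq:constant1}. Using $\rho_s-\widetilde{\rho}_s = N^{-1}(\omega_{N,s}-\widetilde{\omega}_{N,s})(x;x)$ one expands $K$ in Fourier modes, $[K*(\rho_s-\widetilde{\rho}_s),\widetilde{\omega}_{N,s}] = \int\widehat{K}(\xi)\,\widehat{(\rho_s-\widetilde{\rho}_s)}(\xi)\,[e^{i\xi\cdot x},\widetilde{\omega}_{N,s}]\,d\xi$, and exploits the \emph{uniform} bound $\abs{\widehat{(\rho_s-\widetilde{\rho}_s)}(\xi)} = N^{-1}\abs{\tr(e^{-i\xi\cdot x}(\omega_{N,s}-\widetilde{\omega}_{N,s}))}\le N^{-1}u(s)$ together with the semiclassical commutator estimate $\norm{[e^{i\xi\cdot x},\widetilde{\omega}_{N,s}]}_{\textfrak{S}^1}\le C\e\abs{\xi}\,N\,\norm{\nabla_v\widetilde{W}_{N,s}}_{W^{2,\infty}\cap H_\sigma^\sigma}$, which follows from $e^{i\xi\cdot x}\mathrm{Op}_\e(W)e^{-i\xi\cdot x} = \mathrm{Op}_\e(W(\cdot,v-\e\xi))$ and the Kato--Seiler--Simon bound. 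Splitting $K=K_{\mathrm{reg}}+K_{\mathrm{sing}}$ into a smooth long-range part and a singular short-range part, the $K_{\mathrm{reg}}$-contribution is estimated directly in this way since $\int_{\RRR^d}\abs{\xi}\,\bigl|\widehat{K_{\mathrm{reg}}}(\xi)\bigr|\,d\xi<\infty$ (here $a>-1$ is used), producing exactly $\e\,\lambda(s)\,u(s)$. For $K_{\mathrm{sing}}$ the weight $\e\abs{\xi}$ is not integrable against $\widehat{K_{\mathrm{sing}}}$; instead one writes the commutator kernel as $\bigl((K_{\mathrm{sing}}*(\rho_s-\widetilde{\rho}_s))(x)-(K_{\mathrm{sing}}*(\rho_s-\widetilde{\rho}_s))(y)\bigr)\,\widetilde{\omega}_{N,s}(x;y)$, extracts one factor of $\e$ by integrating by parts in the oscillatory variable of the Weyl kernel of $\widetilde{\omega}_{N,s}$, and estimates the resulting trace-class norm by a Schatten--H\"older inequality, pairing $\nabla K_{\mathrm{sing}}$ against the kinetic weight $(1-\Delta)^{1/4}$ on $\widetilde{\omega}_{N,s}$ and against polynomial weights in $x$. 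Closing this requires a priori control of the Hartree density propagated by \eqref{eq:Hartree equation relativistic}: boundedness of $\norm{\omega_{N,s}}_{\textfrak{S}^{1,\frac{1}{2}}}$, hence --- via a relativistic Lieb--Thirring inequality --- of a Lebesgue norm of $\rho_s$, when $a\in(0,1]$ (which is why $\omega_N\in\textfrak{S}_{+}^{1,\frac{1}{2}}$ is assumed there), and boundedness of $\tr(x\,\omega_{N,s}\,x)$, hence of a weighted $L^1$-norm of $\rho_s$, when $a\le 0$ (which is why $x\,\omega_N\,x\in\textfrak{S}^1$ is assumed there); both rely on conservation of the Hartree energy and, crucially, on the bound $\norm{\tfrac{\e D}{\sqrt{1-\e^2\Delta}}}_{\mathrm{op}}\le 1$ for the relativistic velocity, so that spatial moments grow at most polynomially in time. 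The admissibility of the Lebesgue and Schatten exponents entering this estimate is what forces the threshold $a>\tfrac{d}{2}-2$ and the condition $n\ge\frac{d(a+1)}{d-(a+1)}$. Altogether, $\tfrac{1}{\e}\int_0^t\norm{[K*(\rho_s-\widetilde{\rho}_s),\widetilde{\omega}_{N,s}]}_{\textfrak{S}^1}\,ds\le\int_0^t\lambda(s)\,u(s)\,ds$.

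Combining the three displays and using $u(0)=\norm{\omega_N-\widetilde{\omega}_N}_{\textfrak{S}^1}$ and $C(t)\ge 1=C(0)$ gives $u(t)\le(\norm{\omega_N-\widetilde{\omega}_N}_{\textfrak{S}^1}+N\e)\,C(t)+\int_0^t\lambda(s)\,u(s)\,ds$, and Grönwall's inequality in integral form then yields
\begin{equation*}
u(t)\le\left(\norm{\omega_N-\widetilde{\omega}_N}_{\textfrak{S}^1}+N\e\right)\left[C(t)+\int_0^t C(s)\,\lambda(s)\,e^{\int_s^t\lambda(\tau)\,d\tau}\,ds\right] ,
\end{equation*}
which is \eqref{eq:derivation relativistic Vlasov from Hartree}. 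I expect the estimate of the singular commutator in the third step to be the main obstacle: in contrast with the smooth case of \cite{DRS2017}, the potential cannot be absorbed into a single Fourier multiplier with integrable $\abs{\xi}\widehat{K}(\xi)$, and its short-range part must be traded against the kinetic- and spatial-moment a priori bounds for the Hartree density; propagating those bounds in the relativistic setting --- itself resting on the boundedness of the relativistic velocity and on the existence theory of Proposition~\ref{lemma: Hartree equation global existence of solution} --- is the delicate auxiliary input and is precisely what dictates the structural hypotheses on $\omega_N$ and the regularity demands \eqref{eq: derivation relativistic Vlasov from Hartree condition Vlasov 1}--\eqref{eq: derivation relativistic Vlasov from Hartree condition Vlasov 2} on the limiting Vlasov solution.
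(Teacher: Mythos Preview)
Your overall architecture---Duhamel expansion, three error terms (kinetic remainder, potential Taylor remainder $C_{N,t}$, and the density-difference commutator), then Gr\"onwall---matches the paper's proof, and your treatment of the kinetic remainder and of $C_{N,t}$ is essentially the paper's (the paper cites \cite[Section~3]{DRS2017} for the former and \cite[Proposition~4.4]{LS2020} for the latter).

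There is, however, a genuine gap in your handling of the key term $\norm{[K*(\rho_s-\widetilde{\rho}_s),\widetilde{\omega}_{N,s}]}_{\textfrak{S}^1}$. The paper does \emph{not} split $K$ into a regular and a singular part, nor does it use any a priori bound on the Hartree density $\rho_s$ at this stage. Instead it first factors out the $L^1$-difference via
\[
\norm{[K*(\rho_s-\widetilde{\rho}_s),\widetilde{\omega}_{N,s}]}_{\textfrak{S}^1}
\le \norm{\rho_s-\widetilde{\rho}_s}_{L^1}\,\sup_{z\in\RRR^d}\norm{[K(\cdot-z),\widetilde{\omega}_{N,s}]}_{\textfrak{S}^1}
\le N^{-1}u(s)\,\sup_{z}\norm{[K(\cdot-z),\widetilde{\omega}_{N,s}]}_{\textfrak{S}^1},
\]
and then bounds the $\sup_z$ commutator by two lemmas imported from \cite{LS2020}: Theorem~4 there (Lemma~3.1 in the paper), which controls $\sup_z\norm{[K(\cdot-z),\omega]}_{\textfrak{S}^1}$ by Lebesgue norms of $\mathrm{diag}(|[x,\omega]|)$ with exponents around $b'=\tfrac{d}{d-(a+1)}$, and Proposition~3.1 there (Lemma~3.2 in the paper), which gives $\norm{\mathrm{diag}(|[x,\widetilde{\omega}_{N,s}]|)}_{L^p}\le C\e N\norm{\nabla_v\widetilde{W}_{N,s}}_{W^{2,\infty}\cap H_\sigma^\sigma}$ for $p\in[1,1+n/d]$. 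The condition $n\ge\tfrac{d(a+1)}{d-(a+1)}$ is exactly what makes the required Lebesgue exponents admissible. This chain uses only the \emph{Vlasov-side} regularity \eqref{eq: derivation relativistic Vlasov from Hartree condition Vlasov 1}; no property of $\omega_{N,s}$ beyond $u(s)$ enters. Your alternative route for $K_{\mathrm{sing}}$---``extracting one factor of $\e$'' and then ``pairing $\nabla K_{\mathrm{sing}}$ against $(1-\Delta)^{1/4}$'' together with Hartree a priori bounds---is not a clearly workable scheme: once $\rho_s-\widetilde{\rho}_s$ is trapped inside a convolution with a non-integrable kernel, you cannot close the Gr\"onwall loop with $L^1$ control alone, and invoking separate $L^p$ bounds on $\rho_s$ would produce a term that is \emph{not} proportional to $u(s)$.

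Relatedly, you misattribute the role of the structural hypotheses on $\omega_N$. The assumptions $\omega_N\in\textfrak{S}_+^{1,\frac{1}{2}}$ (for $a\in(0,1]$) and $x\,\omega_N\,x\in\textfrak{S}^1$ (for $a\le 0$) are \emph{not} used in the commutator estimate; they enter only in Appendix~B, where they justify the Duhamel expansion itself by guaranteeing the existence of the two-parameter group generated by $\sqrt{1-\e^2\Delta}+K*\rho_t$ (via $\norm{K*\rho_t}_{L^\infty}<\infty$ in the first case, and via propagation of $\tr(x^2\omega_{N,t})$ and a cutoff argument $K_\Lambda\to K$ in the second).
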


\noindent Some remarks are in order:

\begin{remark}
In $d=3$, if $\gamma < 0$ and $a = 1$ we assume in addition that the condition 
\eqref{eq: Hartree global well-posedness condition for attractive Coulomb} stated in Proposition~\ref{lemma: Hartree equation global existence of solution} for the well-posedness of the semi-relativistic Hartree-Fock equation holds true.
\end{remark}

\begin{remark}
Theorem~\ref{theorem:derivation relativistic Vlasov from Hartree} generalizes \cite[Theorem 2.1]{DRS2017} which considered the case $d = 3$ and regular interaction potentials $K \in W^{2, \infty} (\mathbb{R}^3)$. Notice that we obtain for singular interactions the same expected optimal rate of convergence as in the case of smooth potentials.
\end{remark}

\begin{remark}
Observe that thanks to \cite[Section 5]{LS2020}, we can include the exchange term to exhibit an explicit rate of convergence from the semi-relativistic Hartree-Fock equation~\eqref{eq:Hartree-Fock equation relativistic} towards the relativistic Vlasov equation as $\e\to 0$. More precisely we can bound the exchange term as in \cite[Proposition~5.2]{LS2020} (see estimate (42a)) and conclude that the rate of convergence towards the relativistic Vlasov equation does not change whether we consider the semi-relativistic Hartree or the Hartree-Fock equation as starting point.
\end{remark}

\begin{remark}
The dependence of $\lambda(t)$ and $C(t)$ on high regularity norms of the solution to the Vlasov equation restricts our analysis to the case of mixed states, namely one-particle density matrices of the form 
\begin{equation*}
\omega_{N,t}:=\sum_{j\in\mathbb{N}}\lambda_j|\psi_j(t)\rangle\langle\psi_j(t)|
\end{equation*}
where $\{\psi(t)\}_{j\in\mathbb{N}}$ is an orthonormal system of $L^2(\mathbb{R}^d)$, $\lambda_j\in \ell^1$ and $\lambda_j\geq 0$.
\end{remark}

\begin{remark}
Below it is shown (Theorem \ref{theorem:velocity moments short time estimate}, Corollary \ref{cor:estimates density} and Proposition \ref{prop:regularity solution Vlasov}) that \eqref{eq: derivation relativistic Vlasov from Hartree condition Vlasov 1}--\eqref{eq: derivation relativistic Vlasov from Hartree condition Vlasov 2} hold globally in time for $a \in (-1,1/2]$ and locally in time for $a \in (1/2, 1]$ if $W_N$ is regular enough. By means of \cite[Theorem 1.1.]{W2020}, Corollary \ref{cor:estimates density} and Proposition \ref{prop:regularity solution Vlasov} it is, moreover, possible to infer \eqref{eq: derivation relativistic Vlasov from Hartree condition Vlasov 1}--\eqref{eq: derivation relativistic Vlasov from Hartree condition Vlasov 2} globally in time for $a=1$ if $W_N$ is spherically symmetric and polynomially decaying as $(x,v) \rightarrow \infty$.
\end{remark}

Theorem~\ref{theorem:derivation relativistic Vlasov from Hartree} can be extended to general $p$-Schatten norm, with $p\in[1,\infty)$, in the same spirit of \cite[Theorem~1.2]{LS2020}. However, in the relativistic setting we have to deal with the kinetic term, that cannot be absorbed by unitary transformations (as it was the case in \cite[Theorem~1.2]{LS2020}), to obtain the following result.


\begin{proposition}\label{remark:p-schatten}
Under the same assumptions and notations of Theorem~\ref{theorem:derivation relativistic Vlasov from Hartree}, it holds
\begin{align*}
\norm{\omega_{N,t} - \widetilde{\omega}_{N,t}}_{\textfrak{S}^p}
&\leq \norm{\omega_{N}-\widetilde{\omega}_{N}}_{\textfrak{S}^p}+\left[ \norm{\omega_{N} - \widetilde{\omega}_{N}}_{\textfrak{S}^1}   + C(t)N \varepsilon  \right]
 e^{\int_s^t \lambda(\tau)\,d \tau}
\end{align*}
for $p\in \left[1, \min \left\{ d/(a+1) , 2 \right\} \right)$, and
\begin{align*}
\norm{\omega_{N,t} - \widetilde{\omega}_{N,t}}_{\textfrak{S}^p}
&\leq C(t)\left(\norm{\omega_{N}-\widetilde{\omega}_{N}}_{\textfrak{S}^q}^\frac{q}{p}+ \norm{\omega_{N} - \widetilde{\omega}_{N}}_{\textfrak{S}^1}^\frac{q}{p}   + N^\frac{1}{p}\e^\frac{q}{p}\right) 
 e^{\frac{q}{p}\int_s^t \lambda(\tau)\,d \tau}
\end{align*}
for $p\in[\min \left\{ d/(a+1) , 2 \right\} ,\infty)$ and $q\in[1,\min \left\{ d/(a+1) , 2 \right\})$.
\end{proposition}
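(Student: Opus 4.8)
The plan is to upgrade the trace-norm estimate of Theorem~\ref{theorem:derivation relativistic Vlasov from Hartree} to a general $\textfrak{S}^p$-bound by an interpolation argument, treating the kinetic part of the dynamics with care. Write $D_{N,t} = \omega_{N,t} - \widetilde{\omega}_{N,t}$, and let $U_t$ be the free relativistic propagator generated by $\e^{-1}\sqrt{1-\e^2\Delta}$, and $\mathcal{U}_t$ the one-parameter family of unitary conjugations generated by the full (time-dependent) Hartree Hamiltonian $h_t = \sqrt{1-\e^2\Delta} + K*\rho_t$. Since $\omega_{N,t} = \mathcal{U}_t \omega_N \mathcal{U}_t^*$ solves \eqref{eq:Hartree equation relativistic} exactly, the difference $D_{N,t}$ satisfies a Duhamel identity of the form $D_{N,t} = \mathcal{U}_t D_N \mathcal{U}_t^* + \int_0^t \mathcal{U}_{t,s}\,\mathcal{R}_s\,\mathcal{U}_{t,s}^*\,ds$, where $\mathcal{R}_s$ collects the commutator of the mean-field/force terms applied to $\widetilde{\omega}_{N,s}$ (the same remainder controlled in the proof of Theorem~\ref{theorem:derivation relativistic Vlasov from Hartree}), plus a term coming from the mismatch between transporting along the Vlasov flow and conjugating by $\mathcal{U}_{t,s}$. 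Because conjugation by unitaries is an isometry on every $\textfrak{S}^p$, one gets for each $p\in[1,\infty)$ the bound $\norm{D_{N,t}}_{\textfrak{S}^p} \le \norm{D_N}_{\textfrak{S}^p} + \int_0^t \norm{\mathcal{R}_s}_{\textfrak{S}^p}\,ds$, so the whole problem reduces to estimating $\norm{\mathcal{R}_s}_{\textfrak{S}^p}$.

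The key step is therefore to re-examine the $\textfrak{S}^1$ estimates on the remainder terms from the proof of Theorem~\ref{theorem:derivation relativistic Vlasov from Hartree} and extract the exponent for which they are naturally $\textfrak{S}^p$-bounded. The force-difference pieces are of the form $[K*(\rho_{N,s}-\widetilde{\rho}_s), \widetilde{\omega}_{N,s}]$ and the commutator of the approximate transport; the factor that limits the Schatten exponent is the singular kernel $\nabla K \sim |x|^{-(a+1)}$, which, multiplied against a sufficiently regular symbol, lies in $\textfrak{S}^p$ precisely for $p < d/(a+1)$ by the Kato–Seiler–Simon inequality $\norm{g(x) h(-i\e\nabla)}_{\textfrak{S}^p} \le C \e^{-d/p} \norm{g}_{L^p}\norm{h}_{L^p}$ (the same mechanism already used in \cite{LS2020}); the regularity assumptions \eqref{eq: derivation relativistic Vlasov from Hartree condition Vlasov 1}--\eqref{eq: derivation relativistic Vlasov from Hartree condition Vlasov 2} guarantee the required $L^p$ control of the relevant densities and their derivatives. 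For $p < \min\{d/(a+1),2\}$ one then closes a Grönwall inequality: the part of $\mathcal{R}_s$ proportional to $D_{N,s}$ itself contributes the Lipschitz constant $\lambda(s)$ (acting in $\textfrak{S}^p$ because the multiplier $\nabla_v\widetilde{W}_{N,s}$ is regular and bounded), while the genuinely $\textfrak{S}^1$ remainder of size $C(s) N\e$ is fed in through the already-proven trace-norm bound, yielding the stated estimate $\norm{D_{N,t}}_{\textfrak{S}^p} \le \norm{D_N}_{\textfrak{S}^p} + [\,\norm{D_N}_{\textfrak{S}^1} + C(t)N\e\,]\, e^{\int_s^t\lambda(\tau)d\tau}$.

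For $p \ge \min\{d/(a+1),2\}$ the above fails because the singular remainder is no longer in $\textfrak{S}^p$, so I would use complex/real interpolation instead. One interpolates the $\textfrak{S}^q$-bound just obtained (for some admissible $q < \min\{d/(a+1),2\}$) against the trivial uniform $\textfrak{S}^\infty$-bound $\norm{D_{N,t}}_{\textfrak{S}^\infty} \le \norm{\omega_{N,t}}_{\textfrak{S}^\infty} + \norm{\widetilde{\omega}_{N,t}}_{\textfrak{S}^\infty} \le C$. The Hölder-type interpolation inequality for Schatten norms gives $\norm{D_{N,t}}_{\textfrak{S}^p} \le \norm{D_{N,t}}_{\textfrak{S}^q}^{q/p}\,\norm{D_{N,t}}_{\textfrak{S}^\infty}^{1-q/p}$, and substituting the $\textfrak{S}^q$-estimate produces the claimed bound, with the exponent $q/p$ appearing on each term and in the exponential, and the losses $\norm{D_N}_{\textfrak{S}^q}^{q/p}$, $\norm{D_N}_{\textfrak{S}^1}^{q/p}$, $N^{1/p}\e^{q/p}$ arising from raising the sum of three quantities to the power $q/p$ (absorbing constants via $\binom{3}{\cdot}$-type inequalities) together with $N^{q/p}\le N$; this matches \cite[Theorem~1.2]{LS2020}.

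The main obstacle I anticipate is precisely the one flagged in the statement: unlike the non-relativistic case in \cite{LS2020}, the kinetic symbol $\sqrt{1+\e^2|\xi|^2}$ cannot be removed by a gauge/unitary conjugation that commutes with the multiplication operators entering the remainder, so one must ensure that all remainder terms $\mathcal{R}_s$ are estimated \emph{before} any rescaling, and that the KSS-type bounds are applied with the genuine $\e$-dependent quantization. Concretely, the delicate point is checking that the commutator structure producing $\mathcal{R}_s$ still factorizes into (singular density) $\times$ (smooth, $\e$-uniformly bounded multiplier/pseudodifferential operator) after accounting for the relativistic dispersion — in particular that $\nabla\sqrt{1-\e^2\Delta}$ and its resolvent behave well on the weighted Sobolev scales used for $\widetilde{W}_{N,t}$. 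Once that factorization is in place, the Schatten bookkeeping and the two Grönwall/interpolation steps are routine, following \cite{LS2020} and the proof of Theorem~\ref{theorem:derivation relativistic Vlasov from Hartree}.
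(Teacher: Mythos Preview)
Your overall architecture matches the paper's proof: a Duhamel expansion in $\textfrak{S}^p$, direct estimates on the remainder terms for $p$ below the threshold $\min\{d/(a+1),2\}$, and then H\"older interpolation against the trivial $\textfrak{S}^\infty$ bound for larger $p$. The force and $C_{N,s}$ terms are handled exactly as you indicate, via \cite[Proposition~4.3]{LS2020} (together with a Calder\'on--Vaillancourt type bound) and Lemma~\ref{lemma:derivation Vlasov from Hartree estimae for C-term}, which is morally the Kato--Seiler--Simon mechanism you cite; the $\textfrak{S}^1$ input from Theorem~\ref{theorem:derivation relativistic Vlasov from Hartree} is also used in the same way.

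The one place where your proposal diverges from the paper---and where it is vaguer than it should be---is the kinetic remainder $[\sqrt{1-\varepsilon^2\Delta},\widetilde{\omega}_{N,s}]-A_{N,s}$. You suggest attacking it by a ``(singular density)$\times$(smooth pseudodifferential)'' factorization and KSS, but this is not what is done and it is not clear it would close. The paper instead uses the \emph{already established} Hilbert--Schmidt bound \eqref{eq:derivation of Vlasov bound for the kinetic energy}, which gives $\norm{\cdot}_{\textfrak{S}^2}\le C N^{1/2}\varepsilon^2$, together with the derived trace bound $\norm{\cdot}_{\textfrak{S}^1}\le C N\varepsilon^2$ (obtained by H\"older against $\norm{(1-\varepsilon^2\Delta)^{-1}(1+x^2)^{-1}}_{\textfrak{S}^2}\le CN^{1/2}$), and then interpolates between these two to get $\norm{\cdot}_{\textfrak{S}^p}\le C N^{1/p}\varepsilon^2$ for $p\in[1,2]$. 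This is precisely the origin of the ``$\min$ with $2$'' in the threshold, which your write-up attributes solely to the singularity $d/(a+1)$; in fact the kinetic term is what caps the direct argument at $p=2$. Once you incorporate this interpolation step for the kinetic piece, the rest of your sketch is correct and coincides with the paper's proof.
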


In particular, in Proposition~\ref{remark:p-schatten} we obtain an explicit rate for $\omega_{N,t}-\widetilde{\omega}_{N,t}$ in Hilbert-Schmidt norm (i.e. $\textfrak{S}^2$), thus generalising \cite[Theorem~2.3]{DRS2017} to singular interactions of the form \eqref{eq:definition potential}.

In order to provide a rigorous approximation of the {semi-}relativistic Hartree equation in terms of the relativistic Vlasov equation, we need both equations~\eqref{eq:Hartree equation relativistic} and~\eqref{eq:relativistic Vlasov} to be well-posed and to satisfy some regularity properties. For these reasons our next results  deal with propagation of velocity moments and regularity for the relativistic Vlasov equation~\eqref{eq:relativistic Vlasov} (see Theorem~\ref{theorem:velocity moments short time estimate} and Proposition~\ref{prop:regularity solution Vlasov}), conditions for uniqueness for the Cauchy problem associated with \eqref{eq:relativistic Vlasov}  (see Proposition~\ref{proposition: uniqueness Vlasov}) and well-posedness of the semi-relativistic Hartree-Fock equation~\eqref{eq:Hartree-Fock equation relativistic} with singular interaction (see Proposition~\ref{lemma: Hartree equation global existence of solution}), thus providing conditions for \eqref{eq:constant1} and \eqref{eq:constant2} to be finite and characterising a class of states for which Theorem~\ref{theorem:derivation relativistic Vlasov from Hartree} holds true. The proofs are postponed to Section~\ref{section:proofs}.

\begin{theorem}[Velocity moments]
\label{theorem:velocity moments short time estimate}
Let $d \in \{ 2, 3 \}$, $a \in (-1, d-2]$, $\gamma\in\mathbb{R}$ and $K$ be defined as in \eqref{eq:definition potential}. Let $f_0 \geq 0$, $f_0 \in L^1 \cap L^{\infty} \left( \mathbb{R}^d \times \mathbb{R}^d \right)$.
\begin{enumerate}

\item Let $a \in (-1, 1/2]$ and assume that 
\begin{align}
\int_{\mathbb{R}^d \times \mathbb{R}^d}  \abs{v }^k f_0(x,v)\,dx \, dv   <  + \infty
\quad \text{for} \; k \in \mathbb{N}.
\end{align}
Then, for all $1 \leq  p < + \infty$ there exists a solution $f \in C \left( \mathbb{R}_{+} , L^p \left( \mathbb{R}^d \times \mathbb{R}^d \right) \right) \cap L^{\infty} \left( \mathbb{R}_{+} , L^{\infty} \left( \mathbb{R}^d \times \mathbb{R}^d \right) \right)$ of \eqref{eq:relativistic Vlasov} satisfying
\begin{align}
\label{eq:velocity moments short time estimate}
\sup_{t \in [0,T]} \int_{\mathbb{R}^d \times \mathbb{R}^d}  \abs{v }^k f(t,x,v)\,dx \, dv   < + \infty
\quad \text{for all } \;  T < + \infty .
\end{align}

\item Let $d=3$, $a \in (1/2, 1]$ and assume
\begin{align}
\int_{\mathbb{R}^3 \times \mathbb{R}^3} \abs{v }^k f_0(x,v)\,dx \, dv   <  + \infty
\quad \text{for} \; k \in \mathbb{N}  \; \text{such that} \; k \geq \frac{d a}{d - (a+1)}.
\end{align}
Then, for all $1 \leq  p < + \infty$ there exists  $T \in \mathbb{R}_{+}$ and a solution $f \in C \left([0,T] , L^p \left( \mathbb{R}^3 \times \mathbb{R}^3 \right) \right) \cap L^{\infty} \left( [0,T] , L^{\infty} \left( \mathbb{R}^3 \times \mathbb{R}^3 \right) \right)$ of \eqref{eq:relativistic Vlasov} satisfying 
\begin{align}
\label{eq:velocity moments short time estimate}
\sup_{t \in [0,T]} \int_{\mathbb{R}^3 \times \mathbb{R}^3}  \abs{v }^k f(t,x,v)\,dx \, dv   < + \infty .
\end{align}
\end{enumerate}
\end{theorem}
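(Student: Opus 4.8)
The plan is to prove Theorem~\ref{theorem:velocity moments short time estimate} by a classical Lions--Perthame-type iteration scheme adapted to the relativistic free flow. The relativistic characteristic velocity $v/\sqrt{1+v^2}$ is bounded by $1$, which is actually advantageous: the spatial characteristics have uniformly bounded speed, so dispersion estimates for the spatial density are easier to obtain than in the nonrelativistic case. First I would recall the conservation laws available along \eqref{eq:relativistic Vlasov}: since the equation is a transport equation with divergence-free vector field (the $x$-component depends only on $v$, the $v$-component $E=-\nabla K*\rho_f$ depends only on $x$), all $L^p$ norms $\norm{f(t)}_{L^p(\mathbb{R}^{2d})}$ are conserved, and so is the total mass $\norm{\rho_f(t)}_{L^1}$. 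This immediately gives the $L^p$ and $L^\infty$ bounds claimed in the conclusion; the real content is the propagation of the velocity moment \eqref{eq:velocity moments short time estimate}.

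The key steps, in order: (i) Set up the characteristic ODEs $\dot X = V/\sqrt{1+V^2}$, $\dot V = E(t,X)$ and express $f(t,x,v)=f_0(Z_{0,t}(x,v))$ where $Z_{0,t}$ is the backward flow. (ii) Derive the standard interpolation bound controlling $\norm{\rho_f(t)}_{L^{p_0}}$ for a suitable exponent $p_0$ in terms of $\norm{f(t)}_{L^\infty}$ and the $k$-th velocity moment $m_k(t):=\int |v|^k f(t)\,dx\,dv$: splitting the $v$-integral at radius $R$, one gets $\rho_f(t,x)\lesssim \norm{f(t)}_\infty R^d + R^{-k}\int|v|^k f(t,x,v)\,dv$, optimizing in $R$ yields $\norm{\rho_f(t)}_{L^{(d+k)/k}}\lesssim \norm{f(t)}_\infty^{k/(d+k)} m_k(t)^{d/(d+k)}$. (iii) Convert this into a bound on the field: writing $\nabla K*\rho_f$ with $\nabla K\sim |x|^{-(a+1)}$ and using the Hardy--Littlewood--Sobolev inequality together with the $L^1$ bound on $\rho_f$, obtain $\norm{E(t)}_{L^\infty}\lesssim \norm{\rho_f(t)}_{L^1}^\theta \norm{\rho_f(t)}_{L^{(d+k)/k}}^{1-\theta}$ for an appropriate split, which is finite and controlled precisely when $a+1<d$ and $k$ is large enough relative to $d$ and $a$ — this is where the threshold $k\geq da/(d-(a+1))$ in part (2) enters, and where the case $a\le 1/2$ (for which no threshold on $k$ is needed, only $k\in\mathbb{N}$) is distinguished. (iv) Differentiate the moment: $\frac{d}{dt}m_k(t)\le k\int |v|^{k-1}|E(t,x)|f(t,x,v)\,dx\,dv\le k\norm{E(t)}_\infty\, m_{k-1}(t)\le k\norm{E(t)}_\infty(1+m_k(t))$, closing a Grönwall-type inequality. (v) Combine (ii)--(iv): $\frac{d}{dt}m_k(t)\lesssim (1+m_k(t))^{1+\delta}$ for some $\delta\ge 0$ depending on $a,d,k$; when $\delta=0$ (the range $a\in(-1,1/2]$, all $k$) this gives a global-in-time bound, whereas when $\delta>0$ (the range $a\in(1/2,1]$, $d=3$) the differential inequality only yields a local-in-time bound on $[0,T]$, matching the two cases of the statement. (vi) For existence of the solution: run a standard approximation/mollification scheme (regularize $K$ and/or $f_0$), derive the above a priori bounds uniformly in the regularization parameter, and pass to the limit using the conserved $L^p$ bounds plus compactness (velocity averaging or an Arzelà--Ascoli argument on the characteristics), obtaining $f\in C(\mathbb{R}_+,L^p)\cap L^\infty(\mathbb{R}_+,L^\infty)$ with the moment bound.

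I expect the main obstacle to be step (iii): pinning down the precise exponent arithmetic that determines when $\norm{E(t)}_{L^\infty}$ is controlled by the conserved $L^1$-norm and the interpolated $L^{(d+k)/k}$-norm of $\rho_f$, and hence what threshold on $k$ is forced. One must split $\nabla K = \nabla K\,\mathbbm{1}_{|x|\le 1} + \nabla K\,\mathbbm{1}_{|x|>1}$; the far part is bounded (since $a+1>0$ means $\nabla K$ decays, or at worst is locally integrable at infinity in the borderline) and pairs with $\norm{\rho_f}_{L^1}$, while the near part $|x|^{-(a+1)}\mathbbm{1}_{|x|\le 1}$ lies in $L^r$ for $r<d/(a+1)$, so by Young's/Hölder's inequality it pairs with $\norm{\rho_f}_{L^{r'}}$ where $r'>d/(d-(a+1))$; requiring $(d+k)/k\ge r'$, i.e. $k\le \ldots$ — rather, requiring the available Lebesgue exponent $(d+k)/k$ of $\rho_f$ to be at least the dual exponent needed forces $k(a+1)/(d)\le$ something, unwinding to $k\ge da/(d-(a+1))$. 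Getting this bookkeeping exactly right, and checking it is compatible with the dimensional constraints $d\in\{2,3\}$, $a\in(-1,d-2]$, is the delicate part; everything else is a routine adaptation of the Lions--Perthame machinery to the relativistic flow, whose bounded transport speed only helps. A secondary technical point is justifying the moment differentiation and the limit passage rigorously on the regularized level, which is standard but must be stated.
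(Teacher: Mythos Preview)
Your strategy via the $L^\infty$ bound on $E$ has a genuine gap: it does not close to a global Gr\"onwall inequality in the range $a\in(-1,1/2]$, so part~(a) would not follow. Concretely, from $\|E(t)\|_{L^\infty}\lesssim \|\rho_f(t)\|_{L^1}+\|\rho_f(t)\|_{L^p}$ with $p>\tfrac{d}{d-(a+1)}$, together with $\|\rho_f(t)\|_{L^{(d+k)/d}}\lesssim m_k(t)^{d/(d+k)}$ (note the correct exponent is $(d+k)/d$, not $(d+k)/k$), your step~(iv) yields
\[
\frac{d}{dt}\,m_k(t)\;\lesssim\;\bigl(1+m_k(t)^{d/(d+k)}\bigr)\bigl(1+m_k(t)\bigr),
\]
which is \emph{always} superlinear in $m_k$ (the top exponent is $1+\tfrac{d}{d+k}>1$), hence gives only local-in-time control regardless of $a$. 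The claim that $\delta=0$ for $a\le 1/2$ is therefore unjustified. Relatedly, the threshold your $L^\infty$ argument actually produces is $k>\tfrac{d(a+1)}{d-(a+1)}$, strictly worse than the $k\ge\tfrac{da}{d-(a+1)}$ stated in part~(b).

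The paper avoids $\|E\|_{L^\infty}$ altogether. Instead it applies H\"older twice to $\tfrac{d}{dt}M_k\le k\int|E|\,|v|^{k-1}f$ to reach $\tfrac{d}{dt}M_k^{1/k}\le \|\rho_f\|_{L^p}^{1/k}\|E\|_{L^{kq}}$ with the conjugate pair $p=\tfrac{(k+1)d}{k[d-(a+1)]+d}$, $q=\tfrac{(k+1)d}{k(a+1)}$, and then uses the weak Young inequality $\|E\|_{L^{kq}}\lesssim\|\rho_f\|_{L^p}$. After interpolating the $(p-1)d$-th moment against $M_0$ and $M_k$ one obtains $\tfrac{d}{dt}M_k^{1/k}\lesssim M_k^{(a+1)/k}$, which is sublinear precisely when $a\le 0$, giving the global bound directly. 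For $0<a\le 1/2$ (only $d=3$ is relevant) the paper invokes an additional ingredient you do not mention: conservation of the energy $\mathcal E[f(t)]$ together with the domination of the potential energy by the kinetic one (Lemma~\ref{eq:lemma estimate kinetic versus total energy}) gives a uniform-in-time bound on $M_1(t)$; interpolating the intermediate moment between $M_1$ and $M_k$ then reduces the exponent to at most $1/k$, closing Gr\"onwall globally. Without this energy input the range $0<a\le 1/2$ cannot be reached.
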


\begin{corollary}
\label{cor:estimates density}
Let $f$ be a solution of \eqref{eq:relativistic Vlasov} satisfying the assumptions of Theorem \ref{theorem:velocity moments short time estimate}. In addition, let $f_0 \abs{v}^n \in L^{\infty} \left( \mathbb{R}^d \times \mathbb{R}^d \right)$ for $n > d$ and {let the parameter $k$ of Theorem \ref{theorem:velocity moments short time estimate} satisfy $k > \frac{d (a+1)}{d - (a+1)}$}. Then there exists a time $T \in \mathbb{R}_{+}$ such that 
\begin{align}
E &\in L^{\infty} \left( [0,T] , L^{\infty}(\mathbb{R}^d) \right) ,
\\
\rho_f &\in L^{\infty} \left( [0,T],   L^1 \cap L^{\infty} \left( \mathbb{R}^d \times \mathbb{R}^d \right) \right) .
\end{align}
\end{corollary}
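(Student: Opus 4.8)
The plan is to derive bounds on $\rho_f$ and $E=-\nabla K*\rho_f$ from the propagation of velocity moments established in Theorem~\ref{theorem:velocity moments short time estimate}, using the standard interpolation between the conservation of the $L^\infty$-norm along characteristics and the control of high moments. First I would recall that since \eqref{eq:relativistic Vlasov} is a transport equation with divergence-free (in $z=(x,v)$) vector field $\big(\tfrac{v}{\sqrt{1+v^2}}, E\big)$, the $L^1$ and $L^\infty$ norms of $f(t,\cdot)$ are conserved, so $\norm{f(t)}_{L^1}=\norm{f_0}_{L^1}$ and $\norm{f(t)}_{L^\infty}=\norm{f_0}_{L^\infty}$; in particular $\rho_f(t)\in L^1(\mathbb{R}^d)$ with conserved mass. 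For the $L^\infty$ bound on $\rho_f$, I would split the velocity integral at radius $R>0$:
\begin{align*}
\rho_f(t,x)=\int_{|v|\le R}f(t,x,v)\,dv+\int_{|v|>R}f(t,x,v)\,dv
\le C R^d \norm{f(t)}_{L^\infty}+R^{-n}\int_{|v|>R}|v|^n f(t,x,v)\,dv.
\end{align*}
Since $f_0|v|^n\in L^\infty$ with $n>d$ and this weighted $L^\infty$-bound is also propagated along the flow (the weight $|v|^n$ is controlled exactly because the velocity characteristics have bounded acceleration, $|\dot v|=|E|\le\norm{E(t)}_{L^\infty}$, once we know $E$ is bounded — this is where the argument becomes a continuity/bootstrap argument on a short time interval $[0,T]$), optimizing over $R$ yields $\rho_f(t)\in L^\infty$ uniformly on $[0,T]$.

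Next I would bound $E=-\nabla K*\rho_f$. Since $K$ is given by \eqref{eq:definition potential} with $a\in(-1,d-2]$, we have $|\nabla K(x)|\lesssim |x|^{-(a+1)}$ with $a+1\in(0,d-1]$, hence $\nabla K\in L^1_{\rm loc}+L^\infty$ (more precisely $\nabla K = g_1+g_\infty$ with $g_1\in L^{q}$ for $q<d/(a+1)$ supported near the origin and $g_\infty\in L^\infty$ away from it, using $a+1<d$). Then by Young's inequality, $\norm{E(t)}_{L^\infty}\le \norm{g_1}_{L^q}\norm{\rho_f(t)}_{L^{q'}}+\norm{g_\infty}_{L^\infty}\norm{\rho_f(t)}_{L^1}$, and $\norm{\rho_f(t)}_{L^{q'}}$ is controlled by interpolation between $\norm{\rho_f(t)}_{L^1}$ and $\norm{\rho_f(t)}_{L^\infty}$, both of which are finite on $[0,T]$ by the previous step. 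This gives $E\in L^\infty([0,T],L^\infty(\mathbb{R}^d))$.

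The main obstacle — and the reason the statement only claims a finite time $T$ — is that the two steps above are coupled: the propagation of the weighted $L^\infty$-bound $f(t)|v|^n\in L^\infty$ needed for the split requires an a priori bound on $\norm{E(s)}_{L^\infty}$ for $s\le t$, while the bound on $E$ requires the weighted estimate. I would close this loop by a standard bootstrap: on a short interval the moment bound \eqref{eq:velocity moments short time estimate} from Theorem~\ref{theorem:velocity moments short time estimate} (with $k>\frac{d(a+1)}{d-(a+1)}$, chosen precisely so that the interpolation exponent $q'$ below falls in the admissible range) provides, via the Lions–Perthame-type interpolation estimate
\begin{align*}
\norm{\rho_f(t)}_{L^{q'}}\le C\,\norm{f(t)}_{L^\infty}^{1-\theta}\Big(\int |v|^k f(t,x,v)\,dx\,dv\Big)^{\theta},
\end{align*}
a bound on $\norm{E(t)}_{L^\infty}$ directly in terms of propagated moments, bypassing the weighted $L^\infty$ estimate entirely; then Grönwall in the moment equation on $[0,T]$ closes the estimate. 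The condition $k>\frac{d(a+1)}{d-(a+1)}$ is exactly what makes $q'$ admissible in this interpolation, and $n>d$ is what makes the elementary split bound on $\rho_f$ work; the restriction to finite $T$ (rather than global) in the range $a\in(1/2,1]$, $d=3$ is inherited directly from part (b) of Theorem~\ref{theorem:velocity moments short time estimate}.
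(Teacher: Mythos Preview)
Your proposal is correct and ends up at essentially the same argument as the paper, though you reach it by a detour. The paper proceeds in the opposite order: it first bounds $\norm{E(t)}_{L^\infty}$ \emph{directly} from the velocity moment $M_k(t)$ via Lemma~\ref{lemma:auxiliary estimates} (inequality~\eqref{eq: auxiliary estimates electric field 1}) combined with the Lions--Perthame-type interpolation of Proposition~\ref{prop:estimate velocity moments}, using precisely that $k>\frac{d(a+1)}{d-(a+1)}$ places the required $L^p$ exponent in the admissible range; only then, with $E\in L^\infty([0,T],L^\infty)$ in hand, does it run the characteristics argument with the weight $|v|^n$ (citing \cite[Corollary~5.1]{L2019}) to obtain $\rho_f\in L^\infty$. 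Your ``bootstrap'' paragraph is exactly this first step, so the circularity you flag never actually arises --- you can simply drop the preliminary coupled argument and go straight to the moment-based bound on $E$, then to $\rho_f\in L^\infty$.
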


\begin{proposition}[Propagation of regularity]
\label{prop:regularity solution Vlasov}
Let $d \in \{ 2, 3 \}$, $a \in (-1, d-2]$, {$\gamma \in\mathbb{R}$}, $K$ be defined as in \eqref{eq:definition potential}. Let $(n, \sigma) \in \mathbb{N}^2$ be such that $n> 2 d$ and $f \geq 0$ be a solution of the relativistic Vlasov equation \eqref{eq:relativistic Vlasov} with initial data $f_0 \in W_n^{\sigma, \infty}(\mathbb{R}^{2d})$.
Moreover, assume that  $\rho_f \in L_{\rm loc}^{\infty} \left( \mathbb{R}_+ ,  L^1 \cap L^{\infty} \left( \mathbb{R}^d \times \mathbb{R}^d \right) \right)$.
Then the following regularity estimates hold
\begin{align}
\label{eq:regularity estimate 1}
f \in L_{\rm loc}^{\infty} \big( \mathbb{R}_+ , W_n^{\sigma, \infty}(\mathbb{R}^{2d}) \big) 
\\
\label{eq:regularity estimate 2}
\nabla^{\sigma} \rho_f \in L_{\rm loc}^{\infty} \big( \mathbb{R}_+ , L^{\infty}(\mathbb{R}^d) \big) .
\end{align}
If in addition $f_0 \in H_k^{\sigma}(\mathbb{R}^{2d})$ for some $k \in \mathbb{R}_+$, then
\begin{align}
\label{eq:regularity estimate 3}
f \in L_{\rm loc}^{\infty} \big( \mathbb{R}_+ , H_k^{\sigma} (\mathbb{R}^{2d}) \big) .
\end{align}
Note that \eqref{eq:regularity estimate 1}--\eqref{eq:regularity estimate 3} hold globally and not only locally if $\rho_f \in L^{\infty} \left( \mathbb{R}_{+},   L^1 \cap L^{\infty} \left( \mathbb{R}^d \times \mathbb{R}^d \right) \right)$.
\end{proposition}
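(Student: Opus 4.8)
\textbf{Proof plan for Proposition~\ref{prop:regularity solution Vlasov}.}

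The strategy is to propagate the weighted Sobolev norms along the characteristic flow of the relativistic Vlasov equation, using a Gronwall argument whose driving term is controlled by $\norm{E}_{L^\infty}$ and derivatives of $E$, which in turn are controlled by $\rho_f$ via the singular integral operator $\nabla K\ast(\cdot)$. First I would write the equation in characteristic form: let $(X(t;x,v),V(t;x,v))$ solve $\dot X = V/\sqrt{1+V^2}$, $\dot V = E(t,X)$, so that $f(t,z) = f_0(Z(0;t,z))$. Since $f_0 \in W_n^{\sigma,\infty}$, the key is to bound the derivatives of the flow map $D_z^\alpha Z(0;t,z)$ and the growth of the weight $\langle z\rangle^n$ along trajectories. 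The weight growth is mild because the relativistic velocity field $v/\sqrt{1+v^2}$ is bounded by $1$, so $\abs{X(t)} \le \abs{x} + t$, while $\abs{V(t)} \le \abs{v} + \int_0^t \norm{E(s)}_{L^\infty}\,ds$; hence $\langle Z(0;t,z)\rangle \le C(t)\langle z\rangle$ with $C(t)$ locally bounded once $\norm{E}_{L^\infty_{t,x}}$ is known, and the latter follows from $\rho_f\in L^\infty_{\rm loc}(L^1\cap L^\infty)$ by the standard potential estimate (Hölder on $\nabla K \ast \rho_f = \int \abs{x-y}^{-(a+1)}\rho_f(y)\,dy$ split into near and far regions, which is finite precisely because $a+1 < d$).

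Next I would set up the induction on the order of differentiation. Differentiating the characteristic ODEs $k$ times in $z$ produces a linear system for $D_z^\alpha(X,V)$, $\abs{\alpha}\le\sigma$, whose inhomogeneity involves $D_x^\beta E(t,X)$ for $\abs{\beta}\le\sigma$ composed with lower-order flow derivatives (Faà di Bruno). The crucial input is therefore a bound on $\norm{\nabla_x^\sigma E(t)}_{L^\infty} = \norm{\nabla_x^{\sigma+1} K \ast \rho_f(t)}_{L^\infty}$. For $\sigma\ge 1$ the kernel $\nabla^{\sigma+1}K$ is too singular to be integrated directly against $\rho_f\in L^\infty$; the standard device — and this is the step I expect to be the technical heart — is to move derivatives onto $\rho_f$ using $\nabla_x^{\sigma+1} K \ast \rho_f = \nabla_x K \ast \nabla_x^\sigma \rho_f$ up to boundary terms, and then to bound $\nabla^\sigma\rho_f = \nabla^\sigma\int f\,dv$ in $L^\infty$ by $\int \abs{D_z^\alpha f}\,dv$, which requires the velocity-weight $\langle v\rangle^n$ with $n>d$ (exactly the hypothesis $n>2d$, with room to spare) so that the $v$-integral converges; this closes the loop between \eqref{eq:regularity estimate 1} and \eqref{eq:regularity estimate 2}. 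Concretely one runs a simultaneous Gronwall inequality for the quantities $M_j(t) := \norm{f(t)}_{W_n^{j,\infty}} + \norm{\nabla^j\rho_f(t)}_{L^\infty}$, $j=0,\dots,\sigma$, where the estimate for $M_j$ uses $M_{j-1},\dots,M_0$ already controlled plus a linear dependence on $M_j$ itself with coefficient $\norm{\nabla E}_{L^\infty}\lesssim \norm{\rho_f}_{L^1\cap L^\infty}$; the linear term yields the exponential-in-time bound, and the estimate is global whenever $\rho_f\in L^\infty(\mathbb{R}_+,L^1\cap L^\infty)$.

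For the $L^2$-based estimate \eqref{eq:regularity estimate 3}, the argument is parallel but energy-type: one differentiates the Vlasov equation $\sigma$ times, multiplies by $\langle z\rangle^{2k} D_z^\alpha f$, integrates over phase space, and uses that $\nabla_z\cdot(\text{transport field}) = \nabla_v\cdot E = 0$ to kill the top-order term after integration by parts, leaving commutator terms of the form $[D_z^\alpha, v/\sqrt{1+v^2}\cdot\nabla_x + E\cdot\nabla_v]f$ and weight-commutator terms $[\langle z\rangle^{2k}, \cdot]$; all of these are bounded by $\norm{\nabla^\sigma E}_{L^\infty}\norm{f}_{H_k^\sigma}$ plus lower-order pieces, where $\norm{\nabla^\sigma E}_{L^\infty}$ is already under control from the $W^{\sigma,\infty}$ part. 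A Gronwall inequality in $\norm{f(t)}_{H_k^\sigma}^2$ then gives \eqref{eq:regularity estimate 3}, again globally under the global assumption on $\rho_f$. The main obstacle throughout is handling the commutation of $\sigma+1$ derivatives with the singular convolution kernel: one must be careful that moving derivatives from $K$ onto $\rho_f$ is legitimate (no spurious local terms beyond what Calderón–Zygmund or the explicit kernel structure permits) and that the $v$-moment bound $n>2d$ genuinely suffices to convert phase-space $W_n^{\sigma,\infty}$ control of $f$ into $L^\infty$ control of $\nabla^\sigma\rho_f$; both are routine but must be done in the right order so that the Gronwall system closes.
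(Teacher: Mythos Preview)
Your overall architecture matches the paper's: propagate weighted derivatives by a Gronwall argument whose top-order coefficient is $\norm{\nabla E}_{L^\infty}$, feed higher derivatives of $E$ back through $\nabla^\sigma\rho_f$ (controlled from $W_n^{\sigma,\infty}$ thanks to $n>2d$), and induct in $\sigma$. The paper does this via $L^p$ energy estimates sent to $p\to\infty$ rather than via characteristics, but that is a cosmetic difference.

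There is, however, a genuine gap at the base of your induction. You write that the linear Gronwall coefficient is $\norm{\nabla E}_{L^\infty}\lesssim \norm{\rho_f}_{L^1\cap L^\infty}$. This is true for $a\in(-1,d-2)$, but it \emph{fails at the endpoint} $a=d-2$ (Coulomb in $d=3$, logarithmic in $d=2$), which is explicitly included in the hypothesis. In that case $\nabla^2 K$ is a Calder\'on--Zygmund kernel, and Calder\'on--Zygmund operators do not map $L^\infty$ to $L^\infty$; no amount of $L^1\cap L^\infty$ control on $\rho_f$ alone bounds $\nabla E$. Your fallback of moving a derivative, $\nabla E=\nabla K\ast\nabla\rho_f$, gives $\norm{\nabla E}_{L^\infty}\lesssim \norm{\nabla\rho_f}_{L^1\cap L^\infty}\lesssim \norm{f}_{W_n^{1,\infty}}=M_1$, so the inequality for $M_1$ becomes $\dot M_1\lesssim M_1^2$, which is not closed by the assumption $\rho_f\in L^\infty_{\rm loc}(L^1\cap L^\infty)$ and cannot give the stated conclusion.

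The paper repairs exactly this point with the log-Lipschitz estimate (Lemma~\ref{lemma:auxiliary estimates}, inequality~\eqref{eq: auxiliary estimates electric field 2}):
\[
\norm{\nabla E}_{L^\infty}\;\leq\; C\abs{\gamma}\Big(1+\norm{\rho_f}_{L^1}+\norm{\rho_f}_{L^\infty}\big[1+\ln\big(1+\norm{\nabla\rho_f}_{L^\infty}\big)\big]\Big).
\]
One first runs Gronwall to get $\norm{f(t)}_{W_n^{1,\infty}}\leq \norm{f_0}_{W_n^{1,\infty}}\exp\big(Cn(t+\int_0^t\norm{E}_{W^{1,\infty}})\big)$, substitutes this into the logarithm (so the $\ln$ becomes linear in $\int_0^t\norm{E}_{W^{1,\infty}}$), and then applies a second, now genuinely linear, Gronwall to $\norm{E}_{W^{1,\infty}}$ with coefficient $1+\norm{\rho_f}_{L^1\cap L^\infty}$. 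This is the missing idea; once you have it for $\sigma=1$, your inductive scheme for $\sigma>1$ and the $H_k^\sigma$ energy estimate go through essentially as you describe, and the paper proceeds in the same way using the logarithmic variant~\eqref{eq: auxiliary estimates electric field 3} at each step.
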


\begin{proposition}[Uniqueness criterion]
\label{proposition: uniqueness Vlasov}
Let $d \in \{2, 3 \}$ and $a \in (-1, d-2]$. Moreover, let $f_1$ and $f_2$ be two solutions of the Vlasov equation \eqref{eq:relativistic Vlasov} in $L^{\infty} \left( [0,T] , L^1(\mathbb{R}^{2d}) \right)$ for some $T>0$ such that
\begin{align}
\label{eq:uniqueness Vlasov condition}
\int_{\mathbb{R}^d} \abs{\nabla_v f_2}\,dv  \in L^1 \left( [0,T] , L^{\frac{d}{d- (a+1)} - \delta} \cap L^{\frac{d}{d- (a+1)} + \delta}(\mathbb{R}^d) \right) 
\end{align} 
for some  $0 < \delta < \frac{d}{d - (a+1)}$.
Then there exists  $C>0$ such that
\begin{align}
\label{eq:uniqueness Vlasov}
\norm{(f_1 - f_2)(t)}_{L^1(\mathbb{R}^{2d})}
&\leq C \norm{(f_1 - f_2)(0)}_{L^1(\mathbb{R}^{2d})} e^{C \int_0^t \norm{\rho_{\abs{\nabla_v f_2}}(s)}_{L^{\frac{d}{d - (a+1)} + \delta} \cap L^{\frac{d}{d - (a+1)} - \delta} (\mathbb{R}^d)}\,ds},
\end{align}
where $\rho_{\abs{\nabla_v f_2}}(t,x):=\int_{\mathbb{R}^d} \abs{\nabla_v f_2(t,x,v)}\,dv$.
\end{proposition}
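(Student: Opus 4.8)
The plan is to derive a Gr\"onwall estimate for $t\mapsto\norm{(f_1-f_2)(t)}_{L^1(\mathbb{R}^{2d})}$, combining the divergence-free structure of the relativistic transport field with a Hardy--Littlewood--Sobolev-type splitting of the force kernel. First, since $\abs{\nabla K(x)}\le C\abs{x}^{-(a+1)}$ and, by the embedding $L^{p_0}\cap L^{q_0}(\mathbb{R}^d)\hookrightarrow L^p(\mathbb{R}^d)$ for $p_0\le p\le q_0$, the hypothesis \eqref{eq:uniqueness Vlasov condition} remains valid with $\delta$ replaced by any smaller value, we may and do assume $\delta$ small enough that $\frac{d}{d-(a+1)}-\delta\ge1$. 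Writing $g:=f_1-f_2$, $\sigma:=\rho_{f_1}-\rho_{f_2}$ (so that $\norm{\sigma(t)}_{L^1(\mathbb{R}^d)}\le\norm{g(t)}_{L^1(\mathbb{R}^{2d})}$) and $E_i=-\nabla K*\rho_{f_i}$, the difference of the two Vlasov equations reads
\begin{align*}
\partial_t g+\frac{v}{\sqrt{1+v^2}}\cdot\nabla_x g+E_1\cdot\nabla_v g=-(E_1-E_2)\cdot\nabla_v f_2 .
\end{align*}
The transport field $b(t,x,v):=\big(\tfrac{v}{\sqrt{1+v^2}},E_1(t,x)\big)$ is divergence-free in $(x,v)$, its first block depending on $v$ only and its second on $x$ only. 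Testing the equation against $\beta'(g)$ for a smooth bounded convex approximation $\beta$ of $\abs{\cdot}$ and passing to the limit $\beta\to\abs{\cdot}$ (a step to be justified as indicated below), the transport terms drop since $\mathrm{div}\,b=0$, and, because $E_1-E_2$ does not depend on $v$, we arrive at
\begin{align*}
\frac{d}{dt}\norm{g(t)}_{L^1(\mathbb{R}^{2d})}\le\int_{\mathbb{R}^{2d}}\abs{(E_1-E_2)(t,x)}\,\abs{\nabla_v f_2(t,x,v)}\,dx\,dv=\int_{\mathbb{R}^d}\abs{(E_1-E_2)(t,x)}\,\rho_{\abs{\nabla_v f_2}}(t,x)\,dx .
\end{align*}

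To control the right-hand side we use $\abs{E_1-E_2}\le\abs{\nabla K}*\abs{\sigma}$, the evenness of $\abs{\nabla K}$ and Fubini to get
\begin{align*}
\int_{\mathbb{R}^d}\abs{(E_1-E_2)(t,x)}\,\rho_{\abs{\nabla_v f_2}}(t,x)\,dx\le\norm{\sigma(t)}_{L^1(\mathbb{R}^d)}\,\norm{\,\abs{\nabla K}*\rho_{\abs{\nabla_v f_2}}(t)}_{L^\infty(\mathbb{R}^d)} .
\end{align*}
Writing $\abs{\nabla K}=\abs{\nabla K}\mathbbm{1}_{\{\abs{x}\le1\}}+\abs{\nabla K}\mathbbm{1}_{\{\abs{x}>1\}}$, the first summand belongs to $L^{p_1}(\mathbb{R}^d)$ for every $p_1<\frac{d}{a+1}$ and the second to $L^{p_2}(\mathbb{R}^d)$ for every $p_2>\frac{d}{a+1}$; choosing $p_1,p_2$ with H\"older conjugates $p_1'=\frac{d}{d-(a+1)}+\delta$ and $p_2'=\frac{d}{d-(a+1)}-\delta$, Young's inequality yields $\norm{\,\abs{\nabla K}*\rho_{\abs{\nabla_v f_2}}(t)}_{L^\infty(\mathbb{R}^d)}\le C\norm{\rho_{\abs{\nabla_v f_2}}(t)}_{L^{\frac{d}{d-(a+1)}-\delta}\cap L^{\frac{d}{d-(a+1)}+\delta}(\mathbb{R}^d)}$. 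Combining the last estimates with $\norm{\sigma(t)}_{L^1}\le\norm{g(t)}_{L^1}$ gives
\begin{align*}
\frac{d}{dt}\norm{g(t)}_{L^1(\mathbb{R}^{2d})}\le C\,\norm{\rho_{\abs{\nabla_v f_2}}(t)}_{L^{\frac{d}{d-(a+1)}-\delta}\cap L^{\frac{d}{d-(a+1)}+\delta}(\mathbb{R}^d)}\,\norm{g(t)}_{L^1(\mathbb{R}^{2d})} ,
\end{align*}
and \eqref{eq:uniqueness Vlasov} follows from Gr\"onwall's lemma, the time integrability of the prefactor being precisely \eqref{eq:uniqueness Vlasov condition}.

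The main obstacle is the rigorous justification of the first differential inequality, i.e. the renormalization step. As only $\rho_{f_1}\in L_t^\infty L_x^1$ is assumed, $E_1$ need not be Lipschitz and, at the Poisson endpoint $a=d-2$, $\nabla_x E_1=-\nabla^2K*\rho_{f_1}$ is merely a singular integral of an $L^1$ density, so the classical DiPerna--Lions $W^{1,1}$ renormalization does not apply verbatim; one has to invoke its extension to fields whose spatial derivative is a singular integral operator acting on an integrable function (in the spirit of Bouchut--Crippa), using that $\mathrm{div}\,b=0$ and that the source $(E_1-E_2)\cdot\nabla_v f_2$ lies in $L^1([0,T]\times\mathbb{R}^{2d})$ by the estimate just established. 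For $a<d-2$ one has $\nabla^2K\in L_{\mathrm{loc}}^1$, whence $b\in L_t^\infty W_{x,v,\mathrm{loc}}^{1,1}$ and the standard theory suffices; alternatively one can mollify the two Vlasov equations in all variables, run the argument for the regularized solutions — whose transport fields are smooth — and pass to the limit.
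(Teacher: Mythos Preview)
Your argument is correct and follows essentially the same route as the paper: subtract the two equations, use the divergence-free structure of the transport field to reduce the differential inequality for $\norm{g(t)}_{L^1}$ to the forcing term $(E_1-E_2)\cdot\nabla_v f_2$, swap the convolution via Fubini to get $\norm{\sigma}_{L^1}\norm{\abs{\nabla K}*\rho_{\abs{\nabla_v f_2}}}_{L^\infty}$, estimate the latter by the near/far splitting of $\abs{\nabla K}$ and Young's inequality (this is exactly the content of the paper's inequality \eqref{eq: auxiliary estimates electric field 1}), and conclude by Gr\"onwall. Your reduction of $\delta$ to ensure $\frac{d}{d-(a+1)}-\delta\ge1$ and your discussion of the renormalization step are more explicit than the paper, which simply writes the formal $\mathrm{sign}$ computation and refers to the analogous argument in \cite{LS2020}.
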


\begin{remark}\label{remark:uniqueness}
Note that \eqref{eq:uniqueness Vlasov condition} can be replaced by the stronger condition $f_2 \in L^1 \left( [0,T] , W_{n}^{1, \infty}(\mathbb{R}^{2d}) \right)$ with $n > 2d$. Indeed, note that by interpolation it holds
\begin{align}
\norm{\rho_{\abs{\nabla_v f_2}}(t)}_{L^{\frac{d}{d - (a+1)} + \delta} \cap L^{\frac{d}{d - (a+1)} - \delta} (\mathbb{R}^d)}
&\leq C \left(\norm{\rho_{\abs{\nabla_v f_2}}(t)}_{L^{1}  (\mathbb{R}^d)}
+ \norm{\rho_{\abs{\nabla_v f_2}}(t)}_{L^{\frac{d}{d - (a+1)} + \delta} (\mathbb{R}^d)}\right).
\end{align}
 Furthermore, we use the assumption $n/2 > d$ to estimate
\begin{align}
\norm{\rho_{\abs{\nabla_v f_2}}(t)}_{L^{1}  (\mathbb{R}^d)} 
&\leq \norm{\left< x \right>^{n/2} \left< v \right>^{n/2} \abs{\nabla_v f_2}(t)}_{L^{\infty}(\mathbb{R}^{2d})} \int_{\mathbb{R}^d \times \mathbb{R}^d} \left< x \right>^{- n/2} \left< v \right>^{- n/2}\,dx\,dv 
\nonumber \\
&\leq C \norm{f_2(t)}_{W_{n}^{1,\infty}(\mathbb{R}^{2d})}.
\end{align} 
Similarly, 
\begin{align}
\norm{\rho_{\abs{\nabla_v f_2}}(t)}_{L^{\frac{d}{d - (a+1)} + \delta} (\mathbb{R}^d)}
&\leq C \norm{f_2}_{W_{n}^{1,\infty}(\mathbb{R}^{2d})}\end{align}
because $n/2 > d$ and ${\frac{d}{d - (a+1)} + \delta} \geq 1$ for $a \in (-1, d-2]$.
This shows that the solutions considered in Proposition \ref{prop:regularity solution Vlasov} are unique.
\end{remark}



We will rely on the following well-posedness result generalizing to general singular potentials \cite[Theorem 1]{HLLS2010} and \cite[Lemma 2.3 and Lemma 2.4]{AMS2008} which considered the case $d=3$ and $a=1$. For the sake of completeness we provide its proof in Appendix \ref{section:Well-posedness of the Hartree-Fock equations}.
Note that we deliberately choose to restrict the well-posedness result  to $d=3$, 
$
K(x) = \gamma \frac{1}{\abs{x}^a} ,
$ 
 $0 < a \leq 1$ and coupling constant $\gamma \in \mathbb{R}$, in order to keep the presentation shorter. The cases $d=2$ or $a<0$ would require propagation of spatial moments, making the presentation much longer.

\begin{proposition}
\label{lemma: Hartree equation global existence of solution}
Let $d= 3$, $a \in (0,1]$ and $\omega_{N,0} \in \textfrak{S}^{1,\frac{1}{2}}_{+} \left( L^2(\mathbb{R}^3) \right)$ such that $0 \leq \omega_{N,0} \leq 1$.  If $\gamma < 0$ and $a = 1$ assume in addition that 
\begin{align}
\label{eq: Hartree global well-posedness condition for attractive Coulomb}
\abs{\gamma} <  \frac{N \varepsilon}{\gamma_{\rm cr}  \, \left( \tr \left( \omega_{N,0} \right)\right)^{2/3} }   ,
\end{align}
where $\gamma_{\rm cr}$ is a universal constant of order $1$. Then, the Cauchy problems for (the integral versions of) equation \eqref{eq:Hartree-Fock equation relativistic} and equation \eqref{eq:Hartree equation relativistic} have a unique global solution in $\textfrak{S}^{1,\frac{1}{2}}_{+} \left( L^2(\mathbb{R}^3) \right)$.
\end{proposition}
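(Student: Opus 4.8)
The plan is to first prove local-in-time well-posedness of (the integral versions of) \eqref{eq:Hartree-Fock equation relativistic} and \eqref{eq:Hartree equation relativistic} by a fixed-point argument in $C\big([0,T];\textfrak{S}^{1,\frac12}(L^2(\mathbb{R}^3))\big)$, and then to promote it to a global solution by an \emph{a priori} bound coming from the conserved energy and the conserved trace, the latter being exactly where the smallness condition \eqref{eq: Hartree global well-posedness condition for attractive Coulomb} enters. For the local step I would write the equation in Duhamel form,
\begin{align*}
\omega_t &= \mathcal{U}_\e(t)\,\omega_0\,\mathcal{U}_\e(t)^* - \frac{i}{\e}\int_0^t \mathcal{U}_\e(t-s)\,\big[K*\rho_{\omega_s}-X_{\omega_s}\,,\,\omega_s\big]\,\mathcal{U}_\e(t-s)^*\,ds ,
\end{align*}
with $\mathcal{U}_\e(t)=e^{-i\e^{-1}t\sqrt{1-\e^2\Delta}}$ the free relativistic propagator. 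Since $\mathcal{U}_\e(t)$ commutes with $(1-\Delta)^{1/4}$ it acts isometrically on $\textfrak{S}^{1,\frac12}$, so the whole argument reduces to a locally Lipschitz estimate for the map $\omega\mapsto[K*\rho_\omega-X_\omega,\omega]$ on $\textfrak{S}^{1,\frac12}$. The key input is the density bound $\norm{\rho_\omega}_{L^1\cap L^{3/2}(\mathbb{R}^3)}\le C N^{-1}\norm{\omega}_{\textfrak{S}^{1,\frac12}}$, valid for any $\omega\in\textfrak{S}^{1,\frac12}$ (the $L^1$ part is $\int\abs{\omega(x;x)}\,dx\le\norm{\omega}_{\textfrak{S}^1}\le\norm{\omega}_{\textfrak{S}^{1,\frac12}}$, the $L^{3/2}$ part follows by duality from $\norm{(1-\Delta)^{-1/4}M_\phi(1-\Delta)^{-1/4}}_{\mathrm{op}}\le C\norm{\phi}_{L^3}$ and $H^{1/2}(\mathbb{R}^3)\hookrightarrow L^3$); combined with the Hardy--Kato inequality $\abs{x}^{-a}\le C_a(-\Delta)^{a/2}$ (available because $0<a\le1$), Hardy--Littlewood--Sobolev, fractional Leibniz, and Kato--Seiler--Simon/Schatten--Hölder estimates as in \cite{HLLS2010,AMS2008,LS2020}, this gives $\norm{[K*\rho_\omega-X_\omega,\omega]}_{\textfrak{S}^{1,\frac12}}\le C(1+\abs{\gamma})N^{-1}\norm{\omega}_{\textfrak{S}^{1,\frac12}}^2$ and the corresponding bilinear difference estimate, and a contraction on a short interval $[0,T]$ with $T=T(N,\abs{\gamma},\norm{\omega_0}_{\textfrak{S}^{1,\frac12}})$ produces a unique maximal solution on $[0,T_{\max})$; uniqueness in the full class is the same Lipschitz bound together with Grönwall.

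Next I would verify that the constraints propagate. For a local solution the effective operator $h_t:=\sqrt{1-\e^2\Delta}+K*\rho_{\omega_t}-X_{\omega_t}$ is, for each $t$, self-adjoint and bounded below on the form domain $H^{1/2}(\mathbb{R}^3)$ (the perturbations are infinitesimally form-bounded relative to $\sqrt{1-\e^2\Delta}$, again by Hardy--Kato since $a\le1$) and depends strongly continuously on $t$, hence generates a unitary propagator $U(t,s)$; comparing the two linear Cauchy problems gives $\omega_t=U(t,0)\,\omega_0\,U(t,0)^*$, so $\omega_t$ has the spectrum of $\omega_0$ — in particular $0\le\omega_t\le1$ — and $\tr\omega_t=\tr\omega_0=N$, so $\omega_t\in\textfrak{S}^{1,\frac12}_+$. (One may equally run the iteration directly on the maps $\omega\mapsto U_\omega(t,0)\,\omega_0\,U_\omega(t,0)^*$, which preserve $0\le\,\cdot\,\le1$ at every step.) A direct computation along the flow then shows that
\begin{align*}
\mathcal{E}(\omega)=\tr\big(\sqrt{1-\e^2\Delta}\,\omega\big)+\frac{1}{2N}\iint K(x-y)\,\omega(x;x)\,\omega(y;y)\,dx\,dy-\frac{1}{2N}\iint K(x-y)\,\abs{\omega(x;y)}^2\,dx\,dy
\end{align*}
(the last term absent for the Hartree equation \eqref{eq:Hartree equation relativistic}) is conserved.

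It then remains to bound $\tr(\sqrt{1-\e^2\Delta}\,\omega_t)\simeq\norm{\omega_t}_{\textfrak{S}^{1,\frac12}}$ uniformly in time in terms of $\mathcal{E}(\omega_0)$ and $N$. If $\gamma>0$ this is immediate: $0\le\omega_t\le1$ forces $\abs{\omega_t(x;y)}^2\le\omega_t(x;x)\,\omega_t(y;y)$, so the two interaction terms in $\mathcal{E}$ combine into a nonnegative quantity and $\tr(\sqrt{1-\e^2\Delta}\,\omega_t)\le\mathcal{E}(\omega_0)$. The same pointwise Cauchy--Schwarz bound reduces the case $\gamma<0$ to controlling $\frac{\abs{\gamma}}{2N}\iint\abs{x-y}^{-a}\,\omega_t(x;x)\,\omega_t(y;y)\,dx\,dy$; here I would use Hardy--Littlewood--Sobolev, interpolation between $\int\omega_t(x;x)\,dx=N$ and $\int\omega_t(x;x)^{4/3}\,dx$, and the relativistic Lieb--Thirring inequality $\tr(\sqrt{1-\e^2\Delta}\,\omega_t)\ge\e\,\tr(\sqrt{-\Delta}\,\omega_t)\ge c\,\e\int\omega_t(x;x)^{4/3}\,dx$ (valid in $d=3$ for $0\le\omega_t\le1$). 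For $a\in(0,1)$ the exponents are strictly subcritical and one closes with room to spare, with no smallness needed. For $a=1$ the same chain produces
\begin{align*}
\tr\big(\sqrt{1-\e^2\Delta}\,\omega_t\big)\Big(1-\frac{C\,\abs{\gamma}\,(\tr\omega_0)^{2/3}}{\e\,N}\Big)\le\mathcal{E}(\omega_0),
\end{align*}
so condition \eqref{eq: Hartree global well-posedness condition for attractive Coulomb} is precisely what makes the prefactor strictly positive, giving a uniform-in-time bound and hence $T_{\max}=\infty$, for both \eqref{eq:Hartree-Fock equation relativistic} and \eqref{eq:Hartree equation relativistic}.

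The hard part will be exactly this last estimate in the attractive Coulomb case $a=1$, $\gamma<0$, which is $H^{1/2}$-critical: the potential energy is dominated by the kinetic energy only up to a constant of size $\abs{\gamma}(\tr\omega_0)^{2/3}/(\e N)$, so coercivity of $\mathcal{E}$ on the set $\{\tr\omega=N\}$ — and therefore global existence — hinges on the sharp interplay of the relativistic Lieb--Thirring and Hardy--Littlewood--Sobolev inequalities and is genuinely conditional, which is why \eqref{eq: Hartree global well-posedness condition for attractive Coulomb} must be imposed. The subcritical range $0<a<1$ and the repulsive case $\gamma>0$ are, by comparison, soft.
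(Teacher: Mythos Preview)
Your proposal is correct and outlines a valid proof, but it proceeds by a genuinely different route from the paper's. The paper does \emph{not} close the global bound directly at the level of $\textfrak{S}^{1,\frac12}$. Instead it follows Chadam's strategy \cite{C1976}: after establishing local well-posedness in $\textfrak{S}^{1,\frac12}$ by the same Lipschitz estimate you describe (Lemma~\ref{lemma: Hartree equation local existence of solution}), it first proves \emph{global} well-posedness for the finite-rank orbital system \eqref{eq:Hartree-Fock equation for orbitals} in $H^{1/2,M}$ (Lemma~\ref{lemma: Hartree equation  orbitals global existence}, generalising \cite{FL2007}), identifies the resulting $\omega_t^{\leq M}=\sum_{j=1}^M\ket{\psi_j(t)}\bra{\psi_j(t)}$ with the unique $\textfrak{S}^{1,\frac12}$-solution for finite-rank data (Lemma~\ref{lemma:global well-posedness Hartree equivalence lemma}), and then shows that for general $\omega_0\in\textfrak{S}^{1,\frac12}_+$ the sequence $\{\omega_t^{\leq M}\}_M$ is Cauchy in $\textfrak{S}^{1,\frac12}$, \emph{uniformly in $t$}, by exploiting conservation of energy and mass for each finite-rank approximant. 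The energy coercivity argument you sketch (HLS + interpolation + relativistic Lieb--Thirring / \cite[Lemma~A.1]{FL2007} for $a=1$, Young's inequality for $0<a<1$) is carried out in the paper only at the orbital level, where both energy conservation and the constraint $0\le\omega^{\leq M}_t\le1$ are automatic.

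What each approach buys: your direct route is shorter and conceptually cleaner, but it shifts the burden onto justifying energy conservation and the representation $\omega_t=U(t,0)\,\omega_0\,U(t,0)^*$ for the merely $\textfrak{S}^{1,\frac12}$-solution (both typically done by a further approximation, so the saving is partly illusory). The paper's detour through orbitals lets all conservation laws and the crucial $H^{1/2}$ a~priori bound be read off from the well-established Fr\"ohlich--Lenzmann theory \cite{FL2007}, and positivity/spectral bounds are preserved by construction rather than by invoking a time-dependent unitary propagator; the price is the extra Cauchy-sequence step to pass from finite rank to general data.
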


\begin{remark}
Without condition \eqref{eq: Hartree global well-posedness condition for attractive Coulomb} equation \eqref{eq:Hartree-Fock equation relativistic} and \eqref{eq:Hartree equation relativistic} are still locally well-posed in $\textfrak{S}^{1,\frac{1}{2}}_{+} \left( L^2(\mathbb{R}^3) \right)$ for $d=3$, $a=1$ and for all $\gamma \in \mathbb{R}$.
\end{remark}

The paper is structured as follows: in Section~\ref{section:preliminary-estimate} we collect estimates on the relativistic Vlasov equation and the semi-relativistic Hartree equation; Section~\ref{section:proofs} is devoted to the detailed proofs of semiclassical limit (Theorem~\ref{theorem:derivation relativistic Vlasov from Hartree}), propagation of velocity moments (Theorem~\ref{theorem:velocity moments short time estimate}), propagation of regularity (Proposition~\ref{prop:regularity solution Vlasov}) and uniqueness for the relativistic Vlasov equation with singular interactions (Proposition~\ref{proposition: uniqueness Vlasov});
Appendices \ref{section:Well-posedness of the Hartree-Fock equations} and \ref{appendix:duhamel} are concerned with the well-posedness theory for the Hartree and Hartree-Fock equations (Proposition~\ref{lemma: Hartree equation global existence of solution}) and address the existence of the two-parameter semigroup used in the Duhamel expansion in the proof of Theorem~\ref{theorem:derivation relativistic Vlasov from Hartree}. 

\section{Preliminary estimates}\label{section:preliminary-estimate}

Before we prove our main results, let us state some preliminary facts. 

\begin{proposition}
\label{prop:estimate velocity moments}
Let $d \in \mathbb{N}$, $c \geq b \geq 0$ and $f \coloneqq f(x,v) \geq 0$. Then for all $x \in \mathbb{R}^d$,
\begin{align}
\int_{\mathbb{R}^d}  \abs{v}^b f(x,v)\,dv \leq C \norm{f}_{L^\infty(\mathbb{R}^{2d})}^{\frac{c-b}{d+c}} \left( \int_{\mathbb{R}^d}  \abs{v}^c f(x,v)\,dv \right)^{\frac{d+b}{d+c}},
\end{align}
with $C$ depending only on $b$, $c$ and $d$. In particular, setting $\rho(x) = \int_{\mathbb{R}^d}  f(x,v)\,dv$, we have for any $b > 0$
\begin{align}
\norm{\rho}_{L^{\frac{b+d}{d}}(\mathbb{R}^d)}
&\leq C \norm{f}_{L^{\infty}(\mathbb{R}^{2d})}^{\frac{b}{d+b}}
\left( \int_{\mathbb{R}^d \times \mathbb{R}^d}  \abs{v}^{b} f(x,v)\,dx\,dv \right)^{\frac{d}{b+d}}.
\end{align}
\end{proposition}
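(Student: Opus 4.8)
The plan is to prove the pointwise interpolation inequality first, then obtain the $L^p$ bound on $\rho$ by integrating over $x$ and applying Hölder. For the pointwise estimate, fix $x$ and split the velocity integral at an arbitrary radius $R>0$: write $\int_{\RRR^d} |v|^b f(x,v)\,dv = \int_{|v|\le R} |v|^b f(x,v)\,dv + \int_{|v|> R} |v|^b f(x,v)\,dv$. On the first piece bound $f(x,v)\le \norm{f}_{L^\infty(\RRR^{2d})}$ and integrate $|v|^b$ over the ball to get a term of order $\norm{f}_{L^\infty} R^{d+b}$. On the second piece use $|v|^b = |v|^c |v|^{b-c} \le R^{b-c} |v|^c$ (valid since $b-c\le 0$ and $|v|>R$) to bound it by $R^{b-c}\int_{\RRR^d} |v|^c f(x,v)\,dv$. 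This gives, for every $R>0$,
\begin{align*}
\int_{\RRR^d} |v|^b f(x,v)\,dv \le C \norm{f}_{L^\infty(\RRR^{2d})} R^{d+b} + R^{b-c} \int_{\RRR^d} |v|^c f(x,v)\,dv .
\end{align*}
Optimizing over $R$ — i.e. choosing $R$ so that the two terms balance, $R^{d+c} \sim \big(\int |v|^c f\,dv\big)/\norm{f}_{L^\infty}$ — yields the claimed exponents $\frac{c-b}{d+c}$ on $\norm{f}_{L^\infty}$ and $\frac{d+b}{d+c}$ on the $c$-th moment. One should note the degenerate cases: if $\norm{f}_{L^\infty}=0$ then $f=0$ a.e. and the inequality is trivial; if the $c$-th moment is infinite there is nothing to prove; and the case $b=c$ is immediate.

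For the second statement, set $b>0$ and $p = \frac{b+d}{d}$, so that $\frac{d+b}{d+c}$ with $c=b$ reads... more directly: apply the pointwise inequality with $c=b$ is not what is wanted; instead, raise the pointwise inequality (with $c$ replaced by $b$, which forces $c\ge b$ to fail unless $c=b$) — so actually I use the pointwise estimate in the special case $c=b$ trivially and argue differently. The clean route: from the pointwise inequality with the roles $b\mapsto 0$, $c\mapsto b$ we get $\rho(x) = \int f(x,v)\,dv \le C \norm{f}_{L^\infty(\RRR^{2d})}^{\frac{b}{d+b}} \big(\int |v|^b f(x,v)\,dv\big)^{\frac{d}{d+b}}$. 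Raise this to the power $p=\frac{b+d}{d}$:
\begin{align*}
\rho(x)^{\frac{b+d}{d}} \le C \norm{f}_{L^\infty(\RRR^{2d})}^{\frac{b}{d}} \int_{\RRR^d} |v|^b f(x,v)\,dv ,
\end{align*}
since $\frac{b}{d+b}\cdot\frac{d+b}{d} = \frac{b}{d}$ and $\frac{d}{d+b}\cdot\frac{d+b}{d}=1$. Now integrate in $x$ over $\RRR^d$ and take the $d/(b+d)$-th root: $\norm{\rho}_{L^{(b+d)/d}(\RRR^d)} \le C \norm{f}_{L^\infty(\RRR^{2d})}^{\frac{b}{d+b}} \big(\int_{\RRR^d\times\RRR^d} |v|^b f(x,v)\,dx\,dv\big)^{\frac{d}{b+d}}$, which is exactly the asserted bound.

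The only genuinely delicate point is the optimization step in the first part and checking that the resulting exponents are precisely $\frac{c-b}{d+c}$ and $\frac{d+b}{d+c}$; everything else is bookkeeping. I would also make sure the constant $C$ depends only on $b,c,d$ — it enters through the volume of the unit ball in $\RRR^d$ and the exponents in the Young-type optimization, all of which are functions of $b,c,d$ alone. No regularity or decay beyond $f\ge 0$ and finiteness of the $c$-th moment is needed, so the statement holds in full generality as claimed.
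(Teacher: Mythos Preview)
Your proof is correct and follows the standard Lions--Perthame argument that the paper invokes by reference: split the $v$-integral at radius $R$, bound the inner piece by $\norm{f}_{L^\infty}$ times the volume integral and the outer piece by $R^{b-c}$ times the $c$-th moment, then optimize in $R$; the second inequality follows by specializing to $b\mapsto 0$, $c\mapsto b$, raising to the power $(b+d)/d$, and integrating in $x$. The brief detour in your write-up where you consider ``$c=b$'' before correcting course should simply be deleted, but the final argument is exactly the intended one.
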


\begin{proof}
The inequalities are proven in analogy to \cite[Proposition 2.1]{DMS2015} and the proof of estimate (14) in \cite{LP1991}.
\end{proof}

\begin{lemma}
\label{lemma:definition and interpolation of velocity moments}
Let $d \in \mathbb{N}$, $l \geq 0$, $f \coloneqq f(x,v) \geq 0$ and $M_l[f] = \int_{\mathbb{R}^d \times \mathbb{R}^d}  \abs{v}^l f(x,v)\,dx \, dv$. For $0 \leq b \leq a \leq c$ the following holds
\begin{align}
M_a[f] \leq M_b^{\frac{c-a}{c-b}}[f] \, M_c^{\frac{a-b}{c-b}}[f] .
\end{align}
\end{lemma}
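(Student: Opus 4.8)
The plan is to recognize the statement as a direct instance of Hölder's inequality, applied with respect to the nonnegative measure $d\mu(x,v) = f(x,v)\,dx\,dv$ on $\mathbb{R}^d \times \mathbb{R}^d$ and the weight function $|v|$. Concretely, one interpolates the exponent $a$ between $b$ and $c$.

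First I would dispose of the degenerate cases. If $b = c$, then necessarily $a = b = c$ and the asserted inequality is an equality, so there is nothing to prove; likewise if $M_b[f] = +\infty$ or $M_c[f] = +\infty$ the right-hand side is infinite and the bound is trivial. In the remaining edge situations one uses the conventions $0^0 = 1$ and notes that $M_b[f] = 0$ with $b>0$ forces $f$ to be concentrated on $\{v = 0\}$, whence $M_a[f] = 0$ as well for $a > 0$; the case $b = 0$ (where $M_0[f] = \norm{f}_{L^1(\mathbb{R}^{2d})}$) presents no difficulty. So we may assume $b < c$ and both $M_b[f], M_c[f]$ finite.

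Then I set $\theta = \frac{c-a}{c-b} \in [0,1]$, so that $1 - \theta = \frac{a-b}{c-b}$ and $a = \theta b + (1-\theta) c$. Since $f \geq 0$ we may write $|v|^a f = \big(|v|^b f\big)^{\theta}\big(|v|^c f\big)^{1-\theta}$ pointwise, hence
\[
M_a[f] = \int_{\mathbb{R}^d \times \mathbb{R}^d} \big(|v|^b f(x,v)\big)^{\theta}\big(|v|^c f(x,v)\big)^{1-\theta}\,dx\,dv .
\]
Applying Hölder's inequality with conjugate exponents $p = 1/\theta$ and $p' = 1/(1-\theta)$ (the step being vacuous when $\theta \in \{0,1\}$, which corresponds to $a \in \{b,c\}$) gives
\[
M_a[f] \leq \left( \int_{\mathbb{R}^d \times \mathbb{R}^d} |v|^b f(x,v)\,dx\,dv \right)^{\theta} \left( \int_{\mathbb{R}^d \times \mathbb{R}^d} |v|^c f(x,v)\,dx\,dv \right)^{1-\theta} = M_b[f]^{\frac{c-a}{c-b}}\, M_c[f]^{\frac{a-b}{c-b}},
\]
which is precisely the claim. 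There is no genuine obstacle in this argument; the only point requiring a little care is the bookkeeping of the infinite/degenerate cases flagged above, everything else being a one-line Hölder estimate.
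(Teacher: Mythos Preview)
Your proof is correct and is essentially identical to the paper's own argument: both write $|v|^a f = (|v|^b f)^{\theta}(|v|^c f)^{1-\theta}$ with $\theta = \frac{c-a}{c-b}$ and apply H\"older's inequality with exponents $p = 1/\theta$, $q = 1/(1-\theta)$. The paper does not explicitly discuss the degenerate cases you flag, but otherwise the two proofs match line for line.
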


\begin{proof}
Let $\mu = \frac{b(c-a)}{c-b}$, $\nu= \frac{c(a-b)}{c-b}$, $p = \frac{c-b}{c-a}$ and $q = \frac{c-b}{a-b}$. Then $\mu, \nu \geq 0$ and 
\begin{align}
\mu + \nu = \frac{b(c-a) + c (a-b)}{c-b} = \frac{bc - ab + ac - bc}{c-b} = \frac{a(c-b)}{c-b} = a .
\end{align}
Moreover, $p,q \geq 1$ and
\begin{align}
\frac{1}{p} + \frac{1}{q}
= \frac{c-a}{c-b} + \frac{a-b}{c-b} = \frac{c - a + a - b}{c-b}
= \frac{c-b}{c-b} = 1 .
\end{align}
Since $\mu p = \frac{b(c-a)}{(c-b)} \frac{(c-b)}{(c-a)} = b$ and
$\nu q = \frac{c(a-b)}{(c-b)} \frac{(c-b)}{(a-b)} = c$, H\"older's inequality leads to
\begin{align}
M_a[f] &= \int_{\mathbb{R}^d \times \mathbb{R}^d}   \abs{v}^{\mu} f^{\frac{1}{p}}(x,v) \abs{v}^{\nu} f^{\frac{1}{q}}(x,v) \,dx \, dv
\nonumber \\
&\leq  \left( \int_{\mathbb{R}^d \times \mathbb{R}^d}  \abs{v}^{\mu p} f(x,v) \,dx \, dv \right)^{\frac{1}{p}}
\left( \int_{\mathbb{R}^d \times \mathbb{R}^d}   \abs{v}^{\nu q} f(x,v) \,dx \, dv \right)^{\frac{1}{q}} 
\nonumber \\
&\leq  \left( \int_{\mathbb{R}^d \times \mathbb{R}^d}  \abs{v}^{b}  f(x,v) \,dx \, dv \right)^{\frac{c-a}{c-b}}
\left( \int_{\mathbb{R}^d \times \mathbb{R}^d}  \abs{v}^{c} f(x,v) \,dx \, dv \right)^{\frac{a-b}{c-b}} .
\end{align}
\end{proof}

\begin{lemma}
\label{lemma:auxiliary estimates}
Let $d \in \{2, 3 \}$, {$\gamma\in\mathbb{R}$}, $a \in (-1, d-2]$ and $K$ be defined as in \eqref{eq:definition potential}. Let $f \coloneqq f(x,v) \geq 0$, $\rho_f = \int_{\mathbb{R}^d} f(x,v)\,dv$, $E = - \nabla K * \rho_f$, $n \in \mathbb{N}$ be such that $n> 2d$, $p > \frac{d}{d - (a+1)}$ and  $1 \leq q < \frac{d}{d - (a+1)}$. Then there exists $C >0$ such that 
\begin{align}
\label{eq: auxiliary estimates electric field 1}
\norm{E}_{L^\infty(\mathbb{R}^d)} &\leq C \abs{\gamma} \left(\norm{\rho_f}_{L^p(\mathbb{R}^d)} + \norm{\rho_f}_{L^q(\mathbb{R}^d)} \right) \\
\label{eq: auxiliary estimates electric field 2}
\norm{\nabla E}_{L^\infty(\mathbb{R}^d)} &\leq C \abs{\gamma} \left( 1 +\norm{\rho_f}_{L^1(\mathbb{R}^d)} + \norm{\rho_f}_{L^\infty(\mathbb{R}^d)} \big[ 1 +  \ln (1 + \norm{\nabla \rho_f}_{L^\infty(\mathbb{R}^d)}) \big] \right) ,
\\
\label{eq: auxiliary estimates electric field 3}
\norm{E}_{W_0^{\sigma, \infty}(\mathbb{R}^d)} 
&\leq C \abs{\gamma} \min \left\{ \norm{f}_{W_n^{\sigma, \infty}(\mathbb{R}^{2d})} , \left( 1 +  \norm{f}_{W_n^{\sigma -1, \infty}(\mathbb{R}^{2d})} \right)  \left( 1 + \ln \big( 1 + \norm{f}_{W_n^{\sigma, \infty}(\mathbb{R}^{2d})} \big) \right) \right\} .
\end{align}
For $ \frac{d}{a+1} < q < \infty$ and $1 + \frac{1}{q} = \frac{a+1}{d} + \frac{1}{p}$ there exists $C>0$ such that 
\begin{align}
\label{eq: auxiliary estimates electric field 4}
\norm{E}_{L^q(\mathbb{R}^d)}  &\leq C \abs{\gamma} \norm{\rho_f}_{L^p(\mathbb{R}^d)}  
\leq C \abs{\gamma} \norm{f}_{L^{\infty}(\mathbb{R}^{2d})}^{\frac{p-1}{p}}
\left( \int_{\mathbb{R}^d \times \mathbb{R}^d}  \abs{v}^{(p-1)d} f(x,v)\,dx \, dv \right)^{\frac{1}{p}} .
\end{align}
\end{lemma}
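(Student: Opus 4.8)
The plan is to reduce all four estimates to pointwise control of the convolution kernels, namely $\abs{\nabla K(x)} \le C\abs{\gamma}\,\abs{x}^{-(a+1)}$ for every $x \neq 0$ (valid in all cases $a \in (-1,d-2]$, including $a=0$ where $\nabla K(x) = \gamma x/\abs{x}^2$), together with the Calderón--Zygmund structure of $\nabla^2 K$ in the borderline case $a = d-2$.

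For \eqref{eq: auxiliary estimates electric field 1} I would split $\nabla K = (\nabla K)\mathbbm{1}_{\{\abs{x}\le 1\}} + (\nabla K)\mathbbm{1}_{\{\abs{x}> 1\}}$. The first piece lies in $L^{p'}(\mathbb{R}^d)$ precisely for $p' < d/(a+1)$, and the dual exponents of such $p'$ are exactly the $p > d/(d-(a+1))$; the second piece lies in $L^{q'}(\mathbb{R}^d)$ precisely for $q' > d/(a+1)$ (including $q' = \infty$), whose dual exponents are the $1 \le q < d/(d-(a+1))$. Young's (equivalently Hölder's) inequality then gives $\norm{E}_{L^\infty(\mathbb{R}^d)} \le \norm{(\nabla K)\mathbbm{1}_{\{\abs{x}\le 1\}}}_{L^{p'}}\norm{\rho_f}_{L^p} + \norm{(\nabla K)\mathbbm{1}_{\{\abs{x}>1\}}}_{L^{q'}}\norm{\rho_f}_{L^q}$, which is \eqref{eq: auxiliary estimates electric field 1}. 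For \eqref{eq: auxiliary estimates electric field 4} I would instead use $\abs{E(x)} \le C\abs{\gamma}\,(\abs{\cdot}^{-(a+1)} * \rho_f)(x)$ and apply the Hardy--Littlewood--Sobolev inequality with Riesz exponent $\alpha = d - (a+1) \in [1,d)$; the hypothesis $d/(a+1) < q < \infty$ makes the relation $1 + 1/q = (a+1)/d + 1/p$ define an exponent with $1 < p < q < \infty$, so HLS yields $\norm{E}_{L^q} \le C\abs{\gamma}\norm{\rho_f}_{L^p}$. The second inequality in \eqref{eq: auxiliary estimates electric field 4} is then Proposition~\ref{prop:estimate velocity moments} applied with $b = d(p-1)$, for which $(b+d)/d = p$, $b/(b+d) = (p-1)/p$ and $d/(b+d) = 1/p$.

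The estimate \eqref{eq: auxiliary estimates electric field 2} is the delicate one and, I expect, the main obstacle. Distributionally $\nabla E = -(\nabla^2 K) * \rho_f$. When $a < d-2$ one has $\abs{\nabla^2 K(x)} \le C\abs{\gamma}\,\abs{x}^{-(a+2)}$ with $a+2 < d$ locally integrable, and splitting at $\abs{x}=1$ already yields $\norm{\nabla E}_{L^\infty} \le C\abs{\gamma}(\norm{\rho_f}_{L^\infty} + \norm{\rho_f}_{L^1})$, a fortiori \eqref{eq: auxiliary estimates electric field 2}. When $a = d-2$ (the Coulomb, resp.\ logarithmic, case), $\nabla^2 K$ is a principal--value Calderón--Zygmund kernel homogeneous of degree $-d$ with vanishing mean over spheres, plus a multiple of $\delta_{ij}\delta_0$; the point--mass term contributes $\lesssim \abs{\gamma}\norm{\rho_f}_{L^\infty}$. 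For the principal--value part I would split the integral over $\{\abs{x-y}\le r\}$, $\{r < \abs{x-y}\le 1\}$ and $\{\abs{x-y}>1\}$: on the innermost region the vanishing mean lets me replace $\rho_f(y)$ by $\rho_f(y)-\rho_f(x)$ and use $\abs{\rho_f(y)-\rho_f(x)} \le \norm{\nabla\rho_f}_{L^\infty}\abs{x-y}$, producing $C\abs{\gamma}\norm{\nabla\rho_f}_{L^\infty}\,r$; the middle region contributes $C\abs{\gamma}\norm{\rho_f}_{L^\infty}\ln(1/r)$ from the logarithmic divergence of $\int_{r<\abs{z}\le 1}\abs{z}^{-d}\,dz$; the outer region contributes $C\abs{\gamma}\norm{\rho_f}_{L^1}$. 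Optimizing $r = \min\{1,\norm{\nabla\rho_f}_{L^\infty}^{-1}\}$ balances the first two terms and yields the logarithmic loss in \eqref{eq: auxiliary estimates electric field 2}; this is the only place where $a \le d-2$ enters sharply.

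Finally, \eqref{eq: auxiliary estimates electric field 3} follows by combining the previous estimates with the observation that derivatives may be moved onto $\rho_f$. Since $\norm{E}_{W_0^{\sigma,\infty}} = \max_{\abs{\alpha}\le\sigma}\norm{D_x^\alpha E}_{L^\infty}$ and $D_x^\alpha E = -(\nabla K)*(D_x^\alpha\rho_f)$, and since $\rho_f(x) = \int f(x,v)\,dv$ gives $\abs{D_x^\alpha\rho_f(x)} \le \langle x\rangle^{-n}\norm{f}_{W_n^{\sigma,\infty}(\mathbb{R}^{2d})}\int_{\mathbb{R}^d}\langle v\rangle^{-n}\,dv$, the hypothesis $n>2d$ ensures $\langle\cdot\rangle^{-n}\in L^r(\mathbb{R}^d)$ for all $1\le r\le\infty$, hence $\norm{D_x^\alpha\rho_f}_{L^r(\mathbb{R}^d)} \le C\norm{f}_{W_n^{\sigma,\infty}(\mathbb{R}^{2d})}$; plugging this into \eqref{eq: auxiliary estimates electric field 1} produces the first term of the minimum. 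For the second term, when $\abs{\alpha}\ge 1$ I would write $D_x^\alpha = D_x^{\alpha'}\partial_{x_i}$ with $\abs{\alpha'} \le \sigma-1$, so that $D_x^\alpha E = -(\nabla^2 K)*(D_x^{\alpha'}\rho_f)$, apply the argument behind \eqref{eq: auxiliary estimates electric field 2} to $D_x^{\alpha'}\rho_f$ in place of $\rho_f$ (its proof does not use positivity), and bound $\norm{D_x^{\alpha'}\rho_f}_{L^1\cap L^\infty(\mathbb{R}^d)} \le C\norm{f}_{W_n^{\sigma-1,\infty}(\mathbb{R}^{2d})}$ together with $\norm{\nabla D_x^{\alpha'}\rho_f}_{L^\infty(\mathbb{R}^d)} \le C\norm{f}_{W_n^{\sigma,\infty}(\mathbb{R}^{2d})}$ as above; for $\abs{\alpha}<\sigma$ the first bound is already dominated by the second term, and taking the maximum over $\abs{\alpha}\le\sigma$ finishes the proof.
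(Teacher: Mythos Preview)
Your proposal is correct and follows essentially the same route as the paper: the same near/far splitting plus Young for \eqref{eq: auxiliary estimates electric field 1}; the same case distinction $a<d-2$ (direct) versus $a=d-2$ (the log-Lipschitz Calder\'on--Zygmund argument, which the paper simply cites) for \eqref{eq: auxiliary estimates electric field 2}; shifting derivatives onto $\rho_f$ and reusing \eqref{eq: auxiliary estimates electric field 1}--\eqref{eq: auxiliary estimates electric field 2} for \eqref{eq: auxiliary estimates electric field 3}; and weak Young/HLS followed by Proposition~\ref{prop:estimate velocity moments} for \eqref{eq: auxiliary estimates electric field 4}. One small slip: in your weighted bound $\abs{D_x^\alpha\rho_f(x)}\le\langle x\rangle^{-n}\norm{f}_{W_n^{\sigma,\infty}}\int\langle v\rangle^{-n}\,dv$, the phase-space weight $\langle z\rangle^{-n}$ does \emph{not} dominate $\langle x\rangle^{-n}\langle v\rangle^{-n}$ but only $\langle x\rangle^{-n/2}\langle v\rangle^{-n/2}$ (via $\langle z\rangle^2\ge\langle x\rangle\langle v\rangle$); this is precisely why the hypothesis reads $n>2d$ rather than $n>d$, and with the corrected exponents your conclusion stands.
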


\begin{proof}[Proof of Lemma \ref{lemma:auxiliary estimates}]
The first inequality is obtained by the splitting $K:=K\id_{|y|\leq 1}+K\id_{|y|>1}$ and Young's inequality. 
Similarly, one derives 
$
\norm{\nabla E}_{L^\infty(\mathbb{R}^d)}
\leq C \abs{\gamma} \big( \norm{\rho_f}_{L^\infty(\mathbb{R}^d)} + \norm{\rho_f}_{L^1(\mathbb{R}^d)} \big) 
$
for $a \in (-1, d-2)$ because
$\abs{\partial_{x_i} \partial_{x_j} K(x)} \leq  C \abs{\gamma} \frac{1}{\abs{x}^{a+2}} $. 
Inequality \eqref{eq: auxiliary estimates electric field 2} with $a = d-2$ is proven in \cite[p.83]{G2013}.  Note that
$D_x^{\alpha} E
= - \left( \nabla K \right) * D_x^{\alpha} \rho_f$. Together with \eqref{eq: auxiliary estimates electric field 2}
we get
\begin{align}
\norm{\nabla D_x^{\alpha} E}_{L^\infty(\mathbb{R}^d)} 
&\leq C \abs{\gamma} \left( 1 + \norm{D_x^{\alpha} \rho_f}_{L^1(\mathbb{R}^d)} + \norm{D_x^{\alpha} \rho_f}_{L^\infty(\mathbb{R}^d)} \left[ 1 + \ln \left( 1 + \norm{\nabla D_x^{\alpha} \rho_f}_{L^\infty(\mathbb{R}^d)} \right) \right] \right) .
\end{align}
Since $n > 2d$ we have  
\begin{align}
\norm{D_x^{\alpha} \rho_f}_{L^\infty(\mathbb{R}^d)} &\leq  \sup_{x} \int_{\mathbb{R}^d} \abs{D_x^{\alpha} f(t,x,v)}\,dv 
\leq  \norm{f}_{W_{n/2}^{\abs{\alpha}, \infty}(\mathbb{R}^{2d})} \int_{\mathbb{R}^d} \left< v \right>^{- \frac{n}{2}}\,dv
\leq C   \norm{f}_{W_{n/2}^{\abs{\alpha}, \infty}(\mathbb{R}^{2d})}  
\end{align}
and 
\begin{align}
\norm{D_x^{\alpha} \rho_f}_{L^1(\mathbb{R}^d)}
&= \int_{\mathbb{R}^d}  \left|{ \int_{\mathbb{R}^d}  D_x^{\alpha} f(t,x,v)\,dv}\right|\,dx
\leq 
\norm{f}_{W_n^{\abs{\alpha}, \infty}(\mathbb{R}^{2d})} \int_{\mathbb{R}^{2d}}  \left< z \right>^{-n}\,dz
\leq C  \norm{f}_{W_n^{\abs{\alpha}, \infty}(\mathbb{R}^{2d})}  .
\end{align}
In total, this shows
\begin{align}
\norm{E}_{W_0^{\sigma, \infty}(\mathbb{R}^d)}
&\leq C \abs{\gamma} \Big( 1 + \norm{f}_{W_n^{\sigma -1, \infty}(\mathbb{R}^{2d})} \Big) \Big( 1 +  \ln \big( 1 + \norm{f}_{W_n^{\sigma, \infty}(\mathbb{R}^{2d})} \big) \Big)  .
\end{align}
By similar means and \eqref{eq: auxiliary estimates electric field 1} one obtains
$\norm{E}_{W_0^{\sigma, \infty}(\mathbb{R}^d)}
\leq C \abs{\gamma}  \norm{f}_{W_n^{\sigma, \infty}(\mathbb{R}^{2d})} $.
In order to show the first inequality of \eqref{eq: auxiliary estimates electric field 4} we use that  $\abs{\nabla K}(x) \leq  \abs{\gamma}\abs{a} \abs{x}^{-(a+1)} \in L^{\frac{d}{a+1}, \infty}(\mathbb{R}^d)$, where $L^{p,\infty}(\mathbb{R}^d)$ denotes the weak-$L^p$ space of all measurable functions $f$ such that
$
\sup_{\alpha > 0} \left\{ \alpha \left( \int_{\mathbb{R}^d}  \id_{\abs{f(x)} > \alpha}\,dx \right)^{\frac{1}{p}} \right\} < \infty$
 (see, e.g., \cite[p. 106]{LL2001}). By means of the weak Young inequality we obtain
\begin{align}
\norm{E}_{L^q(\mathbb{R}^d)}
&= \norm{\nabla K * \rho_f}_{L^q(\mathbb{R}^d)}
\leq C \abs{\gamma} \norm{\rho_f}_{L^p(\mathbb{R}^d)}
\end{align}
where $1 + \frac{1}{q} = \frac{a+1}{d} + \frac{1}{p}$. Note that $p >1$ because $q > \frac{d}{a+1}$. Writing $p$ as $p = \frac{b+d}{d}$ with $b >0$ and applying Proposition \ref{prop:estimate velocity moments} leads to the second inequality in \eqref{eq: auxiliary estimates electric field 4}.
\end{proof}

\begin{proposition}
\label{proposition: Conservation laws relativistic Vlasov}
For a solution $f$ of \eqref{eq:relativistic Vlasov} with initial datum $f(0) = f_0 \geq 0$  the energy 
\begin{align}
\mathcal{E}[f(t)] =  \int_{\mathbb{R}^d \times \mathbb{R}^d}  \left< v \right> f(t,x,v)\,dx \, dv + \frac{1}{2} \int_{\mathbb{R}^d \times \mathbb{R}^d} K(x-y) \rho_f(t,x) \rho_f(t,y)\,dx \, dy 
\end{align} is conserved during the time evolution, i.e. $\mathcal{E}[f(t)] = \mathcal{E}[f_0]$ for all $t \in \mathbb{R}$.
Moreover,
$\norm{\rho_f(t)}_{L^1(\mathbb{R}^d)} = \norm{\rho_{f_0}}_{L^1(\mathbb{R}^d)}$ and $\norm{f(t)}_{L^p(\mathbb{R}^d)} = \norm{f_0}_{L^p(\mathbb{R}^d)}$ for all $p \in [1, \infty]$ and $f_0 \geq 0$ implies $f(t) \geq 0$ for all $t \in \mathbb{R}$.
\end{proposition}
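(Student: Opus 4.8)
The plan is to verify the three conservation-type statements—conservation of the energy $\mathcal{E}[f(t)]$, conservation of the mass $\norm{\rho_f(t)}_{L^1}$ and of all $L^p$ norms of $f$, and propagation of nonnegativity—by exploiting the transport (characteristic) structure of \eqref{eq:relativistic Vlasov}. All three are formal consequences of the equation being a divergence-free transport equation in phase space; the work is in writing them cleanly and noting why the vector field is well-behaved. I would first record that the relativistic Vlasov equation can be written as $\partial_t f + \nabla_z \cdot (b(t,z) f) = 0$ with $z=(x,v)$ and $b(t,z) = \bigl( v/\sqrt{1+v^2},\, E(t,x) \bigr)$, and that $\nabla_z \cdot b = \nabla_x \cdot \bigl(v/\sqrt{1+v^2}\bigr) + \nabla_v \cdot E(t,x) = 0$, since the first component depends only on $v$ and the second only on $x$. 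Hence the characteristic flow $(X,V)(t;s,x,v)$ solving $\dot X = V/\sqrt{1+V^2}$, $\dot V = E(t,X)$ is measure-preserving on $\mathbb{R}^{2d}$, and $f(t,z) = f_0\bigl(Z(0;t,z)\bigr)$ along characteristics.

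Given this, I would argue as follows. \emph{Nonnegativity and $L^p$ conservation:} since $f$ is transported along a measure-preserving flow, $f(t,\cdot)$ is a measure-preserving rearrangement of $f_0$, so $f_0 \ge 0$ implies $f(t) \ge 0$ and $\norm{f(t)}_{L^p(\mathbb{R}^{2d})} = \norm{f_0}_{L^p(\mathbb{R}^{2d})}$ for every $p \in [1,\infty]$; equivalently one multiplies the equation by $p f^{p-1}$ (or $\beta'(f)$ for a general convex $\beta$), integrates, and uses the divergence-free property together with decay at infinity to kill the boundary term. \emph{Mass conservation:} integrating \eqref{eq:relativistic Vlasov} in $v$ gives the local conservation law $\partial_t \rho_f + \nabla_x \cdot \int v/\sqrt{1+v^2}\, f\, dv = 0$ (the $E\cdot\nabla_v f$ term integrates to zero in $v$), and integrating in $x$ yields $\frac{d}{dt}\norm{\rho_f(t)}_{L^1} = 0$. \emph{Energy conservation:} differentiate $\mathcal{E}[f(t)]$ in time. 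For the kinetic part, $\frac{d}{dt}\int \langle v\rangle f\, dx\, dv = \int \langle v\rangle \partial_t f = -\int \langle v\rangle\bigl( v/\sqrt{1+v^2}\cdot\nabla_x f + E\cdot\nabla_v f\bigr)$; the $\nabla_x$ term vanishes after integrating by parts in $x$, and the $\nabla_v$ term gives $\int (\nabla_v \langle v\rangle)\cdot E\, f = \int (v/\sqrt{1+v^2})\cdot E\, f$, since $\nabla_v\langle v\rangle = v/\langle v\rangle = v/\sqrt{1+v^2}$. For the potential part, symmetry of $K(x-y)$ gives $\frac{d}{dt}\tfrac12\iint K\rho_f\rho_f = \iint K(x-y)\partial_t\rho_f(x)\rho_f(y)\, dx\, dy = -\iint K(x-y)\,\nabla_x\!\cdot\!\!\int \tfrac{v}{\sqrt{1+v^2}} f(x,v)\,dv\; \rho_f(y)\, dx\, dy$, and integrating by parts in $x$ produces $\iint \nabla K(x-y)\rho_f(y)\, dy \cdot \int \tfrac{v}{\sqrt{1+v^2}} f(x,v)\, dv\, dx = -\int E(x)\cdot\int \tfrac{v}{\sqrt{1+v^2}} f(x,v)\, dv\, dx$, which exactly cancels the kinetic contribution. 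Hence $\frac{d}{dt}\mathcal{E}[f(t)] = 0$.

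The main obstacle is not the algebra but the justification of the integrations by parts and of differentiation under the integral sign: one needs enough decay and regularity of $f$ (and integrability of $E$, $\nabla K$) for the boundary terms to vanish and for the manipulations with the singular kernel $K$ to be legitimate. On the regularity class in which we work—$f \in L^\infty_{\rm loc}(\mathbb{R}_+, W^{\sigma,\infty}_n \cap H^\sigma_k)$ with the associated density bounds from Corollary~\ref{cor:estimates density} and Proposition~\ref{prop:regularity solution Vlasov}, together with the field estimates of Lemma~\ref{lemma:auxiliary estimates}—all of these are controlled: polynomial weights give the spatial and velocity decay needed, and $E \in L^\infty$ plus $\rho_f \in L^1 \cap L^\infty$ handles the potential term (the self-interaction integral $\iint K(x-y)\rho_f(x)\rho_f(y)$ is finite since $K$ is locally integrable in $d\in\{2,3\}$ for $a \le d-2$ and $\rho_f \in L^1\cap L^\infty$). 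For lower-regularity weak solutions one instead runs the argument on the characteristic flow, which only requires $b$ to be, say, locally Lipschitz in $z$ uniformly on compact time intervals—again guaranteed by $E \in L^\infty_{\rm loc}(\mathbb{R}_+, W^{1,\infty})$. I would therefore present the proof via the measure-preserving characteristic flow for the $L^p$ and sign statements, and via the direct (justified) integration-by-parts computation for the mass and energy identities, remarking that the required decay is supplied by the function spaces in which our solutions live.
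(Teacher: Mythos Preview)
Your proposal is correct and follows essentially the same approach as the paper: the paper's own proof simply states that the energy conservation is a straightforward calculation and that the remaining identities (nonnegativity, $L^p$- and mass-conservation) hold because $f$ is constant along a Lebesgue-measure-preserving flow, which is exactly the divergence-free transport / characteristics argument you have written out in detail.
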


\begin{proof}
The conservation of the energy is obtained by a straightforward calculation. The remaining relations hold because $f$ is constant along a Lebesgue measure preserving flow.
\end{proof}

Moreover, we will use the following results stating that the potential energy is dominated by the kinetic energy if $0 < a < 1$ and $f \in L^1 \cap L^{\infty} \left( \mathbb{R}^3 \times \mathbb{R}^3 \right)$.

\begin{lemma}
\label{eq:lemma estimate kinetic versus total energy}
Let $d=3$, $\gamma\in\mathbb{R}$, $0 < a <1$ and $K$ be defined as in \eqref{eq:definition potential}. Moreover, let $f \coloneqq f(x,v) \geq 0$ such that $f \in L^1 \cap L^{\infty} \left( \mathbb{R}^3 \times \mathbb{R}^3 \right)$ and $\mathcal{E}[f]$ be defined as in Proposition \ref{proposition: Conservation laws relativistic Vlasov}. Then there exists a constant $C$ depending only on $a$, $\abs{\gamma}$ and $\norm{f}_{L^1 \cap L^{\infty} (\mathbb{R}^6)}$  such that
\begin{align}
\int_{\mathbb{R}^3 \times \mathbb{R}^3}  \left< v \right> f(x,v)\,dx\,dv - C \leq 2 \mathcal{E}[f]
\leq 3 \int_{\mathbb{R}^3 \times \mathbb{R}^3} \left< v \right> f(x,v)\,dx\,dv + C .
\end{align}
\end{lemma}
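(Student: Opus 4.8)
The plan is to control the potential energy term
\[
\Pi[f] = \frac12 \int_{\mathbb{R}^3\times\mathbb{R}^3} K(x-y)\,\rho_f(x)\,\rho_f(y)\,dx\,dy
\]
by a small multiple of the kinetic energy $T[f] = \int \langle v\rangle f\,dx\,dv$ plus a constant depending only on $a$, $|\gamma|$ and $\|f\|_{L^1\cap L^\infty}$, from which both inequalities follow immediately (recalling $2\mathcal{E}[f] = 2T[f] + 2\Pi[f]$ and noting that the lower bound in fact needs only the crude estimate $\Pi[f] \geq -\tfrac14 T[f] - \tfrac{C}{2}$ while the upper bound needs $\Pi[f] \leq \tfrac12 T[f] + \tfrac{C}{2}$; here $\gamma$ may have either sign, so one really must bound $|\Pi[f]|$).

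First I would reduce to estimating $|\Pi[f]|$, which by $|K(x)| \leq |\gamma|\,|x|^{-a}$ is dominated (up to the constant $|\gamma|$) by the Riesz-type energy $\int |x-y|^{-a}\rho_f(x)\rho_f(y)\,dx\,dy$. By the Hardy–Littlewood–Sobolev inequality this is bounded by $C\|\rho_f\|_{L^r(\mathbb{R}^3)}^2$ with $r$ determined by $2/r = 1 + a/3$, i.e. $r = 6/(3+a)$, which lies in $(1,2)$ for $0<a<1$. Next I would interpolate $\|\rho_f\|_{L^r}$ between $\|\rho_f\|_{L^1} = \|f\|_{L^1}$ (conserved, hence controlled by the hypothesis) and a higher $L^p$ norm of $\rho_f$ controlled by a velocity moment: by Proposition~\ref{prop:estimate velocity moments}, for any $b>0$,
\[
\|\rho_f\|_{L^{(b+3)/3}(\mathbb{R}^3)} \leq C\|f\|_{L^\infty(\mathbb{R}^6)}^{\frac{b}{3+b}}\left(\int_{\mathbb{R}^3\times\mathbb{R}^3}|v|^b f(x,v)\,dx\,dv\right)^{\frac{3}{3+b}}.
\]
Choosing $b=1$ gives control of $\|\rho_f\|_{L^{4/3}}$ by $M_1[f]^{3/4}$, and since $M_1[f] \leq T[f]$, this ties everything back to the kinetic energy. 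Interpolating $L^r$ between $L^1$ and $L^{4/3}$ (possible since $1 < r < 4/3$ exactly when $0 < a < 1$) yields $\|\rho_f\|_{L^r} \leq C\, M_1[f]^{\theta}$ for some $\theta \in (0,1)$ depending on $a$, with the constant absorbing $\|f\|_{L^1\cap L^\infty}$.

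Putting this together, $|\Pi[f]| \leq C\, T[f]^{2\theta}$ with $2\theta < 1$, and then Young's inequality $C\, T[f]^{2\theta} \leq \e\, T[f] + C_\e$ with $\e = \tfrac14$ (for the lower bound) and $\e = \tfrac12$ (for the upper bound) closes the argument, with $C_\e$ depending only on $a$, $|\gamma|$, $\|f\|_{L^1\cap L^\infty}$. The main obstacle, though entirely routine, is to track carefully that the exponent $r = 6/(3+a)$ produced by HLS genuinely sits strictly between $1$ and the interpolation endpoint $4/3$ for every $a \in (0,1)$ — equivalently that $2\theta < 1$ so that the Young step is legitimate — and to make sure the borderline behaviour as $a \to 1^-$ (where $r \to 3/2$ and $2\theta \to 1$) does not blow up the constant, which it does not since we stay strictly inside the admissible range for fixed $a<1$; the case $a$ close to $0$ is harmless as the singularity becomes integrable. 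A minor additional point is that all the $L^p$ norms of $\rho_f$ appearing are finite under the stated hypotheses $f \in L^1\cap L^\infty$ together with (implicitly) finiteness of $T[f]$, so the manipulations are justified.
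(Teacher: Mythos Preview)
Your approach is essentially the same as the paper's: Hardy--Littlewood--Sobolev, then a velocity-moment bound on $\rho_f$, then interpolation, then Young's inequality. The only difference is cosmetic: the paper applies Proposition~\ref{prop:estimate velocity moments} directly at the HLS exponent and then interpolates the resulting velocity moment between $M_0$ and $M_1$ via Lemma~\ref{lemma:definition and interpolation of velocity moments}, whereas you interpolate $\|\rho_f\|_{L^r}$ between $L^1$ and $L^{4/3}$ first and then apply Proposition~\ref{prop:estimate velocity moments} with $b=1$. Both routes give the same final exponent $|\Pi[f]| \leq C\, M_1[f]^{a}$ with $a<1$.

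There is, however, a computational slip in your HLS step. The sharp form of HLS in $\mathbb{R}^3$ reads $\frac{2}{r} + \frac{a}{3} = 2$, i.e.\ $r = \frac{6}{6-a}$, not $\frac{2}{r} = 1 + \frac{a}{3}$ and $r = \frac{6}{3+a}$. With your stated formula one gets $r \in (3/2,2)$ for $a \in (0,1)$, which contradicts your own subsequent (and correct) claim that $1 < r < 4/3$; with the correct exponent $r = 6/(6-a)$ one has $r \in (1, 6/5) \subset (1, 4/3)$, and your interpolation between $L^1$ and $L^{4/3}$ is then legitimate. The remark that $r \to 3/2$ as $a \to 1^-$ should likewise read $r \to 6/5$. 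Once this is fixed, your argument is complete and matches the paper (indeed your $2\theta$ equals $a$, exactly the exponent the paper obtains).
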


\begin{remark}
Note that in the relativistic setting we experience a loss of control in the velocity moments. In fact, it is possible to control only the first moment in velocity by means of the energy of the system, whereas in the non-relativistic setting the kinetic energy controls the second velocity moment. 
\end{remark}

\begin{proof}[Proof of Lemma \ref{eq:lemma estimate kinetic versus total energy}]
By means of the Hardy-Littlewood-Sobolev inequality (see e.g. \cite[Chapter 4.3]{LL2001}) there exists a constant $C >0$ (depending on $a$) such that
\begin{align}
 \int_{\mathbb{R}^3 \times \mathbb{R}^3}  \abs{x-y}^{-a} \rho_f(t,x) \rho_f(t,y)\,dx\,dy 
&\leq C \norm{\rho_f}_{L^p(\mathbb{R}^3)}^2
\quad \text{with} \; \; p = \frac{6}{6-a} .
\end{align} 
Using the second inequality of \eqref{eq: auxiliary estimates electric field 4} we obtain
\begin{align}
  \int_{\mathbb{R}^3 \times \mathbb{R}^3}  \abs{x-y}^{-a} \rho_f(t,x) \rho_f(t,y)\,dx\,dy
&\leq C \norm{f}_{L^{\infty}(\mathbb{R}^{6})}^{\frac{2(p-1)}{p}}
\left( \int_{\mathbb{R}^3 \times \mathbb{R}^3} \abs{v}^{3 (p-1)} f(x,v)\,dx\,dv \right)^{\frac{2}{p}} .
\end{align} 
Note that $3 (p-1) = \frac{3a}{6-a} < \frac{3}{5}$ for all $0 < a <1$. By means of Lemma \ref{lemma:definition and interpolation of velocity moments} we estimate 
\begin{align}
 \int_{\mathbb{R}^3 \times \mathbb{R}^3}   \abs{x-y}^{-a} \rho_f(t,x) \rho_f(t,y)\,dx\,dy 
&\leq C \norm{f}_{L^{\infty}(\mathbb{R}^{6})}^{\frac{2(p-1)}{p}}
\norm{f}_{L^1(\mathbb{R}^{6})}^{\frac{8 - 6 p}{p}}
\left( \int_{\mathbb{R}^3 \times \mathbb{R}^3} \abs{v} f(x,v)\,dx\,dv \right)^{\frac{6(p-1)}{p}}
\nonumber \\
&\leq C \norm{f}_{L^1 \cap L^{\infty}(\mathbb{R}^{6})}^{\frac{6 - 4p}{p}}
\left( \int_{\mathbb{R}^3 \times \mathbb{R}^3} \left< v \right> f(x,v)\,dx\,dv \right)^{\frac{6(p-1)}{p}} .
\end{align}
Since $\frac{6 (p-1)}{p} = a < 1$ the result follows from Young's inequality for products.
\end{proof}

In the next proposition we collect some estimates we will use in Appendix~\ref{section:Well-posedness of the Hartree-Fock equations}.

\begin{proposition}

Let $s \geq 1/2$, $0 < a \leq 1$ and $K(x) = \gamma \frac{1}{\abs{x}^a}$. Then,
\begin{align}
\label{eq:L-infty estimate on convoluted potential}
\norm{K * (fg)}_{L^{\infty}(\mathbb{R}^3)}
&\leq C \abs{\gamma} \norm{f}_{H^{1/2}(\mathbb{R}^3)} \norm{g}_{H^{1/2}(\mathbb{R}^3)}
\quad \text{for all} \quad  f,g \in H^{1/2}(\mathbb{R}^3), 
\\
\label{eq:local well posedness finite rank system estimate for lipschitz property}
\norm{\big( K * (fg) \big) h}_{H^s(\mathbb{R}^3)}
&\leq C \abs{\gamma} \Big[
\norm{f}_{H^s(\mathbb{R}^3)} 
\norm{g}_{H^{\frac{1}{2}}(\mathbb{R}^3)}
\norm{h}_{H^{\frac{1}{2}}(\mathbb{R}^3)}
+
\norm{f}_{H^{\frac{1}{2}}(\mathbb{R}^3)} 
\norm{g}_{H^s(\mathbb{R}^3)}
\norm{h}_{H^{\frac{1}{2}}(\mathbb{R}^3)}
\nonumber \\
&\qquad \quad  + 
\norm{f}_{H^{\frac{1}{2}}(\mathbb{R}^3)} 
\norm{g}_{H^{\frac{1}{2}}(\mathbb{R}^3)}
\norm{h}_{H^s(\mathbb{R}^3)}
\Big]
\quad \text{for all} \quad  f,g,h \in H^s(\mathbb{R}^3).
\end{align}
For $i \in \{1,2\}$, $\omega_{i} \in \textfrak{S}^{1,\frac{1}{2}} \left( L^2(\mathbb{R}^3) \right)$, $\rho_{\omega_i}(x) = N^{-1} \omega_i(x,x)$
and $X_i(x;y) = N^{-1} K(x-y) \omega_i(x;y)$ we have
\begin{align}
\label{eq:Hatree local well-posedness auxiliary estimates 3}
\left|{\int_{\mathbb{R}^3} (K * \rho_{\omega_1})(x) \rho_{\omega_2}(x)\,dx}\right| 
&\leq C \abs{\gamma} N^{-2} \left[ \tr \left( \sqrt{- \Delta} \abs{\omega_1}  \right) \tr \left( \sqrt{- \Delta} \abs{\omega_2}  \right) \right]^{\frac{a}{2}}
\Big( \norm{\omega_1}_{\textfrak{S}^1}  \norm{\omega_2}_{\textfrak{S}^1} \Big)^{\frac{2 - a}{2}} ,
\\
\label{eq:Hatree local well-posedness auxiliary estimates 4}
\abs{\tr \left( K * \rho_{\omega_1} \, \omega_1 \right)}
&\leq C \abs{\gamma} N^{-1} \left( \tr \left( \sqrt{- \Delta} \abs{\omega_1}  \right) \right)^{a}
\Big( \norm{\omega_1}_{\textfrak{S}^1}  \Big)^{2 - a} 
\\
\label{eq:Hatree local well-posedness auxiliary estimates 5}
\abs{\tr \left( X_1 \omega_2 \right) }
&\leq N \left|{\int_{\mathbb{R}^3} (K * \rho_{\abs{\omega_1}})(x) \rho_{\abs{\omega_2}}(x)\,dx}\right| 
= \Big| \tr \left( K * \rho_{\abs{\omega_1}} \abs{\omega_2} \right)  \Big|.
\end{align}
\end{proposition}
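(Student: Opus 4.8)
The plan is to reduce all five estimates to a small common set of tools: the Hardy--Littlewood--Sobolev and (weak) Young inequalities for the kernel $\abs{x}^{-a}$ --- equivalently, the $L^p\to L^q$ mapping properties of the Riesz potential $(-\Delta)^{-(3-a)/2}$, since in $\mathbb{R}^3$ one has $\abs{x}^{-a}*u=c_a\,(-\Delta)^{-(3-a)/2}u$ with $c_a>0$ for $0<a<3$ --- together with the Sobolev embeddings of $H^{1/2}(\mathbb{R}^3)$ and $H^s(\mathbb{R}^3)$, the fractional Leibniz (Kato--Ponce) rule, and, for \eqref{eq:Hatree local well-posedness auxiliary estimates 3}--\eqref{eq:Hatree local well-posedness auxiliary estimates 5}, a Lieb--Thirring-type control of the spatial density of a positive operator by its relativistic kinetic energy.

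For \eqref{eq:L-infty estimate on convoluted potential} I would split $K=K\id_{\abs{x}\le1}+K\id_{\abs{x}>1}$. The long-range part is bounded, so $\norm{(K\id_{\abs{x}>1})*(fg)}_{L^\infty}\le\abs{\gamma}\norm{fg}_{L^1}\le\abs{\gamma}\norm{f}_{L^2}\norm{g}_{L^2}$, while for $a<1$ the short-range part lies in $L^3(\mathbb{R}^3)$ and Young's inequality gives $\norm{(K\id_{\abs{x}\le1})*(fg)}_{L^\infty}\le\norm{K\id_{\abs{x}\le1}}_{L^3}\norm{fg}_{L^{3/2}}\le C\abs{\gamma}\norm{f}_{L^3}\norm{g}_{L^3}$; since $H^{1/2}(\mathbb{R}^3)\hookrightarrow L^2\cap L^3$, this closes the bound. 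The Coulomb endpoint $a=1$ is covered in the same way by using instead $\abs{x}^{-a}\in L^{3/a,\infty}(\mathbb{R}^3)$, the weak Young inequality, Hölder in Lorentz spaces and the sharp embedding $H^{1/2}(\mathbb{R}^3)\hookrightarrow L^{3,2}(\mathbb{R}^3)$ to place $fg$ in $L^{(3/a)',1}(\mathbb{R}^3)$. For \eqref{eq:local well posedness finite rank system estimate for lipschitz property} I would apply the fractional Leibniz rule to $\langle D\rangle^s\big((K*(fg))h\big)$, writing it schematically as a sum of terms $\big[K*\big((\langle D\rangle^{s_1}f)(\langle D\rangle^{s_2}g)\big)\big](\langle D\rangle^{s_3}h)$ with $s_1+s_2+s_3\le s$: the term with $s_3=s$ is bounded by $\norm{K*(fg)}_{L^\infty}\norm{h}_{H^s}$, hence by \eqref{eq:L-infty estimate on convoluted potential}, by $C\abs{\gamma}\norm{f}_{H^{1/2}}\norm{g}_{H^{1/2}}\norm{h}_{H^s}$, which is the last term on the right-hand side; for the remaining terms one commutes the derivatives past the convolution, absorbs the leftover fractional smoothing of $(-\Delta)^{-(3-a)/2}$ by Hardy--Littlewood--Sobolev (and weak Young with Lorentz spaces when $a=1$), and estimates the two factors not carrying the top derivatives by the Sobolev embeddings of $H^{1/2}(\mathbb{R}^3)$ and $H^s(\mathbb{R}^3)$; choosing the Hölder exponents so that exactly one of $f,g,h$ is measured in $H^s$ produces the three terms of \eqref{eq:local well posedness finite rank system estimate for lipschitz property}. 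A convenient case is $s<3-a$ (which already contains all $s<2$), where $\langle D\rangle^s(K*(fg))$ is, up to a bounded operator, the fractional integral $c_a\gamma\,(-\Delta)^{-(3-a-s)/2}(fg)$ and a single application of Hardy--Littlewood--Sobolev reduces matters to an $L^p$-bound on $fg$ (with $1/p=(7-2a-2s)/6$), which the Sobolev embeddings provide precisely because $a\le1$ and $s\ge1/2$; the remaining range of $s$ is handled by the same scheme organised along a Littlewood--Paley decomposition.

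For \eqref{eq:Hatree local well-posedness auxiliary estimates 3} I would first note $\abs{\omega_i(x;x)}\le\abs{\omega_i}(x;x)=N\rho_{\abs{\omega_i}}(x)$ pointwise, so that the left-hand side is at most $\abs{\gamma}\iint\abs{x-y}^{-a}\rho_{\abs{\omega_1}}(y)\,\rho_{\abs{\omega_2}}(x)\,dx\,dy$, which by the Hardy--Littlewood--Sobolev inequality in $\mathbb{R}^3$ is $\le C\abs{\gamma}\,\norm{\rho_{\abs{\omega_1}}}_{L^{6/(6-a)}}\norm{\rho_{\abs{\omega_2}}}_{L^{6/(6-a)}}$. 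Each factor is then estimated by interpolating between $\norm{\rho_{\abs{\omega_i}}}_{L^1}=N^{-1}\norm{\omega_i}_{\textfrak{S}^1}$ and $\norm{\rho_{\abs{\omega_i}}}_{L^{3/2}}\le CN^{-1}\tr\big(\sqrt{-\Delta}\,\abs{\omega_i}\big)$; the latter follows, writing $\abs{\omega_i}=\sum_j\mu_j\ketbr{\psi_j}$ with $\mu_j\ge0$, from the triangle inequality in $L^{3/2}$ and the sharp embedding $\dot H^{1/2}(\mathbb{R}^3)\hookrightarrow L^3(\mathbb{R}^3)$, since $\norm{\rho_{\abs{\omega_i}}}_{L^{3/2}}\le N^{-1}\sum_j\mu_j\norm{\psi_j}_{L^3}^2\le CN^{-1}\sum_j\mu_j\norm{\psi_j}_{\dot H^{1/2}}^2=CN^{-1}\tr\big(\sqrt{-\Delta}\,\abs{\omega_i}\big)$. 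The interpolation parameter sending $(L^1,L^{3/2})$ to $L^{6/(6-a)}$ equals $a/2$ and yields $\norm{\rho_{\abs{\omega_i}}}_{L^{6/(6-a)}}\le CN^{-1}\big(\tr(\sqrt{-\Delta}\,\abs{\omega_i})\big)^{a/2}\norm{\omega_i}_{\textfrak{S}^1}^{(2-a)/2}$, from which \eqref{eq:Hatree local well-posedness auxiliary estimates 3} follows. Estimate \eqref{eq:Hatree local well-posedness auxiliary estimates 4} is the special case $\omega_2=\omega_1$ of \eqref{eq:Hatree local well-posedness auxiliary estimates 3} together with $\tr\big((K*\rho_{\omega_1})\,\omega_1\big)=N\int_{\mathbb{R}^3}(K*\rho_{\omega_1})(x)\,\rho_{\omega_1}(x)\,dx$.

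Finally, for \eqref{eq:Hatree local well-posedness auxiliary estimates 5} I would expand $\omega_i=\sum_j\lambda_j^{(i)}\ketbr{\psi_j^{(i)}}$ and use the Cauchy--Schwarz inequality in the index $j$ to obtain the pointwise bound $\abs{\omega_i(x;y)}\le\big(\abs{\omega_i}(x;x)\,\abs{\omega_i}(y;y)\big)^{1/2}=N\big(\rho_{\abs{\omega_i}}(x)\,\rho_{\abs{\omega_i}}(y)\big)^{1/2}$; inserting this into $\tr(X_1\omega_2)=N^{-1}\iint K(x-y)\,\omega_1(x;y)\,\omega_2(y;x)\,dx\,dy$ gives $\abs{\tr(X_1\omega_2)}\le N\abs{\gamma}\iint\abs{x-y}^{-a}\big(\rho_{\abs{\omega_1}}\rho_{\abs{\omega_2}}\big)^{1/2}\!(x)\,\big(\rho_{\abs{\omega_1}}\rho_{\abs{\omega_2}}\big)^{1/2}\!(y)\,dx\,dy$, and one further Cauchy--Schwarz against the positive measure $\abs{x-y}^{-a}\,dx\,dy$ together with the symmetry $x\leftrightarrow y$ bounds this by $N\abs{\gamma}\iint\abs{x-y}^{-a}\rho_{\abs{\omega_1}}(y)\,\rho_{\abs{\omega_2}}(x)\,dx\,dy=N\big|\int_{\mathbb{R}^3}(K*\rho_{\abs{\omega_1}})(x)\,\rho_{\abs{\omega_2}}(x)\,dx\big|=\big|\tr\big((K*\rho_{\abs{\omega_1}})\,\abs{\omega_2}\big)\big|$, which is \eqref{eq:Hatree local well-posedness auxiliary estimates 5}. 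The step I expect to be the real obstacle is \eqref{eq:local well posedness finite rank system estimate for lipschitz property}: tracking which Lebesgue (and, at the Coulomb endpoint $a=1$, Lorentz) exponents are admissible when $\langle D\rangle^s$ is distributed over the triple product is delicate and must be organised by cases according to the position of $s$ relative to $3-a$; the other four estimates are comparatively routine applications of the tools above.
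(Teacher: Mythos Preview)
Your proposal is correct and follows essentially the same route as the paper for all five estimates: Hardy--Littlewood--Sobolev and the Riesz-potential representation of $K$, fractional Leibniz for \eqref{eq:local well posedness finite rank system estimate for lipschitz property}, and the spectral-decomposition plus Cauchy--Schwarz argument for \eqref{eq:Hatree local well-posedness auxiliary estimates 3}--\eqref{eq:Hatree local well-posedness auxiliary estimates 5}. The only noteworthy difference is in \eqref{eq:L-infty estimate on convoluted potential}: the paper handles the short-range part uniformly in $a\in(0,1]$ by Cauchy--Schwarz and the Kato-type inequality $\sup_y\int|u(x)|^2|x-y|^{-1}\,dx\le C\|u\|_{H^{1/2}}^2$ (using $|x|^{-a}\le|x|^{-1}$ on $|x|\le1$), whereas you treat $a<1$ by $H^{1/2}\hookrightarrow L^3$ with ordinary Young and the endpoint $a=1$ via Lorentz embeddings and weak Young; both work, but the paper's argument avoids the Lorentz machinery. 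For \eqref{eq:local well posedness finite rank system estimate for lipschitz property} the paper's case split is $1/2\le s<3/2$ versus $s\ge3/2$ (handled by iterating Leibniz, not Littlewood--Paley), and for \eqref{eq:Hatree local well-posedness auxiliary estimates 3} the paper interpolates each eigenfunction between $L^2$ and $L^3$ and then applies H\"older in the eigenvalue index rather than interpolating the density between $L^1$ and $L^{3/2}$ as you do---but these are cosmetic rearrangements of the same estimate.
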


\begin{proof}
The first two inequalities are proven by similar estimates as in \cite{Le2007,  H2021}.
If we split the potential into two parts and use Young's inequality, we obtain 
\begin{align}
\norm{K * (fg)}_{L^\infty(\mathbb{R}^3)}
&\leq  \abs{\gamma} \norm{f}_{L^2(\mathbb{R}^3)} \norm{g}_{L^2(\mathbb{R}^3)}
+ \abs{\gamma} \norm{\left( \abs{\cdot}^{-a} \id_{\abs{\cdot} \leq 1} \right) * (fg)}_{L^\infty(\mathbb{R}^3)} .
\end{align}
By means of the Cauchy-Schwarz inequality and the estimate $sup_{y \in \mathbb{R}^3} \int_{\mathbb{R}^3} dx \frac{\abs{u(x)}^2}{\abs{x-y}} \leq C \norm{u}_{H^{1/2}(\mathbb{R}^3)}^2$ (see \cite[inequality (17)]{Le2007})
the second term can be bounded by 
\begin{align}
\norm{\left( \abs{\cdot}^{-a} \id_{\abs{\cdot} \leq 1} \right) * (fg)}_{L^\infty(\mathbb{R}^3)}
&\leq 
\left(  \sup_{y \in \mathbb{R}^3} \int_{\mathbb{R}^3} dx \, \frac{\abs{f(x)}^2 }{\abs{x-y}}  \right)^{1/2}
\left( \sup_{y \in \mathbb{R}^3} \int_{\mathbb{R}^3} dx \, \frac{\abs{g(x)}^2 }{\abs{x-y}}  \right)^{1/2}
\nonumber \\
&\leq C \norm{f}_{H^{1/2}(\mathbb{R}^3)} \norm{g}_{H^{1/2}(\mathbb{R}^3)} .
\end{align}

Now, let $f,g,h \in H^s(\mathbb{R}^3)$ and $\frac{1}{p} + \frac{1}{q} = \frac{1}{2}$. By the generalized Leibniz rule (see for example \cite[Lemma 5]{Le2007}) we estimate
\begin{align}
\norm{\big( K * (fg) \big) h}_{H^s(\mathbb{R}^3)}
&= \norm{\left( 1 - \Delta \right)^{s/2} \big( K * (fg) \big) h}_{L^2(\mathbb{R}^3)}
\nonumber \\
&\leq \norm{ \big( K * (fg) \big) h}_{L^2(\mathbb{R}^3)}
+ \norm{\left(  - \Delta \right)^{s/2} \big( K * (fg) \big) h}_{L^2(\mathbb{R}^3)}
\nonumber \\
&\leq
\norm{K * (fg)}_{L^{\infty}(\mathbb{R}^3)} \norm{h}_{L^2(\mathbb{R}^3)}
\nonumber \\
&\quad 
+ C \norm{\left(  - \Delta \right)^{s/2}  K * (fg)  }_{L^p(\mathbb{R}^3)} \norm{h}_{L^q(\mathbb{R}^3)}
+ C \norm{ K * (fg)}_{L^{\infty}(\mathbb{R}^3)} \norm{\left(  - \Delta \right)^{s/2} h}_{L^2(\mathbb{R}^3)}
\nonumber \\
&\leq
C \norm{ K * (fg)}_{L^{\infty}(\mathbb{R}^3)} \norm{h}_{H^s(\mathbb{R}^3)}
+ C \norm{\left(  - \Delta \right)^{s/2}  K * (fg)  }_{L^p(\mathbb{R}^3)} \norm{h}_{L^q(\mathbb{R}^3)}  .
\end{align}
The first summand is suitably bounded by means of \eqref{eq:L-infty estimate on convoluted potential}. 
In order to estimate the remaining terms we distinguish between the cases $1/2 \leq s < 3/2$ and $s \geq 3/2$.

\paragraph{The cases $1/2 \leq s < 3/2$:}
We choose $p = \frac{3}{s}$, $q = \frac{6}{3-2s}$ and recall the Sobolev inequality
$\norm{h}_{L^{\frac{6}{3-2s}}(\mathbb{R}^3)}  \leq C \norm{h}_{H^s(\mathbb{R}^3)}$. 
Using the representation of the Riesz potential $\left( - \Delta \right)^{(r-3)/2} (fg) = c_r \abs{\cdot}^{- r} * (fg)$ which holds for $-3 < r < 0$ and some $c_r \in \mathbb{R}$ 
the fact that $\abs{\cdot}^{- (s+a)} \in L^{\frac{3}{s+a},\infty}(\mathbb{R}^3)$ and the weak Young inequality we estimate
\begin{align}
\label{eq:general bound for Lipschitz property auxilliary estimate 1}
\norm{\left(  - \Delta \right)^{s/2}  K * (fg)  }_{L^{\frac{3}{s}}(\mathbb{R}^3)}
&= \abs{\gamma} c_a \norm{\left(  - \Delta \right)^{\frac{s+a -3}{2}}  (fg)  }_{L^{\frac{3}{s}}(\mathbb{R}^3)}
\nonumber \\
&\leq C \abs{\gamma}  \norm{\abs{\cdot}^{- (s+a)} *  (fg)  }_{L^{\frac{3}{s}}(\mathbb{R}^3)}
\nonumber \\
&\leq C \abs{\gamma} \norm{f g}_{L^{\frac{3}{3-a}}(\mathbb{R}^3)} .
\end{align}
Together with the Cauchy Schwarz and Sobolev's inequality this shows
\begin{align}
\norm{\left(  - \Delta \right)^{s/2}  K * (fg)  }_{L^{\frac{3}{s}}(\mathbb{R}^3)} \norm{h}_{L^{\frac{6}{3-2s}}(\mathbb{R}^3)}
&\leq C \abs{\gamma} \norm{f}_{L^{\frac{6}{3-a}}(\mathbb{R}^3)}
\norm{g}_{L^{\frac{6}{3-a}}(\mathbb{R}^3)} \norm{h}_{H^s(\mathbb{R}^3)}
\nonumber \\
&\leq C \abs{\gamma} \norm{f}_{H^{1/2}(\mathbb{R}^3)}
\norm{g}_{H^{1/2}(\mathbb{R}^3)}
\norm{h}_{H^s(\mathbb{R}^3)}
\end{align}
for all $1/2 \leq s < 3/2$.

\paragraph{The cases $s \geq 3/2$:}
We choose $p = 6$ and $q=3$. Note that 
$\norm{h}_{L^3(\mathbb{R}^3)} \leq C \norm{h}_{H^{1/2}(\mathbb{R}^3)}$
holds because of Sobolev's inequality. Using again the representation of the Riesz potential and Sobolev's inequality we obtain
\begin{align}
\norm{\left(  - \Delta \right)^{s/2}  K * (fg)  }_{L^6(\mathbb{R}^3)}
&\leq C \abs{\gamma}
\norm{\left(  - \Delta \right)^{ \frac{s+a - 3}{2}}  fg  }_{L^6(\mathbb{R}^3)}
\leq C \abs{\gamma}
\norm{\left(  - \Delta \right)^{ \frac{s+a -2}{2}}  fg  }_{L^2(\mathbb{R}^3)}  .
\end{align}
If $a \in (0, 1]$ such that $s+2 \leq 2$ we can proceed similar as in \eqref{eq:general bound for Lipschitz property auxilliary estimate 1} and obtain
\begin{align}
\norm{\left(  - \Delta \right)^{s/2}  K * (fg)  }_{L^6(\mathbb{R}^3)}
&\leq C \norm{fg}_{L^{\frac{6}{7-2(s+a)}}(\mathbb{R}^3)} .
\end{align}
Since $1 \leq \frac{6}{7- 2(s+a)} \leq 2$ for $\frac{3}{2} \leq s+ a \leq 2$ we have 
\begin{align}
\norm{fg}_{L^{\frac{6}{7-2(s+a)}}(\mathbb{R}^3)}
&\leq C \norm{fg}_{L^1 \cap L^2(\mathbb{R}^3)} \leq C \norm{f}_{L^2 \cap H^1(\mathbb{R}^3)} \norm{f}_{L^2 \cap H^{1/2}(\mathbb{R}^3)}
\leq C \norm{f}_{H^s(\mathbb{R}^3)} \norm{g}_{H^{1/2}(\mathbb{R}^3)}
\end{align}
by interpolation, H\"older's inequality and Sobolev's inequality.
Using the general Leibniz rule we get for $\mu \geq 0$
\begin{align}
\norm{\left( - \Delta \right)^{\mu/2} (fg)}_{L^2(\mathbb{R}^3)}
&\leq C \norm{\left( - \Delta \right)^{\mu/2} f}_{L^6(\mathbb{R}^3)} \norm{g}_{L^3(\mathbb{R}^3)}
+ C \norm{f}_{L^3(\mathbb{R}^3)} \norm{\left( - \Delta \right)^{\mu/2} g}_{L^6(\mathbb{R}^3)}
\nonumber \\
&\leq C \norm{\left( - \Delta \right)^{\frac{\mu + 1}{2}} f}_{L^2(\mathbb{R}^3)} \norm{g}_{H^{1/2}(\mathbb{R}^3)}
+ C \norm{f}_{H^{1/2}(\mathbb{R}^3)} \norm{\left( - \Delta \right)^{\frac{\mu + 1}{2}} g}_{L^2(\mathbb{R}^3)}.
\end{align}
For $s \geq 3/2$ and $0 < a  \leq 1$ such that $s+a \geq 2$ we consequently have
\begin{align}
\norm{\left(  - \Delta \right)^{s/2}  K * (fg)  }_{L^6(\mathbb{R}^3)}
&\leq C \abs{\gamma} \norm{f}_{H^{a+s-1}(\mathbb{R}^3)} \norm{g}_{H^{1/2}(\mathbb{R}^3)}
+ C \abs{\gamma} \norm{f}_{H^{1/2}(\mathbb{R}^3)} \norm{g}_{H^{a+s-1}(\mathbb{R}^3)}
\nonumber \\
&\leq C \abs{\gamma} \norm{f}_{H^{s}(\mathbb{R}^3)} \norm{g}_{H^{1/2}(\mathbb{R}^3)}
+ C \abs{\gamma} \norm{f}_{H^{1/2}(\mathbb{R}^3)} \norm{g}_{H^{s}(\mathbb{R}^3)}.
\end{align}
In total this shows
\begin{align}
\norm{\left(  - \Delta \right)^{s/2}  K * (fg)  }_{L^6(\mathbb{R}^3)} \norm{h}_{L^3(\mathbb{R}^3)}
&\leq 
C \abs{\gamma} \Big[ \norm{f}_{H^{s}(\mathbb{R}^3)} \norm{g}_{H^{1/2}(\mathbb{R}^3)}
+ \norm{f}_{H^{1/2}(\mathbb{R}^3)} \norm{g}_{H^{s}(\mathbb{R}^3)}
\Big] \norm{h}_{H^{1/2}(\mathbb{R}^3)}
\end{align}
for all $s \geq 3/2$.
Summing up, this shows \eqref{eq:local well posedness finite rank system estimate for lipschitz property}.

Next, we continue with \eqref{eq:Hatree local well-posedness auxiliary estimates 3}. Similarly as in the proof of Lemma \ref{eq:lemma estimate kinetic versus total energy} we use the Hardy-Littlewood-Sobolev inequality (see e.g. \cite[Chapter 4.3]{LL2001})  to estimate 
\begin{align}
\left|{\int_{\mathbb{R}^3}  (K* \rho_1)(x) \, \rho_2(x)\,dx}\right| 
&\leq C|\gamma| \norm{\rho_1}_{L^{\frac{6}{6-a}}(\mathbb{R}^3)} \norm{\rho_2}_{L^{\frac{6}{6-a}}(\mathbb{R}^3)},
\end{align}
where $C$ is a numerical constant depending only on $a$.
Let $\{ \lambda_j, \varphi_j \}_{j \in \mathbb{N}}$ be the spectral set of $\omega_1$. Then
\begin{align}
\norm{\rho_1}_{L^{\frac{6}{6-a}} (\mathbb{R}^3)}
&\leq N^{-1} \sum_{j \in \mathbb{N}} \abs{\lambda_j} \norm{\abs{\varphi_j}^2}_{L^{\frac{6}{6-a}}(\mathbb{R}^3)}
\end{align}
with $1 < \frac{6}{6-a} \leq \frac{6}{5}$. By the interpolation inequality we obtain
$\norm{\abs{\varphi_j}^2}_{L^{\frac{6}{6-a}}(\mathbb{R}^3)} 
\leq \norm{\varphi_j}_{L^2(\mathbb{R}^3)}^{2 - a} 
\norm{\varphi_j}_{L^{3}(\mathbb{R}^3)}^{a}$.
Using the Sobolev inequality for higher order fractional derivatives (\cite[Theorem 1.1]{CT2004})
\begin{align}
\label{eq: fractional Sobolev inequality}
\norm{f}_{L^{q}(\mathbb{R}^3)} \leq C \norm{\left( - \Delta \right)^{s/2} f}_{L^2(\mathbb{R}^3)}
\quad \text{where} \; s < \frac{3}{2}, q =\frac{6}{3-2s} \; \text{and} \; f \in H^s(\mathbb{R}^3)
\end{align}
together with H\"older's inequality ($p = \frac{2}{a}$ and $q = \frac{2}{2-a}$) we get
\begin{align}
\norm{\rho_1}_{L^{\frac{6}{6-a}} (\mathbb{R}^3)}
&\leq C N^{-1} 
\left( \sum_{j \in \mathbb{N}} \abs{\lambda_j} 
\norm{\left( - \Delta \right)^{1/4} \varphi_j}_{L^2(\mathbb{R}^3)}^2
\right)^{\frac{a}{2}}
\left( \sum_{j \in \mathbb{N}} \abs{\lambda_j} 
\norm{\varphi_j}_{L^2(\mathbb{R}^3)}^2
\right)^{\frac{2 - a}{2}}  
\nonumber \\
&\leq C N^{-1}
\left( \tr \left( \sqrt{- \Delta} \abs{\omega_1} \right) \right)^{\frac{a}{2}}
\left( \norm{\omega_1}_{\textfrak{S}^1} \right)^{\frac{2-a}{2}} .
\end{align} 
Plugging this expression into the estimate of the potential shows \eqref{eq:Hatree local well-posedness auxiliary estimates 3}. Inequality \eqref{eq:Hatree local well-posedness auxiliary estimates 4} is an immediate consequence of \eqref{eq:Hatree local well-posedness auxiliary estimates 3} and $\tr \left( K * \rho_1 \, \omega_1 \right) = N \int_{\mathbb{R}^3} K * \rho_1(x) \rho_1(x)\,dx$. In order to show \eqref{eq:Hatree local well-posedness auxiliary estimates 5} we use again the spectral decomposition $\{ \lambda_j , \varphi_j \}_{j \in \mathbb{N}}$ of $\omega_1$ and the Cauchy-Schwarz inequality to estimate
\begin{align}
\abs{\omega_1(x;y)} 
&\leq \left( \sum_{j \in \mathbb{N}} \abs{\lambda_j} \abs{\varphi_j(x)}^2 \right)^{1/2}
\left( \sum_{j \in \mathbb{N}} \abs{\lambda_j} \abs{\varphi_j(y)}^2 \right)^{1/2}
\leq \sqrt{\abs{\omega_1}(x;x)} \; \sqrt{\abs{\omega_1}(y;y)} .
\end{align}
Together with the analogue estimate for $\omega_2$ we obtain
\begin{align}
\abs{\tr \left( X_1 \omega_2 \right)}
&\leq N^{-1} \int_{\mathbb{R}^3 \times \mathbb{R}^3} \abs{K(x-y)} 
\Big( \abs{\omega_1}(x;x) \, \abs{\omega_1}(y;y) \, \abs{\omega_2}(x;x) \, \abs{\omega_2}(y;y) \Big)^{1/2}
\nonumber \\
&\leq  N^{-1} \int_{\mathbb{R}^3 \times \mathbb{R}^3} \abs{K(x-y)} 
\abs{\omega_1}(x;x) \, \abs{\omega_2}(y;y)
\nonumber \\
&= N^{-1}  \abs{\int_{\mathbb{R}^3 \times \mathbb{R}^3} K(x-y)
\abs{\omega_1}(x;x) \, \abs{\omega_2}(y;y)} .
\end{align}

\end{proof}

\section{Proofs of the results}\label{section:proofs}

\subsection{Derivation of the relativistic Vlasov equation}
\label{subsection:Derivation of the relativistic Vlasov equation}

In this subsection we provide the proof of Theorem \ref{theorem:derivation relativistic Vlasov from Hartree} on the accuracy of the use of the relativistic Vlasov equation as effective equation for the semi-relativistic Hartree equation. In order to achieve this, we use some estimates proved in \cite{DRS2017} and estimates for singular interaction potentials shown in \cite{LS2020}. For the sake of completeness, we report below the statements of the results we will use.\\ Note that within this section we will use both notations $\tilde{\rho}_t$ and $\tilde{\rho}(t)$ to denote the dependence of $\tilde{\rho}$ on the time variable.

\begin{lemma}[Theorem 4 in \cite{LS2020}]
\label{lemma:LS2020 commutator estimate potential and weyl quantization}
Let $d \in \{2,3\}$, $\omega \in \textfrak{S}^1 \left(L^2(\mathbb{R}^d) \right)$, $a \in \left( \frac{d}{2} - 2 , d- 2 \right]$, $\gamma\in\mathbb{R}$ and $K$ be defined as in \eqref{eq:definition potential}. Let ${\rm b} \coloneqq \frac{d}{a+1}$ so that $\nabla K \in L^{{\rm b}, \infty}(\mathbb{R}^d)$ and ${\rm b}'$ be the conjugated H\"older exponent of ${\rm b}$. Then for any $\mu \in (0, {\rm b}' -1 ]$, there exists a constant $C>0$ such that
\begin{align}
\sup_{z \in \mathbb{R}^d} \norm{\left[ K(\cdot - z) , \omega \right]}_{\textfrak{S}^1} 
&\leq C \abs{\gamma} \norm{\textnormal{diag} \left( \abs{\left[ x , \omega \right]} \right)}_{L^{{\rm b}' - \mu}}^{\frac{1}{2} + \widetilde{\mu}}
\norm{\textnormal{diag} \left( \abs{\left[ x , \omega \right]} \right)}_{L^{{\rm b}' + \mu}}^{\frac{1}{2} - \widetilde{\mu}} ,
\end{align}
for any $\widetilde{\mu} \in \left( 0, \frac{\mu}{2 {\rm b}'} \right)$. 
\end{lemma}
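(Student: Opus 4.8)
The plan is to reduce the commutator with the singular potential to the position commutator $[x,\omega]$ by a first order Taylor expansion, and then to convert the integrability $\nabla K\in L^{{\rm b},\infty}(\mathbb{R}^d)$ into the stated bound by a dyadic splitting of $\nabla K$ followed by a Schatten--Hölder interpolation; the spread between $L^{{\rm b}'-\mu}$ and $L^{{\rm b}'+\mu}$ is the price paid for $\nabla K$ lying only in \emph{weak}-$L^{{\rm b}}$. Concretely, writing the kernel of the commutator as $(K(x-z)-K(y-z))\,\omega(x;y)$ and using
\begin{align*}
K(x-z)-K(y-z)=\int_0^1 \nabla K\big(\tau x+(1-\tau)y-z\big)\cdot(x-y)\,d\tau ,
\end{align*}
one obtains $[K(\cdot-z),\omega]=\int_0^1 A_z^\tau\,d\tau$, where $A_z^\tau$ is the operator with kernel $\nabla K(\tau x+(1-\tau)y-z)\cdot[x,\omega](x;y)$ (equivalently, in Fourier variables, $[K(\cdot-z),\omega]=\int \widehat{\nabla K}(\xi)\,e^{-i\xi\cdot z}\int_0^1 e^{i\tau\xi\cdot x}[x,\omega]\,e^{i(1-\tau)\xi\cdot x}\,d\tau\,d\xi$). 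It then suffices to bound $\norm{A_z^\tau}_{\textfrak{S}^1}$ uniformly in $z$ and $\tau\in[0,1]$, and the $z$-uniformity is automatic because only the size of $\nabla K$, never its location, enters.

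Second, I would split $\nabla K=G_R^{\rm in}+G_R^{\rm out}$ with $G_R^{\rm in}=(\nabla K)\,\id_{\abs{\cdot}\le R}$ and $G_R^{\rm out}=(\nabla K)\,\id_{\abs{\cdot}>R}$. Since $\abs{\nabla K(x)}\le C\abs{\gamma}\abs{x}^{-(a+1)}$ one has $\norm{G_R^{\rm in}}_{L^p}\le C\abs{\gamma}R^{d/p-(a+1)}$ for all $p<{\rm b}$ and $\norm{G_R^{\rm out}}_{L^q}\le C\abs{\gamma}R^{d/q-(a+1)}$ for all $q>{\rm b}$, the first exponent being positive, the second negative. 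For the corresponding pieces $A_z^{\tau,{\rm in}},A_z^{\tau,{\rm out}}$ I would use the factorisation $\textfrak{S}^1=\textfrak{S}^2\cdot\textfrak{S}^2$ together with Kato--Seiler--Simon type estimates: after polar-decomposing $[x,\omega]=U\abs{[x,\omega]}$ and inserting $\abs{[x,\omega]}^{1/2}$, each half is a product of a multiplication operator, the factor $\abs{[x,\omega]}^{1/2}$, and a bounded Fourier multiplier/unitary, whose Hilbert--Schmidt norm is evaluated by
\begin{align*}
\norm{M_h\,\abs{[x,\omega]}^{1/2}}_{\textfrak{S}^2}^2=\int_{\mathbb{R}^d}\abs{h(x)}^2\,\textnormal{diag}\big(\abs{[x,\omega]}\big)(x)\,dx\le\norm{h}_{L^{2r}}^2\,\norm{\textnormal{diag}(\abs{[x,\omega]})}_{L^{r'}} .
\end{align*}
This is exactly where $\textnormal{diag}(\abs{[x,\omega]})$ enters: taking $h$ to be (a Littlewood--Paley piece of) the relevant part of $\nabla K$ pairs an $L^p$-norm of $\nabla K$ with an $L^{p'}$-norm of $\textnormal{diag}(\abs{[x,\omega]})$, which is why the Hölder conjugacy $1/p+1/p'=1$ links ${\rm b}$ to ${\rm b}'$. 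Schematically this yields $\norm{A_z^{\tau,{\rm in}}}_{\textfrak{S}^1}\le C\abs{\gamma}R^{\alpha}\norm{\textnormal{diag}(\abs{[x,\omega]})}_{L^{{\rm b}'-\mu}}$ and $\norm{A_z^{\tau,{\rm out}}}_{\textfrak{S}^1}\le C\abs{\gamma}R^{-\beta}\norm{\textnormal{diag}(\abs{[x,\omega]})}_{L^{{\rm b}'+\mu}}$ with $\alpha,\beta>0$ depending on $\mu,d,a$.

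Finally, I would optimise in $R>0$: minimising $R^{\alpha}X+R^{-\beta}Y$ gives $C\,X^{\beta/(\alpha+\beta)}Y^{\alpha/(\alpha+\beta)}$, a geometric mean of $\norm{\textnormal{diag}(\abs{[x,\omega]})}_{L^{{\rm b}'-\mu}}$ and $\norm{\textnormal{diag}(\abs{[x,\omega]})}_{L^{{\rm b}'+\mu}}$ with exponents summing to $1$; a short computation with $\alpha,\beta$ identifies these exponents with $\tfrac12+\widetilde{\mu}$ and $\tfrac12-\widetilde{\mu}$, and the admissibility ranges $\mu\in(0,{\rm b}'-1]$ and $\widetilde{\mu}\in(0,\mu/(2{\rm b}'))$ are precisely those for which all Hölder and Kato--Seiler--Simon exponents above stay admissible. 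Integrating over $\tau\in[0,1]$ and taking the supremum over $z$ then concludes.

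The step I expect to be the main obstacle is reconciling the Taylor expansion with the factorisation: the ``$\tau$-averaged translate'' $\nabla K(\tau x+(1-\tau)y-z)$ is \emph{not} a product of a function of $x$ and a function of $y$, so $A_z^\tau$ is not literally $M_g[x,\omega]$ and the Hilbert--Schmidt identity cannot be invoked verbatim. One must either pass through the Fourier representation and control the continuous superposition in $\xi$ without paying the divergent $\int\abs{\widehat{\nabla K}(\xi)}\,d\xi$ --- e.g. by grouping $\widehat{\nabla K}$ into Littlewood--Paley pieces, estimating $\norm{e^{i\tau\xi\cdot x}M_g\abs{[x,\omega]}^{1/2}}_{\textfrak{S}^2}$ for each, and resumming against the $L^{{\rm b},\infty}$ weight --- or perform the change of variables $u=\tau x+(1-\tau)y$ together with unitary conjugations to reduce to the endpoint cases $\tau\in\{0,1\}$. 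The remaining issues (keeping $q\ge 2$ so Kato--Seiler--Simon applies, and the genuine failure at $\mu=0$ forced by the weak-$L^{{\rm b}}$ membership) are of bookkeeping nature and are exactly what dictates the final form of the estimate.
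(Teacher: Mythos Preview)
The paper does not prove this lemma: it is quoted as Theorem~4 of \cite{LS2020} and used as a black box in Section~\ref{subsection:Derivation of the relativistic Vlasov equation} (the paper explicitly says it ``report[s] below the statements of the results'' from \cite{LS2020} and \cite{DRS2017} without reproving them). There is therefore no in-paper argument to compare your proposal against.

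Your sketch is nonetheless structurally sound and matches how such commutator estimates are obtained. The first-order Taylor expansion reducing $[K(\cdot-z),\omega]$ to the position commutator $[x,\omega]$, the in/out splitting $\nabla K=G_R^{\rm in}+G_R^{\rm out}$ converting weak-$L^{\rm b}$ membership into honest $L^p$ bounds on either side of ${\rm b}$, the Hilbert--Schmidt identity
\[
\norm{M_h\,\abs{[x,\omega]}^{1/2}}_{\textfrak{S}^2}^2=\int\abs{h}^2\,\textnormal{diag}\big(\abs{[x,\omega]}\big)
\]
that brings in the diagonal and explains the H\"older conjugacy ${\rm b}\leftrightarrow{\rm b}'$, and the optimisation in $R$ yielding the geometric mean with exponents $\tfrac12\pm\widetilde\mu$ --- all of this is correct and is the mechanism behind the result in \cite{LS2020}. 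You have also correctly isolated the one genuinely delicate point: for $\tau\in(0,1)$ the factor $\nabla K(\tau x+(1-\tau)y-z)$ is not a one-sided multiplication operator, so the factorisation $\textfrak{S}^1=\textfrak{S}^2\cdot\textfrak{S}^2$ cannot be invoked verbatim on $A_z^\tau$. Your two proposed workarounds (Fourier/Littlewood--Paley resummation, or a unitary change of variables reducing to the endpoint cases) are both in the right spirit; this is the only place where your write-up is a plan rather than a proof, and you have flagged it accurately.
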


\begin{lemma}[Proposition 3.1 in \cite{LS2020}]
\label{lemma:LS2020 diagonal estimate commutator x and weyl quantization}
Let $d \in \{2,3\}$, $n \in  2 \mathbb{N} $ and define $\sigma = 4 + n$. Then, for any $\widetilde{W}_N \in {W^{3, \infty} (\mathbb{R}^{d} \times \mathbb{R}^d) } \cap H_{\sigma}^{\sigma + 1} (\mathbb{R}^{d} \times \mathbb{R}^d)$ with Weyl quantization $\widetilde{\omega}_N$, there exists a constant $C > 0$ such that 
\begin{align}
\norm{\textnormal{diag} \left( \abs{\left[ x, \widetilde{\omega}_N \right]} \right)}_{L^p(\mathbb{R}^d)}
&\leq C \, \varepsilon N \norm{\nabla_v \widetilde{W}_N}_{W^{2,\infty} (\mathbb{R}^{2d}) \cap H_{\sigma}^{\sigma} (\mathbb{R}^{2d})}
\end{align}
for any $p \in \left[ 1, 1 + \frac{n}{d} \right]$.
\end{lemma}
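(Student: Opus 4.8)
The plan is to convert the commutator into an exact pseudodifferential identity, reduce the estimate to a semiclassical trace-norm bound, and use a duality argument to sidestep the absolute value (the only genuinely delicate point). First I would compute the commutator: the integral kernel of $[x_j,\widetilde{\omega}_N]$ is $(x_j-y_j)\,\widetilde{\omega}_N(x;y)$, and since $(x_j-y_j)\,e^{iv\cdot(x-y)/\varepsilon}=\tfrac{\varepsilon}{i}\,\partial_{v_j}e^{iv\cdot(x-y)/\varepsilon}$, an integration by parts in $v$ in the definition of the Weyl quantization of $\widetilde{W}_N$ gives the exact identity $[x_j,\widetilde{\omega}_N]=i\varepsilon\,\mathrm{Op}^W(\partial_{v_j}\widetilde{W}_N)$. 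Writing $A:=\mathrm{Op}^W(\nabla_v\widetilde{W}_N)$ for the (vector-valued) Weyl quantization of $\nabla_v\widetilde{W}_N$, this gives $|[x,\widetilde{\omega}_N]|=\varepsilon\,|A|$, so, as $\varepsilon\cdot\varepsilon^{-d}=\varepsilon N$, it suffices to prove
\begin{equation*}
\|\mathrm{diag}(|A|)\|_{L^p(\mathbb{R}^d)}\ \le\ C\,\varepsilon^{-d}\,\|\nabla_v\widetilde{W}_N\|_{W^{2,\infty}(\mathbb{R}^{2d})\cap H^\sigma_\sigma(\mathbb{R}^{2d})},\qquad p\in\big[1,1+\tfrac nd\big].
\end{equation*}

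The obstruction is that $A$ is not a positive operator, so $\mathrm{diag}(|A|)$ is not the density of any explicit object, and there is no triangle inequality for $\mathrm{diag}(|\cdot|)$ that would let one split $\nabla_v\widetilde{W}_N$ into sign-definite parts. I would circumvent this by duality: using the polar decomposition $A=U|A|$ (so that $|A|=U^*A$) and $\mathrm{diag}(|A|)\ge 0$, for any $p$ one has
\begin{equation*}
\|\mathrm{diag}(|A|)\|_{L^p}=\sup_{0\le\phi,\ \|\phi\|_{L^{p'}}\le1}\tr\!\big(|A|\,M_\phi\big)=\sup_{\phi}\tr\!\big(U^*A\,M_\phi\big)\le\sup_{\phi}\|A\,M_\phi\|_{\textfrak{S}^1},
\end{equation*}
where $M_\phi$ is multiplication by $\phi$. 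This replaces the absolute value by an honest trace-norm estimate on the product of a pseudodifferential operator with a multiplication operator.

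For that product I would use a Kato--Seiler--Simon splitting: inserting $(1-\varepsilon^2\Delta)^{\pm k/2}$ and applying H\"older for Schatten norms,
\begin{equation*}
\|A\,M_\phi\|_{\textfrak{S}^1}\ \le\ \big\|\mathrm{Op}^W(\nabla_v\widetilde{W}_N)\,(1-\varepsilon^2\Delta)^{k/2}\big\|_{\textfrak{S}^p}\ \big\|(1-\varepsilon^2\Delta)^{-k/2}M_\phi\big\|_{\textfrak{S}^{p'}},
\end{equation*}
bounding the second factor by $C\,\varepsilon^{-d/p'}\|\phi\|_{L^{p'}}$ (Kato--Seiler--Simon, once $k>d/p'$) and the first by $C\,\varepsilon^{-d/p}$ times a weighted phase-space Sobolev norm of the symbol of $\mathrm{Op}^W(\nabla_v\widetilde{W}_N)(1-\varepsilon^2\Delta)^{k/2}$, which to leading order equals $\nabla_v\widetilde{W}_N(x,v)\,\langle v\rangle^{k}$. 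The two powers of $\varepsilon$ multiply to $\varepsilon^{-d}$, so it only remains to dominate this symbol norm by $\|\nabla_v\widetilde{W}_N\|_{W^{2,\infty}\cap H^\sigma_\sigma}$: the $W^{2,\infty}$ part gives the pointwise control of $\nabla_v\widetilde{W}_N$ and its first two derivatives, while $H^\sigma_\sigma$ with $\sigma=4+n$ supplies both the $O(1)$ (at most $d+1\le 4$) additional derivatives and weights needed for the Schatten-$p$ bound and the velocity moment $\int|v|^n|\nabla_v\widetilde{W}_N|\,dx\,dv<\infty$ (finite because $\sigma=4+n>n+d$ for $d\le 3$), which is exactly what places $\nabla_v\widetilde{W}_N\,\langle v\rangle^{k}$ in the required phase-space $L^p$ space, as in Proposition~\ref{prop:estimate velocity moments}; it is this moment requirement that restricts the range to $p\in[1,1+\tfrac nd]$.

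The genuinely delicate step is the second one: the non-positivity of the Weyl quantization prevents the usual sign-splitting of $\mathrm{diag}(|A|)$ into Husimi-type densities, and it is the duality bound $\|\mathrm{diag}(|A|)\|_{L^p}\le\sup_\phi\|A M_\phi\|_{\textfrak{S}^1}$ that unlocks the argument. After that, the remaining work is a careful but essentially routine combination of Kato--Seiler--Simon inequalities, boundedness of $\mathrm{Op}^W(g)\,(1-\varepsilon^2\Delta)^{k/2}$ for symbols decaying in $v$, and the interpolation behind Proposition~\ref{prop:estimate velocity moments}, with the derivative and weight bookkeeping fixing the value $\sigma=4+n$.
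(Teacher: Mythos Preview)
The paper does not prove this lemma; it is quoted as Proposition~3.1 of \cite{LS2020} and used as a black box, so there is no argument in the present paper to compare your proposal against.

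On its own merits, your opening reduction is correct: the kernel identity $[x_j,\widetilde{\omega}_N]=i\varepsilon\,\mathrm{Op}^W(\partial_{v_j}\widetilde{W}_N)$ is exact, and the duality step
\[
\norm{\mathrm{diag}(|A|)}_{L^p}\le\sup_{\|\phi\|_{L^{p'}}\le1}\norm{A\,M_\phi}_{\textfrak{S}^1}
\]
is a legitimate and elegant way to bypass the absolute value. The gap is in the Schatten factorisation that follows. The Kato--Seiler--Simon inequality $\norm{f(x)g(-i\nabla)}_{\textfrak{S}^q}\le C_q\norm{f}_{L^q}\norm{g}_{L^q}$ is only available for $q\ge2$, so your bound $\norm{(1-\varepsilon^2\Delta)^{-k/2}M_\phi}_{\textfrak{S}^{p'}}\le C\varepsilon^{-d/p'}\norm{\phi}_{L^{p'}}$ requires $p'\ge2$, i.e.\ $p\le2$. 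But the lemma must reach $p=1+n/d$, which exceeds $2$ whenever $n>d$; and in the application inside the proof of Theorem~\ref{theorem:derivation relativistic Vlasov from Hartree} one actually needs $p$ near ${\rm b}'=d/(d-(a+1))$, which equals $3$ for the Coulomb case $d=3$, $a=1$. The companion bound $\norm{\mathrm{Op}^W(\nabla_v\widetilde{W}_N)(1-\varepsilon^2\Delta)^{k/2}}_{\textfrak{S}^p}\le C\varepsilon^{-d/p}\norm{\cdot}$ for $p<2$ is likewise not a direct consequence of Calder\'on--Vaillancourt or KSS; trace-class bounds on Weyl operators typically require inserting weights in both $x$ and $v$ and spending additional derivatives, which is presumably where the full $H^\sigma_\sigma$ budget with $\sigma=4+n$ enters in \cite{LS2020}. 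As written, your splitting covers only the range $p\in[1,2]$; extending it to $1+n/d$ would need either a different placement of the weights or an interpolation against an endpoint you have not established.
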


Moreover, we will need to estimate the operator $C_{N,t} $ with kernel
\begin{align}
\label{eq:derivation Vlasov from Hartree definition of C}
C_{N,t}(x;y)
&= \left[ \left( K * \widetilde{\rho}_t \right)(x) - 
\left( K * \widetilde{\rho}_t \right)(y) 
- \nabla \left( K * \widetilde{\rho}_t \right) \left( \frac{x+y}{2} \right) \cdot (x-y) \right] \widetilde{\omega}_{N,t} (x;y) .
\end{align}
To this end we recall
\begin{lemma}[Proposition 4.4 in \cite{LS2020}]
\label{lemma:derivation Vlasov from Hartree estimae for C-term}
Let $a \in \left( \frac{d}{2} - 2 , d- 2 \right]$, $\gamma\in\mathbb{R}$ and $K$ be defined as in \eqref{eq:definition potential}. Moreover, let $p \in [1,2]$ and $\nu = 4 + a - d$. Then, there exists a constant $C$ independent of $\varepsilon$ such that 
\begin{align}
\norm{C_{N,t}}_{\textfrak{S}^p}
&\leq C \abs{\gamma} \varepsilon^2 N^{\frac{1}{p}} \norm{\widetilde{\rho}(t)}_{L^1 \cap H^{\nu}
(\mathbb{R}^d)} \norm{\nabla_v^2 \widetilde{W}_{N,t}}_{H_{4}^{4} (\mathbb{R}^{2d})} .
\end{align}
\end{lemma}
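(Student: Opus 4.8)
This is \cite[Proposition 4.4]{LS2020}; I outline the argument. The point is to Taylor-expand the bracket in \eqref{eq:derivation Vlasov from Hartree definition of C} to second order, so that its remainder carries a factor $(x-y)\otimes(x-y)$ — which, once transferred onto the Weyl quantization, accounts for the gain of $\varepsilon^2$ and for the appearance of $\nabla_v^2\widetilde{W}_{N,t}$ — while the leftover multiplier, built out of $\nabla^2(K*\widetilde{\rho}_t)$, is turned by a Fourier decomposition into a superposition of unitary conjugations, which are innocuous in every Schatten norm.

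Concretely, writing $\phi_t := K*\widetilde{\rho}_t$ and $m=\tfrac{x+y}{2}$, and combining $\phi_t(x)-\phi_t(y)=(x-y)\cdot\int_0^1\nabla\phi_t(y+t(x-y))\,dt$ with $\nabla\phi_t(y+t(x-y))-\nabla\phi_t(m)=\int_{1/2}^t\nabla^2\phi_t(y+\sigma(x-y))(x-y)\,d\sigma$, a change in the order of integration produces a bounded $h$ on $[0,1]$, $\norm{h}_{L^\infty}\le 1$, with
\begin{align*}
\phi_t(x)-\phi_t(y)-\nabla\phi_t\big(\tfrac{x+y}{2}\big)\cdot(x-y)
=\sum_{i,j=1}^d (x-y)_i(x-y)_j\int_0^1 h(\sigma)\,\partial_{ij}\phi_t\big(\sigma x+(1-\sigma)y\big)\,d\sigma .
\end{align*}
Hence $C_{N,t}=\sum_{i,j}\int_0^1 h(\sigma)\,T^{ij}_\sigma\,d\sigma$, where $T^{ij}_\sigma$ has integral kernel $\partial_{ij}\phi_t(\sigma x+(1-\sigma)y)(x-y)_i(x-y)_j\,\widetilde{\omega}_{N,t}(x;y)$. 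Expanding $\partial_{ij}\phi_t(w)=(2\pi)^{-d}\int\widehat{\partial_{ij}\phi_t}(\xi)e^{i\xi\cdot w}d\xi$ and noting that the bounded operator with kernel $e^{i\sigma\xi\cdot x}B(x;y)e^{i(1-\sigma)\xi\cdot y}$ equals $M_{\sigma\xi}B M_{(1-\sigma)\xi}$, with $M_\eta$ the (unitary) multiplication by $e^{i\eta\cdot x}$, one gets
\begin{align*}
T^{ij}_\sigma=(2\pi)^{-d}\int_{\RRR^d}\widehat{\partial_{ij}\phi_t}(\xi)\,M_{\sigma\xi}\,\Omega^{ij}_{N,t}\,M_{(1-\sigma)\xi}\,d\xi,\qquad \Omega^{ij}_{N,t}(x;y):=(x-y)_i(x-y)_j\,\widetilde{\omega}_{N,t}(x;y),
\end{align*}
so that $\norm{T^{ij}_\sigma}_{\textfrak{S}^p}\le(2\pi)^{-d}\norm{\widehat{\partial_{ij}\phi_t}}_{L^1(\RRR^d)}\norm{\Omega^{ij}_{N,t}}_{\textfrak{S}^p}$ by the invariance of Schatten norms under unitary conjugation.

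It remains to bound the two factors. From $\widetilde{\omega}_{N,t}(x;y)=\varepsilon^{-d}\int\widetilde{W}_{N,t}(\tfrac{x+y}{2},v)e^{iv\cdot(x-y)/\varepsilon}dv$, two integrations by parts in $v$ (boundary terms vanish by the decay of $\widetilde{W}_{N,t}$) give $\Omega^{ij}_{N,t}=-\varepsilon^2\,\widetilde{\omega}^{(ij)}_{N,t}$, with $\widetilde{\omega}^{(ij)}_{N,t}$ the Weyl quantization of $\partial_{v_i}\partial_{v_j}\widetilde{W}_{N,t}$; the standard Schatten bound for Weyl quantizations (valid for $p\in[1,2]$ and $d\in\{2,3\}$ with the weighted norm $H_4^4$, see \cite{LS2020}) then yields $\norm{\Omega^{ij}_{N,t}}_{\textfrak{S}^p}\le C\varepsilon^2 N^{1/p}\norm{\nabla_v^2\widetilde{W}_{N,t}}_{H_4^4(\RRR^{2d})}$. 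For the multiplier, $\widehat{\partial_{ij}\phi_t}(\xi)=\widehat{\partial_{ij}K}(\xi)\widehat{\widetilde{\rho}_t}(\xi)$ with $\abs{\widehat{\partial_{ij}K}(\xi)}\le C\abs{\gamma}\abs{\xi}^{a+2-d}$ (the analogous bound holding for $a=0$); splitting at $\abs{\xi}=1$ and using $\norm{\widehat{\widetilde{\rho}_t}}_{L^\infty}\le\norm{\widetilde{\rho}_t}_{L^1}$ on $\{\abs{\xi}\le 1\}$ and Cauchy--Schwarz on $\{\abs{\xi}>1\}$,
\begin{align*}
\norm{\widehat{\partial_{ij}\phi_t}}_{L^1(\RRR^d)}\le C\abs{\gamma}\big(\norm{\widetilde{\rho}_t}_{L^1(\RRR^d)}+\norm{\widetilde{\rho}_t}_{H^\nu(\RRR^d)}\big)=C\abs{\gamma}\norm{\widetilde{\rho}_t}_{L^1\cap H^\nu(\RRR^d)};
\end{align*}
here integrability near $\xi=0$ uses $a>-2$, while near infinity it uses $\nu=4+a-d$ together with $d<4$ (indeed $\abs{\xi}^{a+2-d}\langle\xi\rangle^{-\nu}\sim\abs{\xi}^{-2}$ for $\abs{\xi}$ large, independently of $a$). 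Collecting the estimates and integrating in $\sigma$ gives the claim.

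The only non-routine point is that the remainder multiplier $\partial_{ij}\phi_t(\sigma x+(1-\sigma)y)$ couples the two kernel variables and is neither a left- nor a right-multiplication operator, so it cannot be pulled out of the Schatten norm directly; the Fourier decomposition resolves this at the price of $\norm{\widehat{\nabla^2(K*\widetilde{\rho}_t)}}_{L^1}$, and the finiteness of this quantity in the admissible range of $a$ is exactly what dictates the choice $\nu=4+a-d$ and, through the condition $d<4$, the restriction to two and three space dimensions.
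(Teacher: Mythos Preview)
Your outline is correct and is precisely the argument of \cite[Proposition~4.4]{LS2020}, which the paper cites without reproducing: second-order Taylor remainder to extract $(x-y)\otimes(x-y)$, Fourier decomposition of the mixed multiplier $\partial_{ij}\phi_t(\sigma x+(1-\sigma)y)$ into unitary conjugations, integration by parts in $v$ to trade $(x-y)_i(x-y)_j$ for $-\varepsilon^2\partial_{v_i}\partial_{v_j}$ on the Weyl symbol, and the $H_4^4$ Schatten bound for $p\in[1,2]$ together with the $L^1$--$H^\nu$ split of $\widehat{\nabla^2(K*\widetilde\rho_t)}$. One cosmetic point: in the line ``$\phi_t(x)-\phi_t(y)=(x-y)\cdot\int_0^1\nabla\phi_t(y+t(x-y))\,dt$'' you reuse $t$ as the integration variable while it is already the time subscript; rename it to avoid confusion.
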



\begin{proof}[Proof of Theorem \ref{theorem:derivation relativistic Vlasov from Hartree}]

In the following we denote the integral kernel of an operator $O: L^2(\mathbb{R}^{d}) \rightarrow L^2(\mathbb{R}^d)$
by $O(x;y)$ and write the Fourier transform of the kernel as
\begin{align}
\widehat{O}(p;q) = (2 \pi)^{-d} \int_{\mathbb{R}^d \times \mathbb{R}^d}  O(x;y) e^{- i x \cdot p} e^{i y \cdot q}\,dx \, dy .
\end{align}
The Weyl quantization of the solution to the relativistic Vlasov equation $\widetilde{W}_{N,t}$ evolves according to
\begin{align}
i \varepsilon \partial_t \widetilde{\omega}_{N,t} &= A_{N,t} + B_{N,t}
\end{align}
where $A_{N,t}$ and $B_{N,t}$ are defined as
\begin{align}
\widehat{A}_{N,t}(p;q) &= \frac{\varepsilon^2 (p-q) \cdot (p+q)}{2 \sqrt{1 + \frac{\varepsilon^2}{4} (p+q)^2}}
\widehat{\widetilde{\omega}}_{N,t}(p;q)
\end{align}
and
\begin{align}
B_{N,t}(x;y) &= (x-y) \cdot \nabla  \left(K * \widetilde{\rho}_t \right) \left( \frac{x + y}{2} \right) \widetilde{\omega}_{N,t}(x;y) .
\end{align}
The trace norm difference of the solution of the semi-relativistic Hartree equation and the Weyl quantization of the Vlasov solution is bounded by
\begin{align}
\norm{\omega_{N,t} - \widetilde{\omega}_{N,t}}_{\textfrak{S}^1}
&\leq  \norm{\omega_{N} - \widetilde{\omega}_{N}}_{\textfrak{S}^1}
\nonumber \\
\label{eq:derivation Vlasov from Hartree Duhamel 1}
&\quad + \frac{1}{\varepsilon} \int_0^t 
\norm{\left[ \sqrt{1 - \varepsilon^2 \Delta} , \widetilde{\omega}_{N,s} \right] - A_{N,s}}_{\textfrak{S}^1}ds
\\
\label{eq:derivation Vlasov from Hartree Duhamel 2}
&\quad + 
\frac{1}{\varepsilon} \int_0^t 
\norm{\left[ K * \left( \rho_s - \widetilde{\rho}_s \right) , \widetilde{\omega}_{N,s} \right]}_{\textfrak{S}^1}ds 
\\
\label{eq:derivation Vlasov from Hartree Duhamel 3}
&\quad + 
\frac{1}{\varepsilon} \int_0^t  \norm{C_{N,s}}_{\textfrak{S}^1}ds
\end{align} 
with $C_{N,s}$ being defined as in \eqref{eq:derivation Vlasov from Hartree definition of C}.
This estimate is easily derived (see \cite[Section 3]{DRS2017}) by means of a Duhamel expansion.\footnote{To this end it is necessary to show the existence of the two parameter group which is generated by the Hamiltonian 
$h_H(t) = \sqrt{1 - \varepsilon^2 \Delta} + \left( K * \rho_t \right) $. For the singular interactions under consideration this is done in Appendix~\ref{appendix:duhamel}.
} 
In analogy to \cite[Section 3]{DRS2017} (where the case $d= 3$ has been treated) one gets
\begin{align}
\label{eq:derivation of Vlasov bound for the kinetic energy}
\norm{\left( 1 - \varepsilon^2 \Delta \right) \left( 1 + x^2 \right) \left( \left[ \sqrt{1 - \varepsilon^2 \Delta} , \widetilde{\omega}_{N,s} \right] - A_{N,s} \right)}_{\textfrak{S}^2}
&\leq C N^{\frac{1}{2}} \varepsilon^2 
\int_0^t \norm{\widetilde{W}_{N,s}}_{H_3^6(\mathbb{R}^{2d})} \,ds .
\end{align}
H\"older's inequality for Schatten spaces then implies
\begin{align}
\label{eq:derivation of Vlasov bound for the kinetic energy in trace norm}
\eqref{eq:derivation Vlasov from Hartree Duhamel 1}
&\leq  C N \varepsilon
\int_0^t  \norm{\widetilde{W}_{N,s}}_{H_{\sigma}^{\sigma}(\mathbb{R}^{2d})}\,ds 
\end{align}
since
$\norm{\left( 1 - \varepsilon^2 \Delta \right)^{-1} \left( 1 + x^2 \right)^{-1}}_{\textfrak{S}^2} \leq C N^{\frac{1}{2}}$ and $\sigma = 4  + n \geq 6$ holds by assumption.
Note that  
\begin{align}
\label{eq:density-diff}
\norm{\rho_s - \widetilde{\rho}_s}_{L^1(\mathbb{R}^d)} = \sup_{O \in L^{\infty}(\mathbb{R}^d), \norm{O}_{L^{\infty}} \leq 1} 
\Big| \int_{\mathbb{R}^d}  O(x) \left( \rho_s(x) - \widetilde{\rho}_s(x) \right)\,dx\,\Big| \leq \frac{1}{N} \norm{\omega_{N,s} - \widetilde{\omega}_{N,s}}_{\textfrak{S}^1}
\end{align}
because the space of bounded operators is the dual of the space of trace-class operators and every function $x \mapsto O(x)$ defines a multiplication operator. 
Hence
\begin{align}\label{eq:duhamel-trace}
\eqref{eq:derivation Vlasov from Hartree Duhamel 2}
&\leq \frac{1}{\varepsilon} \int_0^t 
\int_{\mathbb{R}^d} \abs{\rho_s(z) - \widetilde{\rho}_s(z)}  \,
\norm{\left[ K (\cdot - z), \widetilde{\omega}_{N,s} \right]}_{\textfrak{S}^1}dz\,ds 
\nonumber \\
&\leq \frac{1}{\varepsilon N} \int_0^t  
\norm{\omega_{N,s} - \widetilde{\omega}_{N,s}}_{\textfrak{S}^1}
\, \sup_{z \in \mathbb{R}^d}  \norm{\left[ K (\cdot - z), \widetilde{\omega}_{N,s} \right]}_{\textfrak{S}^1}\,ds.
\end{align}
Now let ${\rm b}' = \frac{d}{d - (a + 1)}$, $n \in 2 \mathbb{N}$ such that $n > \frac{d}{{\rm b}-1} = d ({\rm b}'-1)$ and $\mu \in (0, {\rm b}' -1 )$ such that ${\rm b}' + \mu \leq 1 + \frac{n}{d}$.
By means of Lemma \ref{lemma:LS2020 commutator estimate potential and weyl quantization} and
Lemma \ref{lemma:LS2020 diagonal estimate commutator x and weyl quantization} we then get
\begin{align}
\sup_{z \in \mathbb{R}^3} \norm{\left[ K(\cdot - z) , \widetilde{\omega}_{N,s} \right] }_{\textfrak{S}^1} 
&\leq C \abs{\gamma} \norm{\textnormal{diag} \left[ x, \widetilde{\omega}_{N,s} \right] }_{L^{{\rm b}' + \mu} \cap L^{{\rm b}' - \mu}(\mathbb{R}^d)}
\nonumber \\
&\leq C \abs{\gamma} \varepsilon N \norm{\nabla_v \widetilde{W}_{N,s}}_{W^{2, \infty}(\mathbb{R}^{2d}) \cap H_{4 + n}^{4 + n}(\mathbb{R}^{2d})}
\end{align}
and
\begin{align}
\eqref{eq:derivation Vlasov from Hartree Duhamel 2}
&\leq C \abs{\gamma} \int_0^t  
\norm{\omega_{N,s} - \widetilde{\omega}_{N,s}}_{\textfrak{S}^1}
\, 
\norm{\nabla_v \widetilde{W}_{N,s}}_{W^{2, \infty}(\mathbb{R}^{2d}) \cap H_{4 + n}^{4 + n}(\mathbb{R}^{2d})} \, ds .
\end{align}
Together with Lemma \ref{lemma:derivation Vlasov from Hartree estimae for C-term} ($\sigma \geq 6$) this leads to
\begin{align}
\norm{\omega_{N,t} - \widetilde{\omega}_{N,t}}_{\textfrak{S}^1}
&\leq  \norm{\omega_{N} - \widetilde{\omega}_{N}}_{\textfrak{S}^1}
+
\int_0^t  \Big[ \lambda(s) \,
\norm{\omega_{N,s} - \widetilde{\omega}_{N,s}}_{\textfrak{S}^1}
+  N \varepsilon \, C(s) \Big]\,ds ,
\end{align}
where
\begin{align}
\lambda(s) = C \abs{\gamma} \norm{\nabla_v \widetilde{W}_{N,s}}_{W^{2, \infty}(\mathbb{R}^{2d}) \cap H_{4 + n}^{4 + n}(\mathbb{R}^{2d})}
\end{align}
and
\begin{align}
C(s) &=  C \left( 1  + \abs{\gamma} \right) \Big( 1 
+ \norm{\widetilde{\rho}_s}_{L^1(\mathbb{R}^d) \cap H^{\nu}(\mathbb{R}^d)} \Big) \norm{\widetilde{W}_{N,s}}_{H_{\sigma}^{\sigma} (\mathbb{R}^{2d})} .
\end{align}
Inequality \eqref{eq:derivation relativistic Vlasov from Hartree}  then follows from Gronwall's Lemma.
\end{proof}

By means of \eqref{eq:density-diff} and Theorem~\ref{theorem:derivation relativistic Vlasov from Hartree} it is possible to control the $L^1$-distance between the semi-relativistic Hartree- and relativistic Vlasov-density. This enables us to extend Theorem~\ref{theorem:derivation relativistic Vlasov from Hartree} to arbitrary $p$-Schatten norms, as shown below.

\begin{proof}[Proof of Proposition~\ref{remark:p-schatten}]
We proceed as in the proof of Theorem~\ref{theorem:derivation relativistic Vlasov from Hartree} and replace the trace norm by the $p$-Schatten norm in \eqref{eq:derivation Vlasov from Hartree Duhamel 1}--\eqref{eq:derivation Vlasov from Hartree Duhamel 3}:
\begin{equation}
\label{eq:Duhamel expansion for general Schatten spaces}
\begin{split}
\norm{\omega_{N,t}-\widetilde{\omega}_{N,t}}_{\textfrak{S}^p}
&\leq \norm{\omega_N-\widetilde{\omega}_N}_{\textfrak{S}^p}\\
&+\frac{1}{\e}\int_0^t\norm{\left[\sqrt{1-\e^2\Delta},\widetilde{\omega}_{N,s}\right]-A_{N,s}}_{\textfrak{S}^p} ds\\
&+\frac{1}{N\,\e}\int_0^t  \norm{\omega_{N,s}-\widetilde{\omega}_{N,s}}_{\textfrak{S}^1}\sup_{z \in \mathbb{R}^d} \norm{\left[K(\cdot-z),\widetilde{\omega}_{N,s}\right]}_{\textfrak{S}^p} ds\\
&+\frac{1}{\e}\int_0^t\norm{C_{N,s}}_{\textfrak{S}^p} ds,\\
\end{split}
\end{equation}
where we used \eqref{eq:density-diff} and the analogue of \eqref{eq:duhamel-trace} for $p$-Schatten norms. Notice that the existence of the two parameter semi-group is addressed in Appendix~\ref{appendix:duhamel} for $p$-Schatten norms.\\
We first look at $p\in\left[1, \min \left\{ \frac{d}{a+1} , 2 \right\} \right)$. 
The third term is bounded by \cite[Proposition~4.3]{LS2020}, the refinement of the Calder\' on-Vaillancourt inequality \cite{B1999} and Theorem~\ref{theorem:derivation relativistic Vlasov from Hartree}. The fourth term is bounded in Lemma~\ref{lemma:derivation Vlasov from Hartree estimae for C-term}. Both terms provide a bound of order $O(N^{\frac{1}{p}}\e)$. As for the second term, we proceed by interpolation. By \eqref{eq:derivation of Vlasov bound for the kinetic energy} we get
\begin{equation*}
\norm{\left[\sqrt{1-\e^2\Delta},\widetilde{\omega}_{N,s}\right]-A_{N,s}}_{\textfrak{S}^1}\leq C(t)N\e^2\quad\mbox{and}\quad\norm{\left[\sqrt{1-\e^2\Delta},\widetilde{\omega}_{N,s}\right]-A_{N,s}}_{\textfrak{S}^2}\leq C(t){N}^{\frac{1}{2}}\e^2.
\end{equation*} 
Hence, for $p= \min \left\{ \frac{d}{a+1} , 2 \right\}-\eta$, for $\eta>0$ small,
\begin{equation*}
\norm{{\left[\sqrt{1-\e^2\Delta},\widetilde{\omega}_{N,s}\right]-A_{N,s}}}_{\textfrak{S}^p}\leq C(t)N^{\frac{1}{p}}\e^2.
\end{equation*}
This is enough to get a bound on $\norm{\omega_{N,t}-\widetilde{\omega}_{N,t}}_{\textfrak{S}^p}$ of order $O(N^{\frac{1}{p}}\e)$, $p\in\left[1, \min \left\{ \frac{d}{a+1} , 2 \right\} \right)$.\\
To get a convergence rate for $p\in\left[\min \left\{ \frac{d}{a+1} , 2 \right\},\infty\right)$, we interpolate between $\norm{\omega_{N,t}-\widetilde{\omega}_{N,t}}_{\textfrak{S}^q}$ with $q= \min \left\{ \frac{d}{a+1} , 2 \right\} - \eta$, for $\eta>0$ arbitrarily small, and $\norm{\omega_{N,t}-\widetilde{\omega}_{N,t}}_{\textfrak{S}^\infty}
\leq\norm{\omega_{N,t}}_{\textfrak{S}^{\infty}}+\norm{\widetilde{\omega}_{N,t}}_{\textfrak{S}^{\infty}}\leq C$. This yields 
\begin{equation*}
\norm{\omega_{N,t}-\widetilde{\omega}_{N,t}}_{\textfrak{S}^p}\leq \norm{\omega_{N,t}-\widetilde{\omega}_{N,t}}_{\textfrak{S}^q}^\frac{q}{p}\norm{\omega_{N,t}-\widetilde{\omega}_{N,t}}_{\textfrak{S}^\infty}^{1-\frac{q}{p}}\leq C\norm{\omega_{N,t}-\widetilde{\omega}_{N,t}}_{\textfrak{S}^q}^\frac{q}{p}
\end{equation*}
thus providing a bound of order $O(N^{\frac{1}{p}}\e^\frac{q}{p})$, with $\frac{q}{p}<1$. Keeping track of all the constants and of the dependence on time concludes the proof.
\end{proof}


\subsection{Results about the relativistic Vlasov equation}

\begin{proof}[Proof of Theorem \ref{theorem:velocity moments short time estimate}]
Let $f$ be a solution of \eqref{eq:relativistic Vlasov} and denote by
\begin{align}
\label{eq: definition velocity moments}
M_k(t) = \sup_{0 \leq s \leq t} \int_{\mathbb{R}^d \times \mathbb{R}^d}  \abs{v}^k f(s,x,v)\,dx\,dv  \quad \text{with} \; k \in \mathbb{N}
\end{align}
the velocity moments of solutions of the relativistic Vlasov equation.
Using integration by parts we get
\begin{align}
\frac{d}{dt} M_k(t)
&\leq \Big| \frac{d}{dt} \int_{\mathbb{R}^d \times \mathbb{R}^d}  \abs{v }^k f(t,x,v)\,dx\,dv \Big|
\leq k \int_{\mathbb{R}^d \times \mathbb{R}^d}  \abs{E(t,x)}  \,\abs{v}^{k-1} f(t,x,v)\,dx\,dv.
\end{align}
By means of H\"older's inequality we obtain
\begin{align}
\frac{d}{dt} M_k(t)
&\leq k \left(  \int_{\mathbb{R}^d \times \mathbb{R}^d}   \abs{v}^{k} f(t,x,v)\,dx\,dv \right)^{\frac{k-1}{k}}
\left(  \int_{\mathbb{R}^d \times \mathbb{R}^d}  \abs{E(t,x)}^{k} f(t,x,v)\,dx\,dv \right)^{\frac{1}{k}} ,
\end{align}
leading to
\begin{align}
\frac{d}{dt} M_k^{\frac{1}{k}}(t)
&\leq \left( \int_{\mathbb{R}^d}  \abs{E(t,x)}^k \rho_f(t,x)\,dx \right)^{\frac{1}{k}} .
\end{align}
If we again apply H\"older's inequality with $p= \frac{(k+1) d}{k [d - (a+1)] + d}$ and $q = \frac{(k+1)d}{k (a+1)}$, we get
\begin{align}
\frac{d}{dt} M_k^{\frac{1}{k}}(t)
&\leq \norm{\rho_f(t)}_{L^p(\mathbb{R}^d)}^{\frac{1}{k}} 
\norm{\abs{E(t)}^k}_{L^q(\mathbb{R}^d)}^{\frac{1}{k}}
= \norm{\rho_f(t)}_{L^p(\mathbb{R}^d)}^{\frac{1}{k}} 
\norm{E(t)}_{L^{kq}(\mathbb{R}^d)} .
\end{align}
Using \eqref{eq: auxiliary estimates electric field 4}  we obtain
\begin{align}
\label{eq: propagation of velocity moments intermediate result.}
\frac{d}{dt} M_k^{\frac{1}{k}}(t)
&\leq C \norm{\rho_f(t)}_{L^p(\mathbb{R}^d)}^{\frac{k+1}{k}} 
\leq C \norm{f(t)}_{L^{\infty}(\mathbb{R}^{2d})}^{\frac{(p-1)(k+1)}{p k}}
\left( \int_{\mathbb{R}^d \times \mathbb{R}^d}  \abs{v}^{(p-1)d} f(t,x,v)\,dx\,dv \right)^{\frac{k+1}{p k}} .
\end{align}
For $a \in (-1, d-2]$ such that $k \geq \frac{d a}{d- (a+1)}$ we have $(p-1) d \leq k$.
Thus if we use Lemma \ref{lemma:definition and interpolation of velocity moments} we get
\begin{align}
\frac{d}{dt} M_k^{\frac{1}{k}}(t)
&\leq C \norm{f(t)}_{L^{\infty}(\mathbb{R}^{2d})}^{\frac{a+1}{d}}
\norm{\rho_f(t)}_{L^1(\mathbb{R}^d)}^{\frac{k[d - (a+1)] - d a}{d k}}
M_{k}^{\frac{a+1}{k}}(t) .
\end{align}
Together with $\norm{\rho_f(t)}_{L^1(\mathbb{R}^d)} = \norm{\rho_f(0)}_{L^1(\mathbb{R}^d)}$ and $\norm{f(t)}_{L^{\infty}(\mathbb{R}^{2d})} = \norm{f(0)}_{L^{\infty}(\mathbb{R}^{2d})}$ (see Proposition \ref{proposition: Conservation laws relativistic Vlasov}) and Gronwall's Lemma, this shows the second part of the Theorem and the first part for $a \in \left( -1 , 0 \right]$.
Next, we consider the the case $d=3$ and $0 < a \leq \frac{2k}{3+k}$. For $k \geq \frac{d}{ad + a + 1} = \frac{3}{4 a + 1}$ and $p$ defined as previously we have that $1 \leq 3 (p-1)  \leq k$. 
By Proposition \ref{proposition: Conservation laws relativistic Vlasov} and Lemma \ref{eq:lemma estimate kinetic versus total energy} we have
$M_1(t) \leq 3 M_1(0) + C \left( a, \norm{f(0)}_{L^1 \cap L^{\infty}(\mathbb{R}^3)} \right)$.
If we use Lemma \ref{lemma:definition and interpolation of velocity moments} to estimate the $(p-1)$-th velocity moment on the right-hand side of \eqref{eq: propagation of velocity moments intermediate result.} 
 by the first and $k$-th velocity moment, we get
\begin{align}
\frac{d}{dt} M_k^{\frac{1}{k}}(t)
&\leq C \norm{f(t)}_{L^{\infty}(\mathbb{R}^{6})}^{\frac{a+1}{3}}
M_1(t)^{\frac{(2-a)k - 3 a}{3(k-1)}}
M_{k}^{\frac{4 k a + k -3}{3 k (k-1)}}(t) 
\nonumber \\
&\leq C \norm{f(0)}_{L^{\infty}(\mathbb{R}^{6})}^{\frac{a+1}{3}}
\left( M_1(0) + C \left( a, \norm{f(0)}_{L^1 \cap L^{\infty}(\mathbb{R}^3)} \right) \right)^{\frac{(2-a)k - 3 a}{3(k-1)}}
M_{k}^{\frac{4 k a + k -3}{3 k (k-1)}}(t).
\end{align}
Now let $0 \leq a \leq \frac{1}{2}$. Then $a \leq \frac{2k}{3+k}$ for all $k \in \mathbb{N}$
and 
$\frac{4ka + k -3}{3 k (k-1)} \leq \frac{1}{k}$. For all $k \geq  \frac{3}{4a +1}$ we consequently obtain  long time estimates of the $k$-th velocity moments by means of Gronwall's Lemma.
For $k \leq \frac{3}{4 a + 1}$ we have $3(p-1) \leq 1$ and the right-hand side of \eqref{eq: propagation of velocity moments intermediate result.} can solely be controlled by $M_1(t) \leq 3 M_1(0) + C \left( a, \norm{f(0)}_{L^1 \cap L^{\infty}(\mathbb{R}^3)} \right)$.
\end{proof}

\begin{proof}[Proof of Corollary \ref{cor:estimates density}]

Using \eqref{eq: auxiliary estimates electric field 1}, the interpolation inequality and Proposition \ref{prop:estimate velocity moments} we obtain  for all $k > \frac{d (a+1)}{d - (a+1)}$ that
\begin{align}
\norm{E(t)}_{L^{\infty}(\mathbb{R}^d)}  &
\leq C \left[
\norm{\rho_f(t)}_{L^1(\mathbb{R}^d)} + \norm{f(t)}_{L^{\infty}(\mathbb{R}^{2d})} 
+ \int_{\mathbb{R}^d \times \mathbb{R}^d} \abs{v}^{k} f(t,x,v)\,dx\,dv 
\right] .
\end{align}
By Theorem \ref{theorem:velocity moments short time estimate} this shows $E \in L^{\infty} \left( [0,T] , L^{\infty}(\mathbb{R}^d) \right)$ because $\norm{\rho_f(t)}_{L^1(\mathbb{R}^d)} = \norm{f_0}_{L^{1}(\mathbb{R}^{2d})}$ and $\norm{f(t)}_{L^{\infty}(\mathbb{R}^{2d})} = \norm{f_0}_{L^{\infty}(\mathbb{R}^{2d})}$.
For $n > d$ we obtain analogously to the non-relativistic case (see \cite[Corollary 5.1]{L2019}) the estimate 
\begin{align}
\norm{\rho_f(t)}_{L^{\infty}(\mathbb{R}^d)}
&\leq C \left( \norm{f(t)}_{L^{\infty}(\mathbb{R}^{2d})}^{\frac{1}{n}} 
+ \norm{f_0 \abs{v}^n}_{L^{\infty}(\mathbb{R}^{2d})}^{\frac{1}{n}} 
+ \int_0^t \norm{f(s)}_{L^{\infty}(\mathbb{R}^{2d})}^{\frac{1}{n}}
\norm{E(s)}_{L^{\infty}(\mathbb{R}^d)}\,ds \right)^n ,
\end{align}
showing the claim.
\end{proof}

\begin{proof}[Proof of Proposition \ref{prop:regularity solution Vlasov}] 
Proposition \ref{prop:regularity solution Vlasov} is proved in analogy to \cite[Proposition A.1]{LS2020}. More explicitly, we show the inequalities
\begin{align}
\label{eq:regularity of Vlasov estimate for electric field 1,infty}
\norm{E(t)}_{W_0^{1, \infty}(\mathbb{R}^d)}
&\leq C J(t)
\left( n \left< t \right> + \ln \left( 1 + \norm{f_0}_{W_n^{1,\infty}(\mathbb{R}^{2d})} \right)  \right)
e^{C \int_0^t J(s)\,ds} ,
\\
\label{eq:regularity estimate explicite dependence}
\norm{f(t)}_{W_n^{\sigma, \infty}(\mathbb{R}^{2d})} 
&\leq 
\left( 1 + \norm{f_0}_{W_n^{\sigma, \infty}(\mathbb{R}^{2d})}^{4^{\sigma}}
\right) e^{C n \big( \left< t \right> + \int_0^t  \norm{E(s)}_{W_0^{1, \infty}(\mathbb{R}^d)}\,ds \big)} ,
\\
\label{eq:regularity estimate explicite dependence H-sigma}
\norm{f(t)}_{H_n^{\sigma}(\mathbb{R}^{2d})}
&\leq \left( 1 + {\sup_{s \in [0,t]} \norm{f(s)}_{W_{n}^{\sigma, \infty}(\mathbb{R}^{2d})}^{2 \sigma}
} \right) 
\norm{f_0}_{H_n^{\sigma}(\mathbb{R}^{2d})} e^{C n \big( \left< t \right> + \int_0^t  \norm{E(s)}_{W_0^{1, \infty}(\mathbb{R}^d)}\,ds \big)},
\end{align}
where $J(t) = 1 + \norm{\rho_f(t)}_{L^1(\mathbb{R}^d)} + \norm{\rho_f(t)}_{L^{\infty}(\mathbb{R}^d)}$.
These together with $\norm{\nabla^{\sigma} \rho_f}_{L^{\infty}(\mathbb{R}^d)} \leq C \norm{f}_{W_n^{\sigma, \infty}(\mathbb{R}^{2d})}$ (which holds because $n >d$) prove the claim. \\ Define the transport operator $T = \frac{v}{\sqrt{1 + v^2}} \cdot \nabla_x  +  E \cdot \nabla_{v} $ and to note that every solution $f(t)$ of \eqref{eq:relativistic Vlasov} satisfies
\begin{equation*}
\partial_t D_z^{\alpha} f(t)  + T \left( D_z^{\alpha} f(t) \right) = - \left[ D_z^{\alpha} , T \right] f(t),
\end{equation*}
for a certain multi-index $\alpha$.
For sufficient regular functions $f,g: \mathbb{R}^{2d} \rightarrow \mathbb{R}$, by means of 
\begin{align}
p \int_{\mathbb{R}^d \times \mathbb{R}^d}
g \abs{f}^{p-2} f T \left( f \right)\,dx\,dv
&= \int_{\mathbb{R}^d \times \mathbb{R}^d} g T \left( \abs{f}^p \right)\,dx\,dv
= - \int_{\mathbb{R}^d \times \mathbb{R}^d} T \left( g \right) \abs{f}^p\,dx\,dv
\end{align}
and
$
T \left( \left< z \right>^{np} \right) \leq np \left< z \right>^{np -1} \left( 1 + \abs{E(t,x)} \right)
$ with $z=(x,v)\in\mathbb{R}^{2d}$,
we estimate 
\begin{align}
\label{eq:Vlasov propagation of regularity time derivative}
&\frac{d}{dt} \norm{\left< z \right>^n D_z^{\alpha} f(t)}_{L^p(\mathbb{R}^{2d})}^p
\nonumber \\
&\quad =  \int_{\mathbb{R}^{2d}}
\abs{D_z^\alpha f}^p T \left( \left< z \right>^{np} \right)\,dz
- p \int_{\mathbb{R}^{2d}}
\left< z \right> ^{np} \abs{D_z^{\alpha} f}^{p-2} 
\left( D_z^{\alpha} f \right)  
\left[ D_z^{\alpha} , T \right] f\,dz 
\nonumber \\
&\quad \leq np \big( 1 + \norm{E(t)}_{L^{\infty}(\mathbb{R}^d)} \big) \norm{\left< z \right>^n D_z^{\alpha} f(t)}_{L^p(\mathbb{R}^{2d})}^p
 + p \int_{\mathbb{R}^{2d}}
\left< z \right> ^{np} \abs{D_z^{\alpha} f}^{p-1} 
\abs{\left[ D_z^{\alpha} , T \right] f }\,dz,
\end{align}
where we omitted the dependence of $f$ on $t,\,x,\,v$.

\paragraph{Inequalities \eqref{eq:regularity of Vlasov estimate for electric field 1,infty} and \eqref{eq:regularity estimate explicite dependence} for $\sigma =1$:}
Using that
\begin{align}
\abs{\left[ D_x^{\alpha} D_v^{\beta} , T \right] f(t,x,v)}
&\leq C \Big( \abs{\nabla_x f(t,x,v)} + \max_{i \in \llbracket 1, d \rrbracket} \abs{\partial_{x_i} E(t,x)}  \abs{\nabla_v f(t,x,v)} \Big)
\end{align}
holds for all $\alpha, \beta \in \mathbb{N}_0^d$ such that $\abs{\alpha} + \abs{\beta} \leq 1$
and the multiplicative Young inequality, $p a b^{p-1} \leq a^p + (p-1) b^p$, we get
\begin{align}
\frac{d}{dt}  \sum_{\abs{\alpha} \leq 1} \norm{\left< z \right>^n D_z^{\alpha} f(t)}_{L^p(\mathbb{R}^{2d})}^p
&\leq 
C np \big( 1 + \norm{E(t)}_{W_0^{1,\infty}(\mathbb{R}^d)} \big)
\sum_{\abs{\alpha} \leq 1}
\norm{\left< z \right>^n  D_z^{\alpha} f(t)}_{L^p(\mathbb{R}^{2d})}^p .
\end{align}
By Gronwall's Lemma we obtain
\begin{align}
\label{eq:Vlasov propagation of regularity intermediate 1}
\norm{f(t)}_{W_n^{1, p}(\mathbb{R}^{2d})}
&\leq  \norm{f_0}_{W_n^{1, p}(\mathbb{R}^{2d})} e^{C n \left( t + \int_0^t  \norm{E(s)}_{W_0^{1,\infty}(\mathbb{R}^d)}\,ds \right)}
\end{align}
and
\begin{align}
\label{eq:gradient moments estiamte for W-n-1-infty}
\norm{f(t)}_{W_n^{1, \infty}(\mathbb{R}^{2d})}
&\leq  \norm{f_0}_{W_n^{1, \infty}(\mathbb{R}^{2d})} 
e^{C n \left( t + \int_0^t  \norm{E(s)}_{W_0^{1,\infty}(\mathbb{R}^d)}\,ds \right)}
\end{align}
if we take the limit $p \rightarrow \infty$. Note that 
\begin{align}
\norm{E(t)}_{W_0^{1, \infty}(\mathbb{R}^d)}
&\leq C \left( 1 + \norm{\rho_f(t)}_{L^1(\mathbb{R}^d)}
+ \norm{\rho_f(t)}_{L^{\infty}(
\mathbb{R}^d)}
\left( 1 + \ln \left( 1 + \norm{\nabla \rho_f(t)}_{L^{\infty}(\mathbb{R}^d)} \right) \right) \right)
\end{align}
because of Lemma \ref{lemma:auxiliary estimates} and 
$\norm{\nabla \rho_f(t)}_{L^{\infty}(\mathbb{R}^d)} \leq C \norm{f(t)}_{W_n^{1, \infty}(\mathbb{R}^{2d})}$ since $n >d$. 
Altogether this gives
\begin{align}
\norm{E(t)}_{W_0^{1,\infty}(\mathbb{R}^d)}
&\leq C \left( 1 + \norm{\rho_f(t)}_{L^1(\mathbb{R}^d)} + \norm{\rho_f(t)}_{L^{\infty}(\mathbb{R}^d)} \right)
\nonumber \\
&\quad \times 
\left(  n \left< t \right> + \ln \left( 1 + \norm{f_0}_{W_n^{1, \infty}(\mathbb{R}^{2d})} \right)  + \int_0^t  \norm{E(s)}_{W_0^{1,\infty}(\mathbb{R}^d)}\,ds  \right) . 
\end{align}
Applying Gronwall's Lemma again leads to \eqref{eq:regularity of Vlasov estimate for electric field 1,infty}.

\paragraph{Inequality \eqref{eq:regularity estimate explicite dependence} for $\sigma > 1$:}
Next, we show
\begin{align}
\label{eq:gradient moments estimate for first proof of induction}
{\sup_{s \in [0,t]}\norm{f(s)}_{W_n^{\sigma, \infty}(\mathbb{R}^{2d})} }
&\leq 
\left( 1 + \norm{f_0}_{W_n^{\sigma, \infty}(\mathbb{R}^{2d})}^2 + {\sup_{s \in [0,t]} \norm{f(s)}_{W_n^{\sigma - 1, \infty}(\mathbb{R}^{2d})}^4 }
\right) e^{C n \left( \left< t \right> + \int_0^t  \norm{E(s)}_{W_0^{1, \infty}(\mathbb{R}^d)}\,ds \right)} ,
\end{align}
which in combination with \eqref{eq:gradient moments estiamte for W-n-1-infty} implies, by induction, inequality \eqref{eq:regularity estimate explicite dependence} for all $\sigma \in \mathbb{N}$.
Using \eqref{eq:Vlasov propagation of regularity time derivative}, the multiplicative Young inequality and that $\left|D_v^{\gamma} \frac{v}{\left< v \right>}\right| \leq C$ holds for all $\gamma \in \mathbb{N}_0^d$, we obtain
\begin{align}
&\frac{d}{dt} \norm{f(t)}_{W_n^{\sigma, p}(\mathbb{R}^{2d})}^p
\nonumber \\
&\leq n p \left( 1 + \norm{E(t)}_{L^{\infty}(\mathbb{R}^d)} \right)
\norm{f(t)}_{W_n^{\sigma, p}(\mathbb{R}^{2d})}^p
\nonumber \\
&\ + p \sum_{\abs{\beta} + \abs{\gamma} \leq \sigma} \int_{\mathbb{R}^{2d}} \left< z \right>^{np} 
\abs{D_x^{\beta} D_v^{\gamma} f(t)}^{p-1}
\left( 
\left|{\left[ D_v^{\gamma} , \frac{v}{\left< v \right>} \right] \cdot \nabla_x D_x^{\beta} f(t)}\right|
+
\abs{\left[ D_x^{\beta} , E(t,x) \right] \cdot \nabla_v D_v^{\gamma} f(t)}
\right)\,dx\,dv
\nonumber \\
&\leq n p \left( 1 + \norm{E(t)}_{L^{\infty}(\mathbb{R}^d)} \right)
\norm{f(t)}_{W_n^{\sigma, p}(\mathbb{R}^{2d})}^p
\nonumber \\
&\ + p  \sum_{\abs{\beta} + \abs{\gamma} \leq \sigma} \int_{\mathbb{R}^{2d}} \left< z \right>^{np} 
\left( 
\left|{\left[ D_v^{\gamma} , \frac{v}{\left< v \right>} \right]}\right|^p \abs{\nabla_x D_x^{\beta} f(t)}^p
+
\abs{D_x^{\beta} D_v^{\gamma} f(t)}^{p-1}
\abs{\left[ D_x^{\beta} , E(t,x) \right] \cdot \nabla_v D_v^{\gamma} f(t)}
\right)\,dx\,dv
\nonumber \\
&\leq n p \left( 1 + \norm{E(t)}_{L^{\infty}(\mathbb{R}^d)} \right)
\norm{f(t)}_{W_n^{\sigma, p}(\mathbb{R}^{2d})}^p
\nonumber \\
&\ + p \sum_{\abs{\beta} + \abs{\gamma} \leq \sigma} \int_{\mathbb{R}^{2d}} \left< z \right>^{np} 
\abs{D_x^{\beta} D_v^{\gamma} f(t)}^{p-1}
\abs{\left[ D_x^{\beta} , E(t,x) \right] \cdot \nabla_v D_v^{\gamma} f(t)}\,dx\,dv,
\end{align}
where we omitted the dependence of $f$ from $x$ and $v$ and use the variable $z=(x,v)\in\mathbb{R}^{2d}$.
Note that, for $\beta$, $\gamma$ and $\delta$ multi-indices, 
\begin{align}
&\sum_{\abs{\beta} + \abs{\gamma} \leq \sigma}  
\abs{D_x^{\beta} D_v^{\gamma} f(t)}^{p-1}
\abs{\left[ D_x^{\beta} , E(t) \right] \cdot \nabla_v D_v^{\gamma} f(t)}
\nonumber \\
&\quad \leq \norm{E(t)}_{W_0^{1,\infty}(\mathbb{R}^d)}
\sum_{\abs{\beta} + \abs{\gamma} \leq \sigma} \sum_{\abs{\delta} = \abs{\beta} - 1}
\abs{D_x^{\beta} D_v^{\gamma} f(t)}^{p-1}
\abs{D_x^{\delta} \nabla_v D_v^{\gamma} f(t)}
\nonumber \\
&\qquad 
+ 
\sum_{\abs{\beta} + \abs{\gamma} \leq \sigma} \sum_{\abs{\delta} \leq \abs{\beta} - 2}
\abs{D_x^{\beta} D_v^{\gamma} f(t)}^{p-1}
\norm{E(t)}_{W_0^{\sigma,\infty}(\mathbb{R}^d)}
\abs{D_x^{\delta} \nabla_v D_v^{\gamma} f(t)} 
\nonumber \\
&\quad \leq
C \left( 1 + \norm{E(t)}_{W_0^{1,\infty}(\mathbb{R}^d)} \right)
\sum_{\abs{\beta} + \abs{\gamma} \leq \sigma}  \abs{D_x^{\beta} D_v^{\gamma} f(t)}^{p}
+ \frac{C}{p}  \norm{E(t)}_{W_0^{\sigma,\infty}(\mathbb{R}^d)}^p
\sum_{\abs{\beta} + \abs{\gamma} \leq \sigma - 1}  \abs{D_x^{\beta} D_v^{\gamma} f(t)}^{p},
\end{align}
and therefore
\begin{align}
\frac{d}{dt} \norm{f(t)}_{W_n^{\sigma, p}(\mathbb{R}^{2d})}^p
&\leq C n p \left( 1 + \norm{E(t)}_{W_0^{1, \infty}(\mathbb{R}^d)} \right)
\norm{f(t)}_{W_n^{\sigma, p}(\mathbb{R}^{2d})}^p
+ C
\norm{E(t)}_{W_0^{\sigma,\infty}(\mathbb{R}^d)}^p
\norm{f(t)}_{W_n^{\sigma - 1, p}(\mathbb{R}^{2d})}^p.
\end{align}
Applying Gronwall's Lemma, taking the $p$-th root and using \eqref{eq: auxiliary estimates electric field 4} lead to
\begin{align}
\label{eq:gradient moments estimate for first proof of induction intermediate}
\norm{f(t)}_{W_n^{\sigma,p}(\mathbb{R}^{2d})}
&\leq e^{C n  \big( \left< t \right> + \int_0^t  \norm{E(s)}_{W_0^{1, \infty}(\mathbb{R}^d)}\,ds \big)}
\left( 
\norm{f_0}_{W_n^{\sigma,p}(\mathbb{R}^{2d})}
+
{\sup_{s \in [0,t]}\norm{E(s)}_{W_0^{\sigma, \infty}(\mathbb{R}^d)} \norm{f(s)}_{W_{n}^{\sigma - 1, p}(\mathbb{R}^{2d})} }
\right)
\nonumber \\
&\leq e^{C n  \big( \left< t \right> + \int_0^t  \norm{E(s)}_{W_0^{1, \infty}(\mathbb{R}^d)}\,ds \big)}
\Big[ 
\norm{f_0}_{W_n^{\sigma,p}(\mathbb{R}^{2d})}
\nonumber \\
&\qquad 
+
{\sup_{s \in [0,t]}\norm{f(s)}_{W_{n}^{\sigma - 1, p}(\mathbb{R}^{2d})}
\left( 1 +  \norm{f(s)}_{W_n^{\sigma -1, \infty}(\mathbb{R}^{2d})} \right)  \left( 1 + \ln \big( 1 + \norm{f(s)}_{W_n^{\sigma, \infty}(\mathbb{R}^{2d})} \big) \right)  \Big]. }
\end{align}
Taking the limit $p \rightarrow \infty$ and estimating the logarithm by $C (1 + {\norm{f(s)}^{1/2}_{{W}_n^{\sigma, \infty}(\mathbb{R}^{2d})} } )$ give
\eqref{eq:gradient moments estimate for first proof of induction}.

\paragraph{Inequality \eqref{eq:regularity estimate explicite dependence H-sigma}: }
Inequality \eqref{eq:Vlasov propagation of regularity intermediate 1} proves \eqref{eq:regularity estimate explicite dependence H-sigma} for $\sigma = 1$.
From \eqref{eq:gradient moments estimate for first proof of induction intermediate} we get
\begin{align}
\norm{f(t)}_{H_n^{\sigma}(\mathbb{R}^{2d})}
&\leq 
e^{C n  \big( \left< t \right> + \int_0^t  \norm{E(s)}_{W_0^{1, \infty}(\mathbb{R}^d)}\,ds \big)}
{\sup_{s \in [0,t]}
\left( 1 + \norm{f(s)}_{W_n^{\sigma, \infty}(\mathbb{R}^{2d})}^2 \right)
\left(
\norm{f_0}_{H_n^{\sigma}(\mathbb{R}^{2d})}
+ \norm{f(s)}_{H_n^{\sigma - 1}(\mathbb{R}^{2d})}
\right) }
\end{align}
which, by induction, enables to infer \eqref{eq:regularity estimate explicite dependence H-sigma} for all $\sigma \in \mathbb{N}$.

\end{proof}

\begin{proof}[Proof of Proposition \ref{proposition: uniqueness Vlasov}]
The statement is proven in analogy to \cite[Proposition 2.1]{LS2020}. 
By means of 
\begin{align}
\partial_t (f_1 - f_2 ) = - \left< v \right> \cdot \nabla_x (f_1 - f_2 ) + \nabla K * \rho_{f_1} \cdot \nabla_v ( f_1 - f_2 ) + \nabla K * (\rho_{f_2} - \rho_{f_1} ) \cdot \nabla_v f_2
\end{align}
we denote $f:=f_1-f_2$ and estimate
\begin{align}
\frac{d}{dt} \int_{\mathbb{R}^d \times \mathbb{R}^d}  \abs{(f_1- f_2)(t,x,v)}\,dx\,dv
&\leq 
 \int_{\mathbb{R}^d \times \mathbb{R}^d} \text{sign}[f(t,x,v)] \left( \nabla K * (\rho_{f_1} - \rho_{f_2}) \right)(t,x) \cdot \nabla_v f_2(t,x,v)\,dx\,dv
\nonumber \\
&\leq \int_{\mathbb{R}^d \times \mathbb{R}^d \times \mathbb{R}^d}  \abs{\nabla K(x-y)} (\rho_{f_1} - \rho_{f_2})(t,y) \abs{\nabla_v f_2(t,x,v)}\,dx\,dv\,dy
\nonumber \\
&\leq \norm{\rho_{f_1} - \rho_{f_2}}_{L^1(\mathbb{R}^d)}
\norm{\abs{\nabla K} * \left( \int_{\mathbb{R}^d}  \abs{\nabla_ v f_2(t, \cdot, v)}\,dv \right)}_{L^{\infty}(\mathbb{R}^d)}.
\end{align}
Using \eqref{eq: auxiliary estimates electric field 1} we obtain
\begin{align}
\frac{d}{dt} \norm{(f_1 - f_2)(t)}_{L^1(\mathbb{R}^{2d})}
&\leq C \norm{(f_1 - f_2)(0)}_{L^1(\mathbb{R}^{2d})} \norm{\rho_{\abs{\nabla_v f_2}}(t)}_{L^{\frac{d}{d - (a+1)} + \delta} \cap L^{\frac{d}{d - (a+1)} - \delta} (\mathbb{R}^d)}
\end{align}
for arbitrary $0 < \delta < \frac{d}{d - (a+1)}$. Together with Gronwall's Lemma this leads to \eqref{eq:uniqueness Vlasov}.
\end{proof}

\appendix

\section{Well-posedness of the Hartree and Hartree-Fock equations}
\label{section:Well-posedness of the Hartree-Fock equations}

In this section we will prove Proposition~\ref{lemma: Hartree equation global existence of solution}. We restrict our consideration to the Hartree-Fock equation \eqref{eq:Hartree-Fock equation relativistic}. By setting the exchange term to zero in each estimate one obtains the respective result for the Hartree equation \eqref{eq:Hartree equation relativistic}.
We first consider a finite rank version of equation \eqref{eq:Hartree-Fock equation relativistic} and sketch how one generalizes the global well-posedness result from \cite{FL2007} to general inverse power law potentials. Following the approach of \cite{C1976} this will allow us to finally prove Proposition~\ref{lemma: Hartree equation global existence of solution}. It should be pointed out that this route was used in \cite{AMS2008,HLLS2010} to obtain the result for the case $a=1$. We, nevertheless, decided to present the details for the convenience of the reader.
Throughout this section we will use the notations $D^{s} = \left( 1 - \Delta \right)^{s/2}$ and $D_{\varepsilon}^s = \left( 1 - \varepsilon^2 \Delta \right)^{s/2}$. The norm of the space $\textfrak{S}^{1, \frac{1}{2}}$ can then be written as
\begin{align}
\norm{\omega}_{\textfrak{S}^{1,\frac{1}{2}}}
&= \norm{ D^{1/2} \omega  D^{1/2} }_{\textfrak{S}^1} .
\end{align}

\subsection{Finite rank system}

In the proof of Proposition~\ref{lemma: Hartree equation global existence of solution} we will use that the following set of $M$ coupled equations
\begin{align}
\label{eq:Hartree-Fock equation for orbitals}
i \varepsilon \partial_t \psi_{k}(t) = \sqrt{1 - \varepsilon^2 \Delta} \, \psi_k(t) + \frac{1}{N} \sum_{l=1}^M  \left( K * \abs{\psi_l(t)}^2 \right) \psi_k(t) 
- \frac{1}{N} \sum_{l=1}^M  \left( K * \{ \overline{\psi_l(t)} \psi_k(t) \} \right) \psi_l(t) 
\end{align}
is globally well-posed in $H^{s,M} = \left( H^s(\mathbb{R}^3) \right)^{\times M}$ with the norm
$
\norm{\left\{ \psi_{k} \right\}_{k=1}^M}_{H^{s,M}} = \left( \sum_{k=1}^M \norm{\psi_k}_{H^s(\mathbb{R}^3)}^2 \right)^{1/2} 
$
and $s= 1/2$. For the case $a=1$ this has been proven in \cite{FL2007}. We will rely on the following slight generalization of \cite[Theorem 2.1 and Theorem 2.2]{FL2007}.
\begin{lemma}
\label{lemma: Hartree equation  orbitals global existence}
Let $s \geq 1/2$, $0 < a \leq 1$, $\gamma \in \mathbb{R}$ and $M, N \geq 1$ be integers.
Let $\{ \psi_{k,0} \}_{k=1}^M \subset H^s(\mathbb{R}^3)$ satisfying $0 \leq \scp{\psi_{k,0}}{\psi_{l,0}} \leq \delta_{k,l}$, where $\delta_{k,l}$ is the Kronecker delta. If $\gamma < 0$ and $a = 1$, in addition, assume that 
\begin{align}
\abs{\gamma} < \frac{N \varepsilon}{\gamma_{\rm cr} \left( \sum_{k=1}^M \norm{\psi_{k,0}}^2_{L^2(\mathbb{R}^3)} \right)^{2/3}} ,
\end{align}
where $\gamma_{\rm cr}$ is a universal constant of order $1$. Then, there exists a unique global solution, $\{ \psi_k(t) \}_{k=1}^M \subset H^s(\mathbb{R}^3)$ solving \eqref{eq:Hartree-Fock equation for orbitals} such that 
\begin{align}
\psi_k(t) = \psi_{k,0} \quad \text{and} \quad
\psi_k \in C^0 \left( \mathbb{R}_{+} , H^s(\mathbb{R}^3) \right) \cap C^1 \left( \mathbb{R}_{+} , H^{s-1} (\mathbb{R}^3) \right)
\end{align}
holds, for all $k= 1, \ldots, N$. The solution continuously depends on the initial data and 
\begin{align}
\scp{\psi_k(t)}{\psi_l(t)}_{L^2(\mathbb{R}^3)} = \scp{\psi_{k,0}}{\psi_{l,0}}_{L^2(\mathbb{R}^3)}
\quad \text{and} \quad
\mathcal{E}_{\rm HF} \left[\{ \psi_k(t) \}_{k=1}^M \right]
= 
\mathcal{E}_{\rm HF} \left[\{ \psi_k(0) \}_{k=1}^M \right] 
\end{align}
hold for all $1 \leq k , l \leq M$ and $t \in \mathbb{R}_{+}$ with the energy $\mathcal{E}_{\rm HF}$ being defined as in 
\eqref{eq:definition energy finite rank Hartree system}.
\end{lemma}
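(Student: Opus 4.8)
We shall follow the argument of \cite[Theorem~2.1 and Theorem~2.2]{FL2007} (which cover $a=1$), the only modifications being that the interaction estimates tied to the Coulomb kernel are replaced by \eqref{eq:L-infty estimate on convoluted potential}, \eqref{eq:local well posedness finite rank system estimate for lipschitz property}, \eqref{eq:Hatree local well-posedness auxiliary estimates 4} and \eqref{eq:Hatree local well-posedness auxiliary estimates 5}, which hold for all $0<a\leq 1$. Denoting by $U_\varepsilon(t)=e^{-it\sqrt{1-\varepsilon^2\Delta}/\varepsilon}$ the free propagator, which is a unitary group on every $H^s(\mathbb{R}^3)$, and by $\mathcal{N}_k[\Psi]$ the nonlinear terms of \eqref{eq:Hartree-Fock equation for orbitals}, local well-posedness in $H^{s,M}$ for $s\geq 1/2$ is obtained by a standard contraction mapping argument for the Duhamel map
\begin{equation*}
\Psi(t)\mapsto\Big(U_\varepsilon(t)\psi_{k,0}-\frac{i}{\varepsilon}\int_0^tU_\varepsilon(t-\tau)\,\mathcal{N}_k[\Psi(\tau)]\,d\tau\Big)_{k=1}^M
\end{equation*}
on a ball of $C\big([0,T],H^{s,M}\big)$. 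The nonlinear and Lipschitz bounds needed here follow from \eqref{eq:local well posedness finite rank system estimate for lipschitz property}, which yields $\norm{\mathcal{N}_k[\Psi]}_{H^s}\leq C\abs{\gamma}\,\norm{\Psi}_{H^{s,M}}^2\,\norm{\psi_k}_{H^s}$ and the analogous estimate for differences, so that $T$ depends only on $\abs{\gamma}$ and $\norm{\Psi_0}_{H^{s,M}}$; uniqueness and continuous dependence on the data come for free from the contraction.

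On the maximal interval of existence we record the conservation laws. Writing \eqref{eq:Hartree-Fock equation for orbitals} as $i\varepsilon\partial_t\psi_k=h_{\rm HF}(t)\psi_k$, with $h_{\rm HF}(t)=\sqrt{1-\varepsilon^2\Delta}+K*\rho_t-X_t$, where $\rho_t$ is real and $X_t$ has the Hermitian kernel $N^{-1}K(x-y)\sum_l\psi_l(t,x)\overline{\psi_l(t,y)}$, the operator $h_{\rm HF}(t)$ is formally self-adjoint, so that
\begin{equation*}
\frac{d}{dt}\scp{\psi_k(t)}{\psi_l(t)}_{L^2(\mathbb{R}^3)}
=\frac{i}{\varepsilon}\Big(\scp{h_{\rm HF}\psi_k}{\psi_l}-\scp{\psi_k}{h_{\rm HF}\psi_l}\Big)=0 .
\end{equation*}
Hence the Gram matrix is constant in time and, in particular, $0\leq\scp{\psi_{k,0}}{\psi_{l,0}}\leq\delta_{k,l}$ propagates. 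Similarly, \eqref{eq:Hartree-Fock equation for orbitals} is the Hamiltonian system $i\varepsilon\partial_t\psi_k=\delta\mathcal{E}_{\rm HF}/\delta\overline{\psi_k}$ associated with the energy \eqref{eq:definition energy finite rank Hartree system}, so that $\frac{d}{dt}\mathcal{E}_{\rm HF}[\Psi(t)]=0$. As in \cite{FL2007} these identities, proved first for sufficiently regular solutions (or after a mollification of the nonlinearity), extend to the $H^s$-solutions by approximation.

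To upgrade local to global existence it suffices to bound $\sum_{k=1}^M\norm{\psi_k(t)}_{H^{1/2}(\mathbb{R}^3)}^2$. For fixed $\varepsilon>0$ the kinetic part $\mathcal{T}_\varepsilon[\Psi]:=\sum_k\scp{\psi_k}{\sqrt{1-\varepsilon^2\Delta}\,\psi_k}$ of $\mathcal{E}_{\rm HF}$ is comparable to $\sum_k\norm{\psi_k}_{H^{1/2}}^2$ and obeys $\mathcal{T}_\varepsilon[\Psi]\geq\varepsilon\,\tr\big(\sqrt{-\Delta}\,\omega\big)$ with $\omega:=\sum_{k=1}^M\ketbr{\psi_k}$. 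By \eqref{eq:Hatree local well-posedness auxiliary estimates 4} and \eqref{eq:Hatree local well-posedness auxiliary estimates 5} the direct and the exchange part of the potential energy are each bounded by $C\abs{\gamma}N^{-1}\big(\tr(\sqrt{-\Delta}\,\omega)\big)^{a}\big(\tr\omega\big)^{2-a}$, and $\tr\omega=\sum_k\norm{\psi_k}_{L^2}^2$ is conserved. If $0<a<1$, Young's inequality for products absorbs $\big(\tr(\sqrt{-\Delta}\,\omega)\big)^{a}$ into $\tfrac12\varepsilon\,\tr(\sqrt{-\Delta}\,\omega)$ up to an additive constant depending on $a,\gamma,\varepsilon,N$ and $\tr\omega$, so conservation of $\mathcal{E}_{\rm HF}$ bounds $\mathcal{T}_\varepsilon[\Psi(t)]$ uniformly on any finite interval, for every $\gamma\in\mathbb{R}$. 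If $a=1$, the potential energy is instead estimated through a relativistic Lieb--Thirring (Daubechies-type) inequality together with conservation of $\tr\omega$ and interpolation, giving a bound of the form $C\abs{\gamma}N^{-1}\big(\tr\omega\big)^{2/3}\,\tr(\sqrt{-\Delta}\,\omega)$; for $\gamma\geq 0$ this is harmless, while for $\gamma<0$ the assumed smallness $\abs{\gamma}<N\varepsilon/(\gamma_{\rm cr}(\sum_k\norm{\psi_{k,0}}_{L^2}^2)^{2/3})$ makes the coefficient of $\tr(\sqrt{-\Delta}\,\omega)$ strictly smaller than $1$, whence $\mathcal{T}_\varepsilon[\Psi(t)]$ again stays bounded. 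In all cases the blow-up alternative of the contraction argument forces the solution to be global, and the regularity $\psi_k\in C^0(\mathbb{R}_{+},H^s)\cap C^1(\mathbb{R}_{+},H^{s-1})$ for $s\geq 1/2$ then follows by propagating the $H^s$-norm with a Gronwall estimate based once more on \eqref{eq:local well posedness finite rank system estimate for lipschitz property} and the $H^{1/2}$-bound just derived.

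The main difficulty is the a priori estimate in the attractive Coulomb case $a=1$, $\gamma<0$: one has to extract from the potential energy the correct power $2/3$ of the mass and a linear dependence on the kinetic energy (the relativistic analogue of the Chandrasekhar collapse threshold), for which the crude Hardy-type bound is not enough and the relativistic Lieb--Thirring inequality is essential; the remaining steps are routine adaptations of \cite{FL2007}.
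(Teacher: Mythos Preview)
Your proof is correct and follows essentially the same route as the paper: local well-posedness via contraction using \eqref{eq:local well posedness finite rank system estimate for lipschitz property}, conservation of the Gram matrix and of $\mathcal{E}_{\rm HF}$, the a priori $H^{1/2}$-bound obtained from energy conservation together with \eqref{eq:Hatree local well-posedness auxiliary estimates 4}--\eqref{eq:Hatree local well-posedness auxiliary estimates 5} and Young's inequality for $0<a<1$, and for $a=1$ the Hardy--Littlewood--Sobolev plus relativistic Lieb--Thirring inequality (which is precisely \cite[Lemma~A.1]{FL2007}) to extract the factor $(\tr\omega)^{2/3}$, followed by propagation of higher $H^s$-regularity by Gronwall. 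The only cosmetic difference is that for $a=1$, $\gamma\geq 0$ the paper argues directly from the sign of the potential energy (via \eqref{eq:Hatree local well-posedness auxiliary estimates 5}) rather than through the Lieb--Thirring bound, but both lead to the same conclusion.
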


\begin{proof}[Proof of Lemma \ref{lemma: Hartree equation  orbitals global existence}]

Let $s \geq 1/2$, $M \geq 1$ be an integer and  $\boldsymbol{F}= (F_1, \ldots, F_M): H^{s,M} \rightarrow H^{s,M}$ be given by
$F_k \left( \{ \psi_k \}_{k=1}^M \right)
= \frac{1}{N} \sum_{l=1}^M  \left( K * \abs{\psi_l(t)}^2 \right) \psi_k(t) 
- \frac{1}{N} \sum_{l=1}^M  \left( K * \{ \overline{\psi_l(t)} \psi_k(t) \} \right) \psi_l(t) $. Using \eqref{eq:local well posedness finite rank system estimate for lipschitz property} and straightforward manipulations we obtain for  
\begin{align}
\norm{\boldsymbol{F}(\{ \psi_k \}_{k=1}^M) - \boldsymbol{F}(\{ \varphi_k \}_{k=1}^M)}_{H^{s,M}}
&\leq C \left( \norm{\{ \psi_k \}_{k=1}^M}^2_{H^{s,M}} + \norm{\{ \varphi_k \}_{k=1}^M}^2_{H^{s,M}} \right)
\norm{\{ \psi_k - \varphi_k \}_{k=1}^M }_{H^{s,M}} ,
\\
\label{eq:well-posedness Hartree Fock nonlinearity auxilliary estimate}
\norm{\boldsymbol{F}(\{ \psi_k \}_{k=1}^M)}_{H^{s,M}}
&\leq C \norm{\{ \psi_k \}_{k=1}^M}^2_{H^{\frac{1}{2},M}} \norm{\{ \psi_k \}_{k=1}^M}_{H^{s,M}} ,
\end{align}
for all $\{ \psi_k \}_{k=1}^M, \{ \varphi_k \}_{k=1}^M \in H^{s,M}$.
By standard methods we obtain the local-in-time existence and uniqueness of $\{ \psi_k(t) \}_{k=1}^M$ as well as the continuous dependence on the initial data, i.e. the analogue of \cite[Theorem 2.1]{FL2007}. If $t$ is smaller than the maximal time of existence we, moreover, have both that 
$\scp{\psi_k(t)}{\psi_l(t)}_{L^2(\mathbb{R}^3)} = \scp{\psi_k(0)}{\psi_l(0)}_{L^2(\mathbb{R}^3)}$ for all $1 \leq k , l \leq M$ and that the energy, defined by
\begin{align}
\label{eq:definition energy finite rank Hartree system}
\mathcal{E}_{\rm HF} \left[\{ \psi_k(t) \}_{k=1}^M \right] 
&= \tr \left( \sqrt{ 1 - \varepsilon^2 \Delta} \, \omega_t^{\leq M} \right) + \frac{1}{2} \tr \left( \left( K * \rho_t^{\leq M} - X_t^{\leq M} \right) \, \omega^{\leq M}_t \right)  
\end{align}
with $\omega_t^{\leq M} = \sum_{j=1}^M \ket{\psi_j(t)} \bra{\psi_j(t)}$, $\rho_t^{\leq M}(x) = N^{-1} \omega_t^{\leq M}(x;x)$ and
$X_t^{\leq M}(x;y) = N^{-1} K(x-y) \omega_t^{\leq M}(x;y)$, are conserved quantities.\\
By means of the integral version of \eqref{eq:Hartree-Fock equation for orbitals}, inequality \eqref{eq:well-posedness Hartree Fock nonlinearity auxilliary estimate} and Gronwall's Lemma one can show in the same spirit of \cite[p. 57]{Le2007} that, for any $s > 1/2$ and all times $T_*$ smaller than the maximal time of existence
\begin{align}
\sup_{0 \leq t \leq T_*} \norm{\{ \psi_k(t) \}_{k=1}^M}_{H^{s,M}}
&\leq C \Big(  T_* ,
\norm{\{ \psi_k(0) \}_{k=1}^M}_{H^{s,M}},
\sup_{0 \leq t \leq T_*} \norm{\{ \psi_k(t) \}_{k=1}^M}_{H^{\frac{1}{2},M}}
\Big) .
\end{align}
This implies that the maximal time of existence of any $H^s$-valued solution with $s> 1/2$ is the same as the maximal time of existence of the $H^{1/2}$-valued solution.
Hence it suffices to show global well-posedness of the $H^{1/2}$-valued solution. In this regard note that for $0 < a < 1$
\begin{align}
\Big| \mathcal{E}_{\rm HF} \left[\{ \psi_k(t) \}_{k=1}^M \right] 
- \tr \left( \sqrt{1 - \varepsilon^2 \Delta} \omega_t^{\leq M}\right)
\Big|
&\leq \frac{1}{2} \tr \left( \sqrt{- \varepsilon^2 \Delta} \; \omega_t^{\leq M} \right)
+ C \left( \varepsilon N^{-1} \right)^{\frac{1}{1-a}} \left( \tr \left( \omega_t^{\leq M} \right) \right)^{\frac{2-a}{1-a}}
\end{align}
holds because of  \eqref{eq:Hatree local well-posedness auxiliary estimates 4}, \eqref{eq:Hatree local well-posedness auxiliary estimates 5} and Young's inequality for products. We then get 
\begin{align}
\label{eq: global well-posedness of H-1/2 solutions estimate kinetic energy}
\norm{\{ \psi_k(t) \}_{k=1}^M}_{H^{\frac{1}{2},M}}^2
&\leq  \varepsilon^{-1}  \tr \left( \sqrt{1 - \varepsilon^2 \Delta} \, \omega_t^{\leq M} \right)
\nonumber \\
&\leq 2 \varepsilon^{-1} 
\mathcal{E}_{\rm HF} \left[\{ \psi_k(t) \}_{k=1}^M \right]  + C \left( \varepsilon^a N^{-1} \right)^{\frac{1}{1-a}} 
\left( \tr \left( \omega_t^{\leq M} \right) \right)^{\frac{2-a}{1-a}}
\\
&\leq C \varepsilon^{-1} \tr \left( \sqrt{- \varepsilon^2 \Delta} \, \omega_0^{\leq M} \right)
+ C \left( \varepsilon^a N^{-1} \right)^{\frac{1}{1-a}} 
\left( \tr \left( \omega_0^{\leq M} \right) \right)^{\frac{2-a}{1-a}}
\end{align}
for all $0< a < 1$ and $\gamma \in \mathbb{R}$
by the conservation of the energy and mass.  
If $\gamma \in \mathbb{R}_{+}$ the second summand on the right-hand side of \eqref{eq:definition energy finite rank Hartree system} is positive because of \eqref{eq:Hatree local well-posedness auxiliary estimates 5}. Together with  \eqref{eq:Hatree local well-posedness auxiliary estimates 4} and the conservation of energy this gives
\begin{align}
\norm{\{ \psi_k(t) \}_{k=1}^M}_{H^{\frac{1}{2},M}}^2
&\leq  \varepsilon^{-1} 
\mathcal{E}_{\rm HF} \left[\{ \psi_k(t) \}_{k=1}^M \right]  
\leq 
C \varepsilon^{-1} \tr \left( \sqrt{1 - \varepsilon^2 \Delta} \omega_0^{\leq M} \right) \left( 1 + \tr \left( \omega_0^{\leq M} \right) \right) 
\end{align} 
for $a= 1$ and $\gamma \in \mathbb{R}_{+}$.
For $a= 1$ and $\gamma \in \mathbb{R}_{-}$ we proceed as in \cite{FL2007}. First, we notice that the exchange term is negative in this case. Second, we use the Hardy--Littlewood--Sobolev inequality and the interpolation inequality to estimate
\begin{align}
\tr \left( K * \rho^{\leq M}_t \, \omega^{\leq M}_t \right)
&\leq C \,  N \, \abs{\gamma} \,   \left( \int_{\mathbb{R}^3} \rho_t^{\leq M}(x)\,dx \right)^{2/3} \int_{\mathbb{R}^3} (\rho^{\leq M}_t(x))^{4/3}\,dx
\nonumber \\
\end{align}
Third, we apply $\int_{\mathbb{R}^3}\rho_t^{\leq M}(x)\,dx  = N^{-1} \tr \left( \omega^{\leq M}_0 \right)$ and \cite[Lemma A.1]{FL2007}
to get
\begin{align}
\tr \left( K * \rho_t^{\leq M} \, \omega^{\leq M}_t \right)
&\leq C \, N^{-1} \, \abs{\gamma} \,   \left( \tr \left( \omega^{\leq M}_0 \right) \right)^{2/3}
\norm{\{ \psi_k(t) \}_{k=1}^M}_{H^{\frac{1}{2},M}}^2 .
\end{align}
This leads to
\begin{align}
\norm{\{ \psi_k(t) \}_{k=1}^M}_{H^{\frac{1}{2},M}}^2
&\leq \varepsilon^{-1}  \mathcal{E}_{\rm HF} \left[\{ \psi_k(t) \}_{k=1}^M \right] 
+ (2 \varepsilon )^{-1} \abs{\tr \left( K * \rho_t^{\leq M} \, \omega^{\leq M}_t \right) }
\nonumber \\
&\leq \varepsilon^{-1}  \mathcal{E}_{\rm HF} \left[\{ \psi_k(0) \}_{k=1}^M \right] 
+ \frac{C \abs{\gamma}}{2 N \varepsilon}
\left( \tr \left( \omega^{\leq M}_0 \right) \right)^{2/3} \norm{\{ \psi_k(t) \}_{k=1}^M}_{H^{\frac{1}{2},M}}^2 .
\end{align}
Assuming $1  > \left( \frac{C \abs{\gamma}}{2 N \varepsilon} \tr \left( \omega^{\leq M}_0 \right)^{2/3} \right)$ we obtain
$\norm{\{ \psi_k(t) \}_{k=1}^M}_{H^{\frac{1}{2},M}}^2
\leq C \varepsilon^{-1}  \mathcal{E}_{\rm HF} \left[\{ \psi_k(0) \}_{k=1}^M \right]$.
\end{proof}

\subsection{Proof of Proposition~\ref{lemma: Hartree equation global existence of solution}}

Note that the integrated form of \eqref{eq:Hartree-Fock equation relativistic} is given by
\begin{align}
\label{eq:Hartree equation relativistic integral version}
\omega_{N,t}
&= e^{- i \sqrt{1 - \varepsilon^2 \Delta} \, t} \omega_{N,0}  e^{ i \sqrt{1 - \varepsilon^2 \Delta} \, t}
-  \int_0^t 
e^{- i \sqrt{1 - \varepsilon^2 \Delta} \, (t-s)} \, i
\left[ \left( K * \rho_s  - X_s \right), \omega_{N,s} \right]
e^{ i \sqrt{1 - \varepsilon^2 \Delta} \, (t-s)}\,ds.
\end{align}

\begin{lemma}
\label{lemma: Hartree equation local existence of solution}
For $a \in (0,1]$ equation \eqref{eq:Hartree equation relativistic integral version} has a unique local solution in $\textfrak{S}^{1,\frac{1}{2}} \left( L^2(\mathbb{R}^3) \right)$.
\end{lemma}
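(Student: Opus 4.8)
The plan is to solve the integral equation \eqref{eq:Hartree equation relativistic integral version} by a contraction mapping argument, in the spirit of \cite{C1976}, carried out directly at the level of density matrices in the Banach space $X_T:=C\big([0,T],\textfrak{S}^{1,\frac12}(L^2(\mathbb{R}^3))\big)$ equipped with $\norm{\omega}_{X_T}=\sup_{0\le t\le T}\norm{\omega_t}_{\textfrak{S}^{1,\frac12}}$. First I would define the Picard map $\Phi$ that sends $\omega\in X_T$ to the right-hand side of \eqref{eq:Hartree equation relativistic integral version} with $\rho_s$ and $X_s$ computed from $\omega_s$, and then look for a fixed point of $\Phi$ on a small ball of $X_T$ with $T$ small. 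The basic point that makes this work is that the free propagator $\mathcal{U}(t)$ in \eqref{eq:Hartree equation relativistic integral version} is a function of $-\Delta$, hence commutes with $D^{1/2}=(1-\Delta)^{1/4}$; consequently the conjugation $\omega\mapsto\mathcal{U}(t)\,\omega\,\mathcal{U}(t)^{*}$ is an isometry of $\textfrak{S}^{1}$ and of $\textfrak{S}^{1,\frac12}$ and is strongly continuous in $t$, so that $\Phi$ is well defined and maps $X_T$ into $C\big([0,T],\textfrak{S}^{1,\frac12}\big)$ as soon as the nonlinear term $\mathcal{N}(\omega):=\big[\,K*\rho_\omega-X_\omega\,,\,\omega\,\big]$ is controlled in $\textfrak{S}^{1,\frac12}$. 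Thus the whole scheme reduces to the quadratic estimate
\begin{equation}\label{eq:plan-key-estimate}
\begin{aligned}
\norm{\mathcal{N}(\omega)}_{\textfrak{S}^{1,\frac12}}&\le C\abs{\gamma}N^{-1}\norm{\omega}_{\textfrak{S}^{1,\frac12}}^{2},\\
\norm{\mathcal{N}(\omega_1)-\mathcal{N}(\omega_2)}_{\textfrak{S}^{1,\frac12}}&\le C\abs{\gamma}N^{-1}\big(\norm{\omega_1}_{\textfrak{S}^{1,\frac12}}+\norm{\omega_2}_{\textfrak{S}^{1,\frac12}}\big)\norm{\omega_1-\omega_2}_{\textfrak{S}^{1,\frac12}}.
\end{aligned}
\end{equation}

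Establishing \eqref{eq:plan-key-estimate} is the heart of the matter and the step I expect to be the main obstacle. The crucial idea is to spectrally decompose not $\omega$ itself but the operator $D^{1/2}\omega D^{1/2}\in\textfrak{S}^{1}$: if $D^{1/2}\omega D^{1/2}=\sum_j\mu_j\ketbr{\chi_j}$ with $\{\chi_j\}$ orthonormal and $\sum_j\abs{\mu_j}=\norm{\omega}_{\textfrak{S}^{1,\frac12}}$, then, setting $u_j:=D^{-1/2}\chi_j$, one has $\omega=\sum_j\mu_j\ketbr{u_j}$ with $\norm{u_j}_{H^{1/2}(\mathbb{R}^3)}=1$, so that $\rho_\omega=N^{-1}\sum_j\mu_j\abs{u_j}^{2}$ is a well-defined element of $L^1(\mathbb{R}^3)$ and $X_\omega\psi=N^{-1}\sum_j\mu_j\,u_j\,\big(K*(\overline{u_j}\psi)\big)$. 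The virtue of this choice is that every factor produced below is weighted by $\abs{\mu_j}$ with $\sum_j\abs{\mu_j}=\norm{\omega}_{\textfrak{S}^{1,\frac12}}$; decomposing $\omega$ itself would instead bring in $\tr\big(D^{1/2}\abs{\omega}D^{1/2}\big)$, which is \emph{not} controlled by $\norm{\omega}_{\textfrak{S}^{1,\frac12}}$ for a sign-indefinite $\omega$. For a representative term one computes $D^{1/2}(K*\rho_\omega)\,\omega\,D^{1/2}=\sum_j\mu_j\ketbra{D^{1/2}(K*\rho_\omega)u_j}{\chi_j}$, so that
\begin{equation*}
\norm{D^{1/2}(K*\rho_\omega)\,\omega\,D^{1/2}}_{\textfrak{S}^{1}}
\le\sum_j\abs{\mu_j}\,\norm{(K*\rho_\omega)u_j}_{H^{1/2}}
\le N^{-1}\sum_{j,l}\abs{\mu_j}\abs{\mu_l}\,\norm{\big(K*(\overline{u_l}u_l)\big)u_j}_{H^{1/2}},
\end{equation*}
and \eqref{eq:local well posedness finite rank system estimate for lipschitz property} with $s=\tfrac12$ — its three right-hand terms all collapsing to $\norm{u_l}_{H^{1/2}}^{2}\norm{u_j}_{H^{1/2}}=1$ — bounds every summand by $C\abs{\gamma}$, which yields the bound $C\abs{\gamma}N^{-1}\norm{\omega}_{\textfrak{S}^{1,\frac12}}^{2}$. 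The terms $\omega\,(K*\rho_\omega)$, $X_\omega\,\omega$, $\omega\,X_\omega$ are handled in exactly the same manner — for the exchange terms one applies \eqref{eq:local well posedness finite rank system estimate for lipschitz property} with $f=\overline{u_l}$, $g=u_j$, $h=u_l$, and invokes \eqref{eq:L-infty estimate on convoluted potential} wherever an $L^\infty$ bound on $K*(fg)$ is more convenient — and the Lipschitz inequality in \eqref{eq:plan-key-estimate} comes out of the same computation after expanding $\mathcal{N}(\omega_1)-\mathcal{N}(\omega_2)$ into terms each linear in $\omega_1-\omega_2$ and linear in $\omega_1$ or $\omega_2$, and decomposing $D^{1/2}(\omega_1-\omega_2)D^{1/2}$ spectrally as above.

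Granting \eqref{eq:plan-key-estimate}, I would conclude in the standard way: put $R:=2\norm{\omega_{N,0}}_{\textfrak{S}^{1,\frac12}}$ and choose $T>0$ small enough that $C\abs{\gamma}N^{-1}R\,T<\tfrac12$, the implicit constant possibly depending on $\varepsilon$ through the time integration in \eqref{eq:Hartree equation relativistic integral version}; then $\Phi$ maps the closed ball $\overline{B_R}\subset X_T$ into itself and is a contraction on it, so the Banach fixed point theorem produces a unique $\omega\in\overline{B_R}$ solving \eqref{eq:Hartree equation relativistic integral version} on $[0,T]$, and a routine continuation argument promotes this to uniqueness in the whole of $C\big([0,T],\textfrak{S}^{1,\frac12}\big)$. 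Hermiticity $\omega_t^{*}=\omega_t$ is preserved by the iteration because the conjugation $\mathcal{U}(t)\,\cdot\,\mathcal{U}(t)^{*}$ preserves it and $-i$ times the commutator of two self-adjoint operators is again self-adjoint, and $t\mapsto\omega_t$ is norm-continuous into $\textfrak{S}^{1,\frac12}$ by strong continuity of $\mathcal{U}$. As emphasized, the genuinely delicate part is \eqref{eq:plan-key-estimate}: the care goes into the right choice of spectral decomposition, which keeps every estimate at the level of the $H^{1/2}$-normalized functions $u_j$, and into the exchange term $X_\omega$, which is the reason a crude $\textfrak{S}^{1}$-type estimate alone does not close the argument.
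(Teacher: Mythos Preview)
Your proposal is correct and follows essentially the same route as the paper: both reduce local well-posedness to the bilinear estimate $\norm{[K*\rho_\omega-X_\omega,\widetilde\omega]}_{\textfrak{S}^{1,\frac12}}\le C\abs{\gamma}N^{-1}\norm{\omega}_{\textfrak{S}^{1,\frac12}}\norm{\widetilde\omega}_{\textfrak{S}^{1,\frac12}}$ (the paper invoking Segal's theorem \cite{Segal1963} rather than running the Banach fixed point by hand), and both prove it from the trilinear bound \eqref{eq:local well posedness finite rank system estimate for lipschitz property} together with a spectral decomposition of $D^{1/2}\omega D^{1/2}$ to handle sign-indefinite $\omega$. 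The only cosmetic difference is that the paper first treats positive $\omega,\widetilde\omega$ via the operator-norm bound $\norm{D^{1/2}(K*\rho_\omega)D^{-1/2}}_{\textfrak{S}^\infty}\le C\abs{\gamma}N^{-1}\norm{\omega}_{\textfrak{S}^{1,\frac12}}$ and then reduces the general case to this by splitting $D^{1/2}\omega D^{1/2}$ into its positive and negative parts, whereas you spectrally decompose $D^{1/2}\omega D^{1/2}$ directly from the outset.
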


\begin{proof}[Proof of Lemma \ref{lemma: Hartree equation local existence of solution}]
Since $\sqrt{1 - \varepsilon^2 \Delta}$ is self-adjoint and commutes with $D ^{\frac{1}{2}}$ we have that $\mathbb{R}_+ \rightarrow \textfrak{S}^{1,\frac{1}{2}} \left( L^2(\mathbb{R}^d) \right)$, $\omega \mapsto e^{- i \sqrt{1 - \varepsilon^2 \Delta} t} \omega e^{ i \sqrt{1 - \varepsilon^2 \Delta} t}$ defines a strongly continuous semigroup. Lemma \ref{lemma: Hartree equation local existence of solution} then follows from
\cite[Theorem 1]{Segal1963} and the Lipschitz property of $\textfrak{S}^{1,\frac{1}{2}} \left( L^2(\mathbb{R}^3) \right) \rightarrow \textfrak{S}^{1,\frac{1}{2}} \left( L^2(\mathbb{R}^3) \right)$, $\omega \mapsto i \left[ \left( K * \rho - X \right) , \omega \right]$. To prove this fact we let
$\omega, \widetilde{\omega} \in \textfrak{S}^{1,\frac{1}{2}}_{+} \left( L^2(\mathbb{R}^3) \right)$, $\rho(x) = N^{-1} \omega(x;x)$ and $X(x;y) = N^{-1} K(x-y) \omega(x;y)$. By H\"older's inequality for Schatten spaces we obtain
\begin{align}
\norm{i \left[ K * \rho , \widetilde{\omega} \right]}_{\textfrak{S}^{1,\frac{1}{2}}}
&\leq 
2 \norm{D^{\frac{1}{2}} K * \rho \, \widetilde{\omega}^{1/2}}_{\textfrak{S}^{2}} \norm{\widetilde{\omega}^{1/2} D^{\frac{1}{2}} }_{\textfrak{S}^{2}}
\nonumber \\
&=
2 \norm{D^{\frac{1}{2}} K * \rho \, \widetilde{\omega} \, K * \rho \, D^{\frac{1}{2}}}_{\textfrak{S}^{1}}^{1/2} \norm{\widetilde{\omega}}_{\textfrak{S}^{1,\frac{1}{2}}}^{1/2}
\nonumber \\
&\leq 
2 \norm{D^{\frac{1}{2}} K * \rho \, D^{-\frac{1}{2}}}_{\textfrak{S}^{\infty}}^{1/2} \norm{ D^{-\frac{1}{2}} \, K * \rho \, D^{\frac{1}{2}}}_{\textfrak{S}^{\infty}}^{1/2} \norm{D^{\frac{1}{2}} \widetilde{\omega} D^{\frac{1}{2}}}_{\textfrak{S}^{1}}^{1/2} \norm{\widetilde{\omega}}_{\textfrak{S}^{1,\frac{1}{2}}}^{1/2}
\nonumber \\
&\leq 2 \norm{D^{\frac{1}{2}} K * \rho \, D^{-\frac{1}{2}}}_{\textfrak{S}^{\infty}}  \norm{\widetilde{\omega}}_{\textfrak{S}^{1,\frac{1}{2}}} .
\end{align}
Similarly,
$\norm{i \left[ X , \widetilde{\omega} \right]}_{\textfrak{S}^{1,\frac{1}{2}}}
\leq 2 \norm{D^{\frac{1}{2}} X \, D^{-\frac{1}{2}}}_{\textfrak{S}^{\infty}}  \norm{\widetilde{\omega}}_{\textfrak{S}^{1,\frac{1}{2}}}$.
By the spectral theorem there exists a spectral set $\left\{ \lambda_j , \varphi_j \right\}_{j \in \mathbb{N}}$ with $\lambda_j \geq 0$ for all $j \in \mathbb{N}$ such that $\omega = \sum_{j \in \mathbb{N}} \lambda_j \ket{\varphi_j} \bra{\varphi_j}$. From 
$D^{\frac{1}{2}} \varphi_j = \lambda_j^{-1} D^{\frac{1}{2}} \omega \, \varphi_j$  we conclude
$\norm{D^{\frac{1}{2}} \varphi}_{L^2(\mathbb{R}^3)}
\leq \lambda_j^{-1} \norm{D^{\frac{1}{2}} \omega D^{\frac{1}{2}}}_{\textfrak{S}^{\infty}} \norm{D^{- \frac{1}{2}} \varphi_j}_{L^2(\mathbb{R}^3)}
\leq \lambda_j^{-1} \norm{D^{\frac{1}{2}} \omega D^{\frac{1}{2}}}_{\textfrak{S}^1} \norm{\varphi_j}_{L^2(\mathbb{R}^3)}$. This implies $\varphi \in H^{\frac{1}{2}}(\mathbb{R}^3)$, $D^{\frac{1}{2}} \omega D^{\frac{1}{2}} = \sum_{j \in \mathbb{N}} \lambda_j \ket{D^{\frac{1}{2}} \varphi_j} \bra{D^{\frac{1}{2}} \varphi_j}$, 
\begin{align}
\label{eq:Sobolev norm of omega in terms of spectral set}
\norm{\omega}_{\textfrak{S}^{1,\frac{1}{2}}} = \tr  \left( D^{\frac{1}{2}} \omega D^{\frac{1}{2}} \right) = \sum_{j \in \mathbb{N}} \lambda_j \norm{D^{\frac{1}{2}} \varphi_j}_{L^2(\mathbb{R}^3)}^2 
\quad \text{and} \quad
\rho(x) = N^{-1} \sum_{j \in \mathbb{N}} \lambda_j  \abs{\varphi_j(x)}^2 .
\end{align}
Using \eqref{eq:local well posedness finite rank system estimate for lipschitz property} we estimate for $h \in L^2(\mathbb{R}^3)$
\begin{align}
\norm{D^{\frac{1}{2}} K * \rho \, D^{- \frac{1}{2}} h}_{L^2(\mathbb{R}^3)}
&\leq N^{-1} \sum_{j \in \mathbb{N}} \lambda_j
\norm{D^{\frac{1}{2}} K * (\abs{\varphi_j}^2) \, D^{- \frac{1}{2}} h}_{L^2(\mathbb{R}^3)}
\leq C N^{-1} \abs{\gamma}  \norm{\omega}_{\textfrak{S}^{1, \frac{1}{2}}} \norm{h}_{L^2(\mathbb{R}^3)} ,
\\
\norm{D^{\frac{1}{2}} X \, D^{- \frac{1}{2}} h}_{L^2(\mathbb{R}^3)}
&\leq N^{-1} \sum_{j \in \mathbb{N}} \lambda_j
\norm{D^{\frac{1}{2}} K * (\overline{\varphi_j} \, D^{-\frac{1}{2}} h) \, \varphi_j}_{L^2(\mathbb{R}^3)}
\leq C N^{-1} \abs{\gamma}  \norm{\omega}_{\textfrak{S}^{1, \frac{1}{2}}} \norm{h}_{L^2(\mathbb{R}^3)} ,
\end{align}
which leads to
\begin{align}
\norm{i \left[ ( K * \rho - X ) , \widetilde{\omega} \right]}_{\textfrak{S}^{1,\frac{1}{2}}}
&\leq  C N^{-1} \abs{\gamma}  \norm{\omega}_{\textfrak{S}^{1,\frac{1}{2}}} \norm{\widetilde{\omega}}_{\textfrak{S}^{1,\frac{1}{2}}}. 
\end{align}
If $\omega \in \textfrak{S}^{1,\frac{1}{2}} \left( L^2(\mathbb{R}^3) \right)$ is not a positive operator we can split the compact and self adjoint operator $D^{\frac{1}{2}} \omega D^{\frac{1}{2}} = \left( D^{\frac{1}{2}} \omega D^{\frac{1}{2}} \right)_{+} - \left( D^{\frac{1}{2}} \omega D^{\frac{1}{2}} \right)_{-}$ into its positive and negative part. By its spectral decomposition one easily checks the properties 
$\left( D^{\frac{1}{2}} \omega D^{\frac{1}{2}} \right)_{+} \left( D^{\frac{1}{2}} \omega D^{\frac{1}{2}} \right)_{-} = 0$ and $\abs{D^{\frac{1}{2}} \omega D^{\frac{1}{2}}} = \left( D^{\frac{1}{2}} \omega D^{\frac{1}{2}} \right)_{+} + \left( D^{\frac{1}{2}} \omega D^{\frac{1}{2}} \right)_{-}$. The positive operators $\omega_{+} = D^{- \frac{1}{2}} \left( D^{\frac{1}{2}} \omega D^{\frac{1}{2}} \right)_{+} D^{- \frac{1}{2}}$ and
$\omega_{-} = D^{- \frac{1}{2}} \left( D^{\frac{1}{2}} \omega D^{\frac{1}{2}} \right)_{-} D^{- \frac{1}{2}}$ satisfy
$\omega_{+} - \omega_{-} = \omega$
and 
$\norm{D^s \omega D^s}_{\textfrak{S}^1} 
= \norm{D^s \omega_{+} D^s}_{\textfrak{S}^1} 
+ \norm{D^s \omega_{-} D^s}_{\textfrak{S}^1}$ .
By means of the splitting and the triangular inequality it is easily shown that
\begin{align}
\label{eq: local well-posedness Hartree auxilliary estimate for Lipschitz property}
\norm{i \left[ ( K * \rho - X ) , \widetilde{\omega} \right]}_{\textfrak{S}^{1,\frac{1}{2}}}
&\leq  C N^{-1} \abs{\gamma}  \norm{\omega}_{\textfrak{S}^{1,\frac{1}{2}}} \norm{\widetilde{\omega}}_{\textfrak{S}^{1,\frac{1}{2}}}
\end{align}
holds for all $\omega, \widetilde{\omega} \in \textfrak{S}^{1,\frac{1}{2}} \left( L^2(\mathbb{R}^3) \right)$ and that the mapping $\textfrak{S}^{1,\frac{1}{2}} \left( L^2(\mathbb{R}^3) \right) \rightarrow \textfrak{S}^{1,\frac{1}{2}} \left( L^2(\mathbb{R}^3) \right)$, $\omega \mapsto i \left[ ( K * \rho - X ), \omega \right]$ is locally Lipschitz.

\end{proof}

\begin{lemma}
\label{lemma:global well-posedness Hartree equivalence lemma}
Suppose the initial data $\omega_0$ is a finite rank operator in $\textfrak{S}_{+}^{1,\frac{1}{2}} \left( L^2(\mathbb{R}^3) \right)$ such that $0 \leq \omega_0 \leq 1$, i.e. $\omega_0 = \sum_{j=1}^M \lambda_j \psi_{j,0}(x) \overline{\psi_{j,0}(y)}$ where $\{\lambda_j \geq 0, \psi_{j,0} \}_{j=1}^M$ is a spectral set in $L^2(\mathbb{R}^3)$ with $\{ \psi_{j,0} \}_{j=1}^M \subset H^{1/2}(\mathbb{R}^3)$.
If $a = 1$ and $\gamma < 0 $ assume in addition that 
$\abs{\gamma} < \frac{N \varepsilon}{\gamma_{\rm cr} \left( \tr \left( \omega_0 \right) \right)^{2/3}}$ ,
where $\gamma_{\rm cr}$ is the universal constant in Lemma \ref{lemma: Hartree equation  orbitals global existence}. 
 Denote by $\{ \psi_{j}(t) \}_{j=1}^{M} \subset H^{1/2}(\mathbb{R}^3)$ the unique (global) solution of \eqref{eq:Hartree-Fock equation for orbitals} with initial data $\sqrt{\lambda_j} \psi_{j,0}$ given by Lemma \ref{lemma: Hartree equation  orbitals global existence}. Then $\omega_t$ with integral kernel
\begin{align}
\omega_t(x;y) 
&= \sum_{j=1}^M \psi_j(x,t) \overline{\psi_{j}(y,t)}
= \sum_{j=1}^M \lambda_j  \left( \psi_j(x,t)/ \sqrt{\lambda_j} \right) \overline{\left(\psi_{j}(y,t)/ \sqrt{\lambda_j} \right)}
\end{align}
is the unique global solution of \eqref{eq:Hartree equation relativistic integral version} in $\textfrak{S}_{+}^{1,\frac{1}{2}} \left( L^2(\mathbb{R}^3) \right)$ with initial datum $\omega_0$. 
\end{lemma}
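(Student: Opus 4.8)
The plan is to build $\omega_t$ out of the orbital solutions supplied by Lemma~\ref{lemma: Hartree equation  orbitals global existence}, check that it is a genuine $\textfrak{S}_{+}^{1,\frac{1}{2}}$-valued trajectory with the prescribed datum and with $0\le\omega_t\le1$ propagated in time, verify by direct differentiation of the kernel that it solves the integrated Hartree--Fock equation, and finally conclude uniqueness from the local well-posedness in Lemma~\ref{lemma: Hartree equation local existence of solution}.

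First I would apply Lemma~\ref{lemma: Hartree equation  orbitals global existence} with $s=\frac{1}{2}$ to the initial orbitals $\sqrt{\lambda_j}\,\psi_{j,0}\in H^{1/2}(\mathbb{R}^3)$, $j=1,\dots,M$. Since $\{\psi_{j,0}\}_{j=1}^M$ is orthonormal and $0\le\lambda_j\le1$, the compatibility condition $0\le\scp{\sqrt{\lambda_k}\psi_{k,0}}{\sqrt{\lambda_l}\psi_{l,0}}\le\delta_{k,l}$ holds, and in the attractive Coulomb case ($a=1$, $\gamma<0$) the smallness hypothesis on $\abs{\gamma}$ transfers verbatim because $\sum_{k}\norm{\sqrt{\lambda_k}\psi_{k,0}}_{L^2(\mathbb{R}^3)}^2=\tr(\omega_0)$. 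This yields global orbitals $\psi_j\in C^0(\mathbb{R}_{+},H^{1/2})\cap C^1(\mathbb{R}_{+},H^{-1/2})$ with conserved Gram matrix $\scp{\psi_k(t)}{\psi_l(t)}_{L^2(\mathbb{R}^3)}=\lambda_k\,\delta_{k,l}$. Setting $\omega_t=\sum_{j=1}^M\ketbr{\psi_j(t)}$ I would then record: $\omega_t$ is finite-rank and nonnegative; for every $k$ with $\lambda_k>0$ the normalized orbital $\psi_k(t)/\sqrt{\lambda_k}$ is an eigenvector of $\omega_t$ of eigenvalue $\lambda_k\in[0,1]$, hence $0\le\omega_t\le1$ for all $t$; and $\norm{\omega_t}_{\textfrak{S}^{1,\frac{1}{2}}}=\sum_{j}\norm{D^{1/2}\psi_j(t)}_{L^2(\mathbb{R}^3)}^2<\infty$, with $t\mapsto\omega_t$ continuous in $\textfrak{S}^{1,\frac{1}{2}}$ since $t\mapsto D^{1/2}\psi_j(t)$ is continuous in $L^2(\mathbb{R}^3)$ and there are finitely many rank-one summands. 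Thus $\omega_t\in\textfrak{S}_{+}^{1,\frac{1}{2}}(L^2(\mathbb{R}^3))$ with $\omega_t|_{t=0}=\omega_0$.

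Next I would verify the equation. With $\rho_t(x)=N^{-1}\omega_t(x;x)=N^{-1}\sum_l\abs{\psi_l(x,t)}^2$ and $X_t(x;y)=N^{-1}K(x-y)\omega_t(x;y)$, a short computation gives $(X_t\psi_j)(x,t)=N^{-1}\sum_l\big(K*\{\overline{\psi_l(t)}\psi_j(t)\}\big)(x)\,\psi_l(x,t)$, so the nonlinear terms in \eqref{eq:Hartree-Fock equation for orbitals} are precisely $(K*\rho_t)\psi_j$ and $-X_t\psi_j$; hence $i\varepsilon\partial_t\psi_j(t)=h_t\psi_j(t)$ with the self-adjoint operator $h_t=\sqrt{1-\varepsilon^2\Delta}+K*\rho_t-X_t$. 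Differentiating $\omega_t(x;y)=\sum_j\psi_j(x,t)\overline{\psi_j(y,t)}$ and using $\bra{h_t\psi_j(t)}=\bra{\psi_j(t)}h_t$ one obtains $i\varepsilon\partial_t\omega_t=\sum_j\big(\ketbra{h_t\psi_j(t)}{\psi_j(t)}-\ketbra{\psi_j(t)}{h_t\psi_j(t)}\big)=[h_t,\omega_t]$, i.e.\ the differential form of \eqref{eq:Hartree-Fock equation relativistic}; integrating by Duhamel's formula against $e^{-i\sqrt{1-\varepsilon^2\Delta}\,t}$ produces exactly \eqref{eq:Hartree equation relativistic integral version}.

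Finally, for uniqueness I would invoke Lemma~\ref{lemma: Hartree equation local existence of solution}: \eqref{eq:Hartree equation relativistic integral version} has a unique local $\textfrak{S}^{1,\frac{1}{2}}$-solution for each datum, so maximal solutions are unique; since the constructed $\omega_t$ is defined on all of $\mathbb{R}_{+}$ it is the maximal one, and any other $\textfrak{S}_{+}^{1,\frac{1}{2}}$-solution with datum $\omega_0$ coincides with it. The main obstacle I expect is not a single inequality but making the passage between the orbital formulation in $H^{1/2,M}$ and the operator formulation in $\textfrak{S}^{1,\frac{1}{2}}$ airtight: one must carefully propagate the spectral bound $0\le\omega_t\le1$ along the flow --- which rests on the conservation of the Gram matrix in Lemma~\ref{lemma: Hartree equation  orbitals global existence} --- and make sure the local uniqueness in $\textfrak{S}^{1,\frac{1}{2}}$ actually pins down the constructed trajectory, so that the global solution is genuinely unique. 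The identification of the orbital exchange nonlinearity with $-X_t\psi_j$ is the one computation requiring some care, but is otherwise routine.
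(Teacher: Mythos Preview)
Your proposal is correct and follows exactly the route the paper indicates: the paper's proof consists solely of the remark that the statement is proven in the same manner as \cite[Proposition 2.4]{C1976}, and what you have written is precisely a careful execution of that argument (construct $\omega_t$ from the orbital system, use conservation of the Gram matrix to propagate $0\le\omega_t\le1$, identify the nonlinearities, and invoke local uniqueness from Lemma~\ref{lemma: Hartree equation local existence of solution}).
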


\begin{proof}
The statement is proven in the exact same manner as \cite[Proposition 2.4]{C1976}.

\end{proof}

\begin{proof}[Proof of Proposition~\ref{lemma: Hartree equation global existence of solution}]
Note that $\omega_0 \in \textfrak{S}^{1,\frac{1}{2}}_{+} \left( L^2(\mathbb{R}^3) \right)$ can be written as $\omega_0 = \sum_{j=1}^{\infty} \lambda_j \ket{\psi_{j,0}} \bra{\psi_{j,0}}$
where $\{ \lambda_{j} \geq 0, \psi_{j,0} \}_{j \in \mathbb{N}}$ is a spectral set and $$\norm{\omega_0}_{\textfrak{S}^{1,\frac{1}{2}}} = \sum_{j=1}^{\infty} \lambda_j  \norm{D^{1/2} \psi_{j,0}}_{L^2(\mathbb{R}^3)}^2 
= \sum_{j=1}^{\infty} \lambda_j \norm{\psi_{j,0}}_{H^{1/2}(\mathbb{R}^3)}^2  < + \infty.$$ Then
$\left\{ \omega^{\leq M}_0 = \sum_{j=1}^{M} \lambda_j \ket{\psi_{j,0}} \bra{\psi_{j,0}} \right\}_{M=1}^{\infty}$ is a sequence of finite rank operators approximating $\omega_0$ in $\textfrak{S}^{1, \frac{1}{2}} \left( L^2(\mathbb{R}^3) \right)$. 
Let $\{ \psi_{j}(t) \}_{j=1}^M \subset H^{1/2}(\mathbb{R}^3)$ be the unique (global) solution of \eqref{eq:Hartree-Fock equation for orbitals} with initial data $\{ \sqrt{\lambda_j} \psi_{j,0} \}_{j=1}^M \subset H^{1/2}(\mathbb{R}^3)$.
According to Lemma \ref{lemma:global well-posedness Hartree equivalence lemma}
\begin{align}
\omega^{\leq M}_t
&= \sum_{j=1}^M \ket{\psi_j(t)} \bra{\psi_j(t)}
= \sum_{j=1}^M \lambda_j \ket{\big(\psi_j(t) / \sqrt{\lambda_j} \big)} \bra{\big( \psi_j(t) / \sqrt{\lambda_j} \big)}
\end{align}
is the unique global solution of \eqref{eq:Hartree equation relativistic integral version} with data $\omega^{\leq M}_0$ at $t = 0$. In the following we show that $\lim_{M \rightarrow \infty} \omega^{\leq M}_t$ converges in $\textfrak{S}^{1,\frac{1}{2}} \left( L^2(\mathbb{R}^3) \right)$ and that the limiting operator is a solution of \eqref{eq:Hartree equation relativistic integral version}.
Let $t \in (0, \infty)$ and $L,M \in \mathbb{N}$ such that $L \geq M$. To this end note that 
\begin{align}
\norm{\omega^{\leq L}_t - \omega^{\leq M}_t}_{\textfrak{S}^{1,\frac{1}{2}}}
&\leq 
\varepsilon^{-1} 
\tr \left( \sqrt{1 - \varepsilon^2 \Delta} \left(\omega^{\leq L}_t - \omega^{\leq M}_t \right) \right) .
\end{align}
Using once more the conservation of the energy \eqref{eq:definition energy finite rank Hartree system} we write the right-hand side as
\begin{align}
\tr \left( 
\sqrt{1 - \varepsilon^2 \Delta} \left( \omega^{\leq L}_t - \omega^{\leq M}_t \right)
\right)
&= \tr \left( 
\sqrt{1 - \varepsilon^2 \Delta} \left( \omega^{\leq L}_0 - \omega^{\leq M}_0 \right)
\right)
\nonumber \\
&\quad + \frac{1}{2}
\left[ 
\tr \left( K * \rho_0^{\leq L} \, \omega^{\leq L}_0 \right)
-
\tr \left( K * \rho_0^{\leq M} \, \omega^{\leq M}_0 \right) 
\right]
\nonumber \\
&\quad - \frac{1}{2}
\left[ 
\tr \left( X_0^{\leq L} \, \omega^{\leq L}_0 \right)
-
\tr \left( X_0^{\leq M} \, \omega^{\leq M}_0 \right) 
\right]
\nonumber \\
&\quad - \frac{1}{2}
\left[ 
\tr \left( K * \rho_t^{\leq L} \, \omega^{\leq L}_t \right)
-
\tr \left( K * \rho_t^{\leq M} \, \omega^{\leq M}_t \right) 
\right] 
\nonumber \\
&\quad + \frac{1}{2}
\left[ 
\tr \left( X_t^{\leq L} \, \omega^{\leq L}_t \right)
-
\tr \left( X_t^{\leq M} \, \omega^{\leq M}_t \right) 
\right]  .
\end{align}
In the following, we drop the $t$-dependence to simplify the notation. By   \eqref{eq:Hatree local well-posedness auxiliary estimates 3} and the fact that $\omega^{\leq M} \leq \omega^{\leq L}$ we get
\begin{align}
& \Big| \tr \left( K * \rho^{\leq L} \, \omega^{\leq L} \right)
-
\tr \left( K * \rho^{\leq M} \, \omega^{\leq M} \right) \Big|
\nonumber \\
&\quad \leq N  \int_{\mathbb{R}^{3}} dx \, K * \rho^{\leq L}(x) \left( \rho^{\leq L}(x) - \rho^{\leq M}(x) \right) 
+ N   \int_{\mathbb{R}^{3}}  K *\rho^{\leq M}(x) \left( \rho^{\leq L}(x) - \rho^{\leq M}(x) \right)dx 
\nonumber \\
&\quad \leq C N^{-1} \varepsilon^{-a} \left( \tr \left( \sqrt{ - \varepsilon^2 \Delta} \left( \omega^{\leq L} - \omega^{\leq M} \right) \right) \right)^{\frac{a}{2}}
\left( \tr \left( \omega^{\leq L} - \omega^{\leq M} \right) \right)^{\frac{2 - a}{2}}
\nonumber \\
&\qquad \qquad \times
\left[ 
\left( \tr \left( \sqrt{- \varepsilon^2 \Delta} \omega^{\leq L} \right) \right)^{\frac{a}{2}} \left( \tr \left( \omega^{\leq L} \right) \right)^{\frac{2 - a}{2}}
+
\left( \tr \left( \sqrt{- \varepsilon^2 \Delta} \omega^{\leq M} \right) \right)^{\frac{a}{2}} \left( \tr \left( \omega^{\leq M} \right) \right)^{\frac{2 - a}{2}}
\right]
\nonumber \\
&\quad \leq   \left( \tr \left( \sqrt{ 1 - \varepsilon^2 \Delta} \left( \omega^{\leq L} - \omega^{\leq M} \right) \right) \right)^{a}
\nonumber \\
&\qquad +
C  N^{-2} \varepsilon^{-2a}  \left( \tr \left( \omega^{\leq L} - \omega^{\leq M} \right) \right)^{2 - a}
\left( \tr \left( \sqrt{- \varepsilon^2 \Delta} \omega^{\leq L} \right) \right)^{a} \left( \tr \left( \omega^{\leq L} \right) \right)^{2 - a} .
\end{align}
Since
\begin{align}
&\abs{\tr \left( X^{\leq L} \omega^{\leq L} \right) - \tr \left( X^{\leq M} \omega^{\leq M} \right) }
\nonumber \\
&\quad \leq 
\abs{\tr \left( \left( X^{\leq L} - X^{\leq M} \right) \omega^{\leq L} \right)} 
+ \abs{\tr \left( X^{\leq M} \left( \omega^{\leq L} -  \omega^{\leq M} \right) \right) }
\nonumber \\
&\quad \leq N  \int_{\mathbb{R}^{3}}  K * \rho^{\leq L}(x) \left( \rho^{\leq L}(x) - \rho^{\leq M}(x) \right) dx
+ N   \int_{\mathbb{R}^{3}}  K *\rho^{\leq M}(x) \left( \rho^{\leq L}(x) - \rho^{\leq M}(x) \right)dx 
\end{align}
holds because of \eqref{eq:Hatree local well-posedness auxiliary estimates 5}
we can estimate the exchange term by the same means and obtain
\begin{align}
&\tr \left( 
\sqrt{1 - \varepsilon^2 \Delta} \left( \omega^{\leq L}_t - \omega^{\leq M}_t \right)
\right)
\nonumber \\
&\quad \leq C \tr \left( 
\sqrt{1 - \varepsilon^2 \Delta} \left( \omega^{\leq L}_0 - \omega^{\leq M}_0 \right)
\right)
\\
&\qquad + 
C N^{-2}  \varepsilon^{-2a} {\sup_{\tau \in \{0,t \}}}  \left( \tr \left( \omega^{\leq L}_\tau - \omega^{\leq M}_\tau \right) \right)^{2 - a}
\left( \tr \left( \sqrt{- \varepsilon^2 \Delta} \omega^{\leq L}_\tau \right) \right)^{a} \left( \tr \left( \omega^{\leq L}_\tau \right) \right)^{2 - a}.
\end{align}
The conservation $\tr \left( \omega^{\leq L}_t \right) = \tr \left( \omega^{\leq L}_0 \right)$, and $\norm{\omega_0}_{\textfrak{S}^{1,\frac{1}{2}}} < \infty$ imply
\begin{align}
\tr \left( 
\sqrt{1 - \varepsilon^2 \Delta} \left( \omega^{\leq L}_0 - \omega^{\leq M}_0 \right)
\right) 
&\leq \varepsilon^{-1} \sum_{j=M}^L \lambda_j \norm{\psi_{j,0}}_{H^{1/2}(\mathbb{R}^3)}^2 \rightarrow 0 ,
\nonumber \\
\tr \left( \omega^{\leq L}_t - \omega^{\leq M}_t \right)
&= \tr \left( \omega^{\leq L}_0 - \omega^{\leq M}_0 \right)
\rightarrow 0 
\end{align}
as $M, L \rightarrow \infty$. Since (see sketch of proof of Lemma \ref{lemma: Hartree equation  orbitals global existence})
\begin{align}
\tr \left( \sqrt{- \varepsilon^2 \Delta} \omega^{\leq L}_t \right)
&\leq \varepsilon \norm{\{ \psi_j(t) \}_{j=1}^L}_{H^{1/2,L}}^2 
\nonumber \\
&\leq C(N , \varepsilon)  \norm{\{ \psi_{j}(0) \}_{j=1}^L}_{H^{1/2,L}}^2 
\nonumber \\
&= C(N , \varepsilon) \sum_{j=1}^L \lambda_j \norm{ \psi_{j,0} }_{H^{1/2} (\mathbb{R}^3)}^2
\nonumber \\
&\leq C(N , \varepsilon) \norm{\omega_0}_{\textfrak{S}^{1, \frac{1}{2}}} < + \infty
\end{align}
we obtain
\begin{align}
\norm{\omega^{\leq L}_t - \omega^{\leq M}_t}_{\textfrak{S}^{1,\frac{1}{2}}}
&\leq 
\varepsilon^{-1} 
\tr \left( \sqrt{1 - \varepsilon^2 \Delta} \, \left( \omega^{\leq L}_t - \omega^{\leq M}_t \right) \right)
\rightarrow 0 
\quad \text{as} \; M, L \rightarrow \infty .
\end{align}
We consequently have that $\{ \omega^{\leq M} \}_{M \in \mathbb{N}}$ is a Cauchy sequence in $\textfrak{S}^{1, \frac{1}{2}}$ uniformly in $t \in (0, \infty)$. Hence, it converges to an operator $\omega_t$. The operator is continuous in $t$ because of the uniform limit theorem and the fact that
 $\omega^{\leq M}_t$ is continuous in $t$ (see Lemma \ref{lemma: Hartree equation  orbitals global existence}).
From the fact that $\omega^{\leq M}_t$ is a solution of \eqref{eq:Hartree equation relativistic integral version} and
\begin{align}
& \lim_{M \rightarrow \infty} \norm{\omega^{\leq M}_t - \omega_t}_{\textfrak{S}^{1, \frac{1}{2}}} = 0 ,
\nonumber \\
&\lim_{M \rightarrow \infty} \norm{e^{- i \sqrt{1 - \varepsilon^2 \Delta} t} \left( \omega^{\leq M}_0 - \omega_0 \right) e^{i \sqrt{1 - \varepsilon^2 \Delta} t} }_{\textfrak{S}^{1, \frac{1}{2}}} = \norm{\omega^{\leq M}_0 - \omega_0}_{\textfrak{S}^{1, \frac{1}{2}}} = 0 ,
\nonumber \\
& \lim_{M \rightarrow \infty}
\norm{\left[ K * \rho_s^{\leq M} , \omega^{\leq M}_s \right] - \left[ K * \rho_s , \omega_s \right]}_{\textfrak{S}^{1, \frac{1}{2}}}
\leq  C \lim_{M \rightarrow \infty}
\left( \norm{\omega^{\leq M}_s}_{\textfrak{S}^{1, \frac{1}{2}}} + \norm{\omega_s}_{\textfrak{S}^{1,\frac{1}{2}}} \right) \norm{\omega^{\leq M}_s - \omega_s}_{\textfrak{S}^{1, \frac{1}{2}}}
\end{align}
for all $t, s \in (0, \infty)$
it directly follows that $\omega_t$ satisfies \eqref{eq:Hartree equation relativistic integral version}.
The uniqueness of the global solution follows from the uniqueness of the local solution (recall Lemma \ref{lemma: Hartree equation local existence of solution}). Since $\textfrak{S}^{1,\frac{1}{2}}_{+}\left( L^2 (\mathbb{R}^3) \right)$ is a closed subspace of $\textfrak{S}^{1,\frac{1}{2}} \left( L^2 (\mathbb{R}^3) \right)$ and $\{\omega^{\leq M}_t \}_{M \in \mathbb{N}}$ is a positive sequence for all $t \in (0, \infty)$ by construction we have that $\omega_t$ is positive for all $t \in (0, \infty)$.
\end{proof}

\section{Rigorous Duhamel expansion}\label{appendix:duhamel}

In this section, we give the details on how one obtains the Duhamel expansion from Section \ref{subsection:Derivation of the relativistic Vlasov equation}.

\begin{proof}[Derivation of \eqref{eq:derivation Vlasov from Hartree Duhamel 1}--\eqref{eq:derivation Vlasov from Hartree Duhamel 3} and \eqref{eq:Duhamel expansion for general Schatten spaces}]

Let $\Lambda > 1$ and $K_{\Lambda}: \mathbb{R}^d \rightarrow \mathbb{R}$ be a potential (whose explicit form will be chosen later) such that $t \mapsto \norm{K_\Lambda * \rho_t}_{\textfrak{S}^\infty}$ is a strongly continuous map of $\mathbb{R}$ into the bounded self-adjoint operators.
In analogy to \cite[Chapter 3]{DRS2017} we define the two parameter group $U_{\Lambda}(t;s)$ satisfying
\begin{align}
i \varepsilon \partial_t U_{\Lambda}(t;s)
&= \left( \sqrt{1 - \varepsilon^2 \Delta} + K_\Lambda* \rho_t \right) U_{\Lambda}(t;s)
\quad \text{and}
\; \; U_{\Lambda}(s;s) = 1
\end{align}
by means of the interaction picture. Using Duhamel's formula we obtain
\begin{align}
U^*_{\Lambda}(t;0) \left( \omega_{N,t} - \widetilde{\omega}_{N,t} \right) U_{\Lambda}(t;0)
&= \omega_{N,0} - \widetilde{\omega}_{N,0}
\nonumber \\
&\quad 
- \frac{i}{\varepsilon} \int_0^t 
U_{\Lambda}^*(s;0) 
\left( 
\left[ \sqrt{1 - \varepsilon^2 \Delta} , \widetilde{\omega}_{N,s} \right] - A_{N,s} \right) U_{\Lambda}(s;0)\,ds
\nonumber \\
&\quad 
- \frac{i}{\varepsilon} \int_0^t
U_{\Lambda}^*(s;0) 
\left( 
\left[ K * \rho_s , \widetilde{\omega}_{N,s} \right] - B_{N,s}
\right) U_{\Lambda}(s;0)\,ds
\nonumber \\
&\quad 
- \frac{i}{\varepsilon} \int_0^t 
U_{\Lambda}^*(s;0) 
\left[ \left( K - K_{\Lambda} \right) * \rho_s , \left( \omega_{N,s} - \widetilde{\omega}_{N,s} \right)  \right] U_{\Lambda}(s;0)\,ds .
\end{align}
Hence,
\begin{align}
\label{eq:Duhamel expansion for general Schatten spaces appendix} 
\norm{\omega_{N,t} - \widetilde{\omega}_{N,t}}_{\textfrak{S}^p}
&\leq   \norm{\omega_{N} - \widetilde{\omega}_{N}}_{\textfrak{S}^p}
\nonumber \\
&\quad + 
\frac{1}{\varepsilon} \int_0^t 
\norm{\left[ \sqrt{1 - \varepsilon^2 \Delta} , \widetilde{\omega}_{N,s} \right] - A_{N,s} }_{\textfrak{S}^p} \,ds
\nonumber \\
&\quad 
+ \frac{1}{\varepsilon} \int_0^t
\norm{\left[ K * \rho_s , \widetilde{\omega}_{N,s} \right] - B_{N,s}}_{\textfrak{S}^p}
\,ds
\nonumber \\
&\quad
+ \frac{1}{\varepsilon}
\int_0^t 
\norm{\left[ \left( K - K_{\Lambda} \right) * \rho_s , \omega_{N,s} - \widetilde{\omega}_{N,s}   \right]}_{\textfrak{S}^p}
\,ds .
\end{align}
In the following we show for suitable chosen $K_{\Lambda}$ that the last term on the right-hand side converges to zero if we take  the limit $\Lambda \rightarrow \infty$. This shows the claim.\\ We study separately the following two  cases:

\paragraph{Case $d = 3$ and $a \in (0,1]$:}
Note that $\omega_t \in \textfrak{S}_{+}^{1,\frac{1}{2}} \left( L^2(\mathbb{R}^3) \right)$ holds locally in time, respectively globally if $\abs{\gamma}$ is small enough, because of the initial conditions of Theorem \ref{theorem:derivation relativistic Vlasov from Hartree}
and Proposition \ref{lemma: Hartree equation global existence of solution}. Since
\begin{align}
\label{eq:L-infty estimate for the potential}
\norm{K * \rho_s}_{L^{\infty}(\mathbb{R}^3)}
&\leq   C \abs{\gamma} N^{-1} \norm{\omega_s}_{\textfrak{S}^{1,\frac{1}{2}}}
< + \infty 
\end{align}
it is possible to choose $K_{\Lambda} = K$, implying that the last summand on  the right hand side of \eqref{eq:Duhamel expansion for general Schatten spaces appendix}  equals zero.
In order to show \eqref{eq:L-infty estimate for the potential} consider the spectral set $\{\lambda_j, \varphi_{j} \}_{j \in \mathbb{N}}$ with $\lambda_j \geq 0$ of an operator $\omega \in \textfrak{S}_{+}^{1,\frac{1}{2}} \left( L^2(\mathbb{R}^3) \right)$ and recall \eqref{eq:Sobolev norm of omega in terms of spectral set}. Together with \eqref{eq:L-infty estimate on convoluted potential} we obtain
\begin{align}
\norm{K * \rho}_{L^{\infty}(\mathbb{R}^3)}
&\leq  N^{-1} \sum_{j \in \mathbb{N}} \lambda_j \norm{K * \abs{\varphi_j}^2}_{L^{\infty}(\mathbb{R}^3)}
\leq C \abs{\gamma} N^{-1} \sum_{j \in \mathbb{N}} \lambda_j \norm{\varphi_j}_{H^{1/2}(\mathbb{R}^3)}^2 
= C \abs{\gamma} N^{-1} \norm{\omega}_{\textfrak{S}^{1,\frac{1}{2}}}.
\end{align}

\paragraph{Case $d \in \{ 2, 3 \}$ and $a \in \left( \max \left\{\frac{d}{2} - 2, -1 \right\} , 0 \right]$:}
Let $K_{\Lambda}(x) = K(x) \id_{\abs{x} \leq \Lambda}$. Then
\begin{align}
\norm{\left(K - K_{\Lambda} \right) * \rho_s \; \left(1 + \abs{x} \right)^{-1}}_{L^{\infty}(\mathbb{R}^d)}
&\leq \Lambda^{\abs{a} - 1}
\sup_{x \in \mathbb{R}^d}
\left\{
\left(1 + \abs{x} \right)^{-1} \int_{\abs{x-y} > \Lambda}  \abs{x-y} \rho_s(y)\,dy
\right\}
\nonumber \\
&\leq \Lambda^{\abs{a} - 1}
\left( \norm{\rho_s}_{L^1(\mathbb{R}^d)}
+ N^{-1} \tr \left( \abs{x}  \omega_{N,s} \right) \right)
\nonumber \\
&\leq \Lambda^{\abs{a} - 1}
\left( 1 + N^{-1/2} \sqrt{\tr \left( x^2 \, \omega_{N,s} \right) } \right).
\end{align}
To obtain the ultimate inequality we have used the estimate
$
\abs{\tr \left( \abs{x} \omega_{N,s} \right)}
\leq \norm{\abs{x} \sqrt{\omega_{N,s}}}_{\textfrak{S}^2} \norm{\sqrt{\omega_{N,s}}}_{\textfrak{S}^2}
$.
Together with the embedding $\norm{\omega}_{\textfrak{S}^p} = 2^{\frac{1}{p}}\norm{\omega}_{\textfrak{S}^1}$ for $1 \leq p < \infty$ and
$
\norm{\left( 1 + \abs{x} \right) \left( 1 + x^2 \right)^{- 3/2} \left( 1 - \varepsilon^2 \Delta \right)^{-1} }_{\textfrak{S}^2} \leq C \sqrt{N}
$
this leads to
\begin{align}
\label{eq: rigorous duhamel expansion intermediate estimate}
&\norm{\left[ \left( K - K_{\Lambda} \right) * \rho_s , \omega_{N,s} - \widetilde{\omega}_{N,s}   \right]}_{\textfrak{S}^p}
\nonumber \\
&\quad \leq C 
\norm{\left(K - K_{\Lambda} \right) * \rho_s \; \left(1 + \abs{x} \right)^{-1}}_{L^{\infty}(\mathbb{R}^d)}
\left( 
\norm{\left(1 + \abs{x} \right) \, \omega_{N,s}}_{\textfrak{S}^1}
+ \norm{\left(1 + \abs{x} \right) \, \widetilde{\omega}_{N,s}}_{\textfrak{S}^1}
\right)
\nonumber \\
&\quad \leq C \,\Lambda^{\abs{a} - 1}
\left(  N^{1/2} + \sqrt{\tr \left( x^2 \, \omega_{N,s} \right) } \right) 
\left( 
N 
+ \sqrt{\tr \left( x^2 \, \omega_{N,s} \right) }
+ \norm{ \left( 1 - \varepsilon^2 \Delta \right) \left( 1 + x^2 \right)^{3/2} \widetilde{\omega}_{N,s}}_{\textfrak{S}^2}
\right) .
\end{align}
Using
$
\left[ \sqrt{1 - \varepsilon^2 \Delta} , x \right]
= - \frac{ \varepsilon^2 \nabla}{\sqrt{1 - \varepsilon^2 \Delta}}
$, $\norm{\frac{i \varepsilon \nabla}{\sqrt{1 - \varepsilon^2 \Delta}}}_{\textfrak{S}^{\infty}} \leq 1$, $\tr \left( \omega_{N,t} \right) = N$ and the cyclicity of the trace we get
\begin{align}
i \frac{d}{dt} \tr \left( x^2 \omega_{N,t} \right)
&=  \tr \left( \left[  x^2 , \sqrt{1 - \varepsilon^2 \Delta}  \right] \omega_{N,t}  \right)
\nonumber \\
&= -  \tr \left( \left\{ \frac{i \varepsilon^2 \nabla}{\sqrt{1 - \varepsilon^2 \Delta}} , x \right\} \omega_{N,t}  \right)
\nonumber \\
&\leq 2 \varepsilon \norm{x \sqrt{\omega_{N,t}}}_{\textfrak{S}^2}
\norm{\sqrt{\omega_{N,t}} \frac{i \varepsilon \nabla}{\sqrt{1 - \varepsilon^2 \Delta}}}_{\textfrak{S}^2}
\nonumber \\
&\leq 2 N^{1/2} \sqrt{\tr \left( x^2 \omega_{N,t} \right)}.
\end{align}
For the initial data of Theorem \ref{theorem:derivation relativistic Vlasov from Hartree} it consequently holds that $\abs{\tr \left( x^2 \omega_{N,t} \right)} < + \infty$ for all $t \in \mathbb{R}$. 
By similar estimates as in \cite[Chapter 3]{BPSS2016} one derives 
\begin{align}
\norm{ \left( 1 - \varepsilon^2 \Delta \right) \left( 1 + x^2 \right)^{3/2} \widetilde{\omega}_{N,s}}_{\textfrak{S}^2}
&\leq C N^{1/2} \norm{\widetilde{W}_{N,s}}_{H_5^5(\mathbb{R}^{2d})} .
\end{align}
Since the right-hand side is finite for $\widetilde{\omega}_{N,s}$ from 
Theorem \ref{theorem:derivation relativistic Vlasov from Hartree} we obtain
\begin{align}
\lim_{\Lambda \rightarrow \infty} 
\norm{\left[ \left( K - K_{\Lambda} \right) * \rho_s , \omega_{N,s} - \widetilde{\omega}_{N,s}   \right]}_{\textfrak{S}^p}
= 0
\end{align}
by means of inequality \eqref{eq: rigorous duhamel expansion intermediate estimate}.

\end{proof}

\noindent{\it Acknowledgments.} The authors thank Daniele Dimonte for many constructive discussions on the results contained in this paper and related topics, and Enno Lenzmann for pointing out reference \cite{HLLS2010}.\\ 
N.L. and C.S. gratefully acknowledge support from the Swiss National Science Foundation through the SNSF Eccellenza project PCEFP2 181153 and the NCCR SwissMAP. N.L., in addition,  gratefully acknowledges funding from the European Union's Horizon 2020 research and innovation programme through the Marie Sk{\l}odowska-Curie Action EFFECT (grant agreement No. 101024712).


{}

\end{document}